\documentclass[11pt]{article}
\usepackage{fullpage, amsmath,amssymb, amsthm, mathrsfs,verbatim}
\usepackage{amssymb,multicol}
\usepackage{xspace}
\usepackage{xcolor}
\usepackage{subfig}
\usepackage{hyperref}
\usepackage{tabularx}
\usepackage{url}
\usepackage{algorithm}
\usepackage[noend]{algpseudocode}
\usepackage{enumerate}% http://ctan.org/pkg/enumerate

\definecolor{verde}{rgb}{0.,0.6,0.2}
\definecolor{bianco}{rgb}{1.,1.,1.}
\definecolor{marrone}{rgb}{0.7,0.2,0.1}
\definecolor{rosso}{rgb}{1,0,0}
\definecolor{giallo}{rgb}{1.0, 0.87, 0.0}
\definecolor{blu}{rgb}{0.03, 0.27, 0.49}
\definecolor{daffodil}{rgb}{0.03, 0.27, 0.49}
\definecolor{darkcerulean}{rgb}{1.0, 0.87, 0.0}
%\definecolor{goldenyellow}{rgb}{1.0, 0.87, 0.0}
%\newcommand{\greenblue}[1]{{\color{green!60!blue!99}{#1}}}

%\newcommand{\blue}[1]{{\color{blue!99}{#1}}}
\newcommand{\blue}[1]{{\color{daffodil}{#1}}}

\newcommand{\white}[1]{{\color{bianco}{#1}}}

\newcommand{\red}[1]{{\color{rosso}{#1}}}

%%%%%%%%%% COLORS - End %%%%%%%%%%%%%%%

\newcommand{\calP}{\ensuremath{\mathbb{P}}\xspace}

\newcommand{\calT}{\ensuremath{\mathcal{T}}\xspace}
\newcommand{\calH}{\ensuremath{\mathcal{H}}\xspace}
\newcommand{\pp}{\ensuremath{\mathbb {P}}\xspace}

\DeclareMathOperator{\CSP}{\mathsf{CSP}}
\DeclareMathOperator{\CAP}{\mathsf{CAP}}
\DeclareMathOperator{\DLP}{\mathsf{DLP}}

\usepackage{hyperref,graphicx}

\usepackage{tikz}
\usepackage{qtree}
\usepackage{multirow}

\usetikzlibrary{matrix}
\usetikzlibrary{positioning}
%\usetikzlibrary{intersections}
\usetikzlibrary{shapes.geometric}
\usetikzlibrary{shapes,decorations,shadows}  
\usetikzlibrary{decorations.pathmorphing}  
\usetikzlibrary{decorations.shapes}  
\usetikzlibrary{fadings}  
\usetikzlibrary{patterns}  
\usetikzlibrary{calc}  
\usetikzlibrary{automata}
\usetikzlibrary{chains,fit,shapes}
\usetikzlibrary{backgrounds}

\newtheorem{theorem}{Theorem}[section]
\newtheorem{definition}[theorem]{Definition}
\newtheorem{assumption}[theorem]{Assumption}
\newtheorem{claim}[theorem]{Claim}
\newtheorem{lemma}[theorem]{Lemma}
\newtheorem{corollary}[theorem]{Corollary}

\usepackage{titlesec}

%\titlespacing\section{0pt}{5pt plus 3pt minus 2pt}{0pt plus 2pt minus 2pt}
%\titlespacing\subsection{0pt}{5pt plus 3pt minus 2pt}{0pt plus 2pt minus 2pt}
%\titlespacing\subsubsection{0pt}{5pt plus 3pt minus 2pt}{0pt plus 2pt minus 2pt}

\usepackage{graphicx}
\usepackage{mwe}
\usepackage{hyperref}
\hypersetup{pdftoolbar=true,pdfpagemode=UseNone,pdfstartview=FitH,
  colorlinks=true,extension=}

\newcounter{multifig}

% uased to synchronize \figtarget to \figcaption

% location of hyperlink target (place \label after)

% corresponding caption
\newcommand{\figcaption}[2]% #1 = text
{\stepcounter{multifig}
\addcontentsline{lof}{figure}{\string\numberline {\arabic{multifig}}{\ignorespaces #1}}
Fig. \arabic{multifig}: #1 #2}

\title{Tight bounds for maximal  identifiability of failure nodes in Boolean network tomography\thanks{A preliminary version of this paper appeared in \cite{DBLP:conf/icdcs/GalesiR18}. }}

\author{Nicola~Galesi \\ Department of Computer Science \\ Sapienza Universit\`a di Roma, Italy \\ {\tt nicola.galesi@uniroma1.it} \and 
Fariba~Ranjbar \\ Department of Computer Science \\ Sapienza Universit\`a di Roma, Italy \\
{\tt fariba.ranjbar@uniroma1.it}}

% Note that keywords are not normally used for peerreview papers.
%\begin{IEEEkeywords}
%Boolean Network Tomography, Node Failure Localization, Maximal Identifiability, Hypergrids, Embeddings, Dimension.
%\end{IEEEkeywords}}

\begin{document}

\maketitle

\begin{abstract}
We study maximal identifiability, a measure recently introduced in  Boolean Network Tomography to characterize networks' capability to localize failure nodes in end-to-end path measurements. We prove  tight upper and lower bounds on the maximal identifiability of failure nodes for specific classes of network topologies, such as trees and  $d$-dimensional grids, in both directed and undirected cases. We prove that directed $d$-dimensional grids with support $n$ have maximal identifiability $d$ using $2d(n-1)+2$ monitors;  and in the undirected case  we show that $2d$ monitors suffice to get identifiability of $d-1$. We then study identifiability under embeddings: we establish relations between maximal identifiability, embeddability and graph dimension when network topologies are modeled as DAGs.  Our results suggest the design of networks over $N$ nodes with maximal identifiability  $\Omega(\log N)$ using  $O(\log N)$ monitors and a heuristic to boost maximal identifiability on a given network by simulating $d$-dimensional grids. We  provide positive evidence of this heuristic  through data extracted by exact computation of maximal identifiability on examples of small real networks. 
\end{abstract}

\section{Introduction} Monitoring a network to localize corrupted components is essential to guarantee a correct behaviour and the 
reliability of a network. In many real networks direct access and direct monitoring of the individual components are not possible 
(for instance because of limited access to the network) or unfeasible in terms of available resources (protocols, communications, 
response-time etc.).   A well-studied approach to localization of failing components  is {\em network tomography}. Network tomography focuses 
on detecting the state of single components in the network by  running  a {\em measurement} process along the network.
The process starts by sending packets (containing suitable data to capture interesting failures)
 from specific {\em source-monitor} nodes   and terminates receiving another data packet 
on other specific {\em target-monitor} nodes.

Measurement is done along a set of {\em end-to-end} 
paths, each one starting and ending with a monitor node. 
In this work we  focus on the problem of detecting {\em node states} (failing/working), using  
a {\em Boolean network tomography} approach \cite{DBLP:conf/imc/Duffield03,DBLP:journals/tit/Duffield06} where the received data at each monitor is one bit  (failure (1) /working (0)), 
capturing the presence or the absence of a failure along a path. 
We are interested in identifying (uniquely) failure nodes. Receiving a $0$ (working state) at an end monitor of a 
path means that each node in the path  is working properly.  Then  the localization of failing nodes in a set of paths \calP (or a network viewed as a set of paths) is captured by the solutions to the following Boolean system: 

\begin{eqnarray}
\label{eqn:1}
\bigwedge_{p \in \calP} \left( \bigvee_{v\in p} x_v\equiv b_p \right )
\end{eqnarray}

where $\vec b$ is a vector of Boolean values (corresponding to final measurement in the paths)  
and $x_v$'s are Boolean variables, one  for each node $v$. 
Any solution to this system is a possible location of node-failures  satisfying the measurements.   

\subsection{The problem and related work}
A set of (non monitor) nodes failing simultaneously is a failure set. Each solution to Equation \ref{eqn:1} 
 captures a failure set that can occur in the network according  to the measurements. But as readily seen
 solutions to Eq. \ref{eqn:1}  are often multiple. In \cite{DBLP:journals/ton/HeGMLST17,DBLP:conf/imc/MaHSTLL14,DBLP:journals/pe/MaHSTL15}, the authors proposed a parameter, later refined in 
 \cite{DBLP:journals/ton/MaHSTL17} measuring the ability of a network of capturing the maximum number of simultaneous failure nodes which are  uniquely identifiable. This measure is called {\em maximal identifiability} (Definition \ref{def:kid}).  
 Maximal identifiability for detection of failed nodes  in the Boolean case was recently studied under several aspects,
 including network topologies, routing protocols and probing mechanism.
 Ma et. al. in \cite{DBLP:journals/ton/MaHSTL17,DBLP:journals/pe/MaHSTL15}  
 investigated network topology questions such as under what conditions one can uniquely localize failed nodes  
from path measurements available in the entire network or  what is the maximum number of simultaneous
node failures that can be uniquely localized.  Answers to such questions depend on network
topology, placement of monitors, and the implemented routing mechanism, as \cite{DBLP:journals/ton/MaHSTL17,DBLP:journals/pe/MaHSTL15} showed.

%\red{Better here .. Ineed to cite more parlare dei monitor}
%The system in Equation \ref{eqn:1}  can  be represented by a Boolean matrix indicating whether a node
%is present or absent in a given measurement path.
%In \cite{BHK17} the authors studied the combinatorics of this Boolean matrix  deriving numerical bounds, which are efficiently computable, to the number of single nodes whose states can be uniquely  determined. 
This works %\cite{DBLP:journals/pe/MaHSTL15, DBLP:journals/ton/MaHSTL17,DBLP:journals/ton/HeGMLST17}
%\cite{DBLP:conf/icdcs/HeBKKMP16, DBLP:journals/pe/MaHSTL15, DBLP:journals/ton/MaHSTL17,DBLP:journals/ton/HeGMLST17} 
are focused on  improving monitoring scheme and heuristics for network design
with the aim of maximizing the number of identifiable nodes in a general network setting.

An important aspects of end-to-end measurements paths is how data are routed through the paths. Practical routing concerns with the probing mechanism: routing protocols and  probing schemes can play a fundamental role in analysing maximal identifiability, since they can restrict the set of paths under consideration. %To keep the analysis of maximal identifiability close to real networks applications, 
In the works  \cite{DBLP:journals/ton/MaHSTL17,DBLP:journals/pe/MaHSTL15} they considered  the 
following classes of probing mechanisms: (1)
\emph{Controllable Arbitrary-path Probing} (CAP), which includes any path/cycle, allowing repeated nodes/links, provided
each path/cycle starts and ends at (the same or different) monitors; (2) \emph{Controllable Simple-path Probing (CSP)} which includes any simple (i.e., cycle-free) path between different monitors; (3) {\em Uncontrollable Probing} (UP): the set of paths between monitors is determined by the routing protocol used by the network, not controllable by the monitors.  Such routing mechanisms find practical implementations as showed in \cite{DBLP:journals/ton/Hu0W0L0ZG16}. In this work  we also
focus on such routing mechanisms. 

With the aim of optimizing the maximal identifiability of a given network  many recent works on node identifiability \cite{DBLP:journals/ton/MaHSTL17,DBLP:journals/pe/MaHSTL15,DBLP:journals/corr/abs-1903-10636}  focus on heuristics/strategies to properly increase the number monitors and to decide where to place them on the internal nodes of the network. However structural limitations due to the network topology might affect the feasibility of such approaches. For instance, as we notice in this work, the minimal degree of the graph modeling the network is a structural limit on the maximal node failure identifiability one can hope for independently of the monitors. 
Despite of the  evident practical implications of failure detection in network reliability and of the recent theoretical and experimental  studies on  maximal identifiability, there is still lack of complete understanding of what maximizing failure node identifiability 
requires in terms of network properties as the topology and  the monitor placement, in particular  if we assume  the more 
general routing protocols. Our work contributes to this line of research.

\subsection{Overview of results}
Through a combinatorial approach we focus on:
\begin{enumerate}
\item studying how structural properties of the graph modeling the network limit maximal node failure identifiability; 
\item studying tight upper and lower bounds for maximal node failure identifiability in specific topologies, in particular for trees and hypergrids, possibly independently of the monitor placement; 
\item understanding how embeddability between graphs interferes with  the maximal identifiability; 
\item exploring experimentally the feasibility of a heuristic to boost maximal identifiability in a network by adding edges in order to approximate  a hypergrid. 
\end{enumerate}
Identifiability as defined for the first time in \cite{DBLP:conf/imc/MaHSTLL14} captures the combinatorial property that to separate two sets $U$ and $W$ (of failure nodes) one wants to exhibit a measurement path in \calP  touching nodes of exactly one of the two sets. The maximal  size of  sets  of failure nodes one can guarantee identifiability for, is then  a measure of the ability to identify failure sets uniquely using paths in \calP.  We want to explore this property, independently of the monitor placement,  for specific classes of topologies such as trees and grids and hypergrids, 
which are among the topologies most used and implemented in real networks. 

To study the maximal identifiability of given graphs $G$, we follow the approach initiated in  \cite{DBLP:journals/ton/MaHSTL17,DBLP:conf/imc/MaHSTLL14,DBLP:journals/pe/MaHSTL15,DBLP:journals/ton/HeGMLST17} based on the definition of maximal identifiability. Given  a graph 
$G=(V,E)$ and  a monitor placement $\chi$ for $G$ we work with the set of paths \calP which are definable 
according to a probing mechanism on $G$ with the monitors assigned by $\chi$. 
We study the maximal identifiability of the set of nodes in $G$ appearing in paths in \calP.

%Under standard assumptions on topologies and on monitors placement,
We prove upper and lower bounds on the maximal identifiability of specific classes of network topologies, such as trees, $d$-dimensional hypergrids, in both directed and undirected cases. One first result we obtain is that when the graph $G$ is a tree its maximal node identifiability is very low, namely 1. This result 
has to be interpreted as saying that if our network topology is a tree then maximal number of failed nodes we can hope to uniquely identify is 
1.   Searching for topologies which are better than trees with respect to maximal node-failure identifiability we considered the case of grids.
We prove that grids, under a suitable optimal monitor placement, can reach an identifiability strictly greater than 1, namely 2.
Our analysis easily extends to the case of $d$-dimensional hypergrids where we prove that the maximal identifiability can be lifted to the 
dimension $d$. We prove these results for  both  the directed and undirected cases.

When one consider the minimal number of monitors to reach the maximal identifiability on  $d$-dimensional hypergrids, our results mark an important difference between the directed and undirected cases. In the latter we can show how to get tight lower and upper bound results using only $2d$  monitors.  In the directed case instead  the number of monitors to reach a maximal identifiability 
depends linearly on the number of nodes and cannot be improved. 
 
To prove lower bounds on maximal identifiability, instead of checking experimentally the optimality of the upper bounds as in previous works, we use an algorithmic/combinatorial analysis, so obtaining tight results. 
This approach directly leads to algorithms to design network topologies with a guarantee of reaching a precise  maximal identifiability of failure nodes.

As seen, $d$-dimensional hypergrids play an important role in our results. It is well known that hypergrids are related to the \emph{dimension} 
of directed acyclic graphs (DAG) through the operation of \emph{embeddings} of graphs. Namely the dimension of a DAG $G$, is the smallest
integer $d$ such that $G$ is embeddable in  the $d$-dimensional hypergrid.  We start the study of maximal  identifiability of node failure under embeddings of DAGs. We establish relations between maximal identifiability and embeddability when networks are modeled by DAGs. 
While the most general definition of embeddings can drastically decrease maximal identifiability,  yet we explore two directions: (1) restricting the class of  topologies  we want to embed and (2) restricting the mapping  that defines the embedding. In both cases we show significative results on how maximal identifiability can be preserved under embeddings.

 $d$-hypergrids are examples of concrete topologies  which reach a very good value of  
the maximal identifiability. The results on embeddability and on the dimension suggest that for increasing the maximal identifiability 
of real networks (which often are very low since many real topologies are trees, quasi-trees or grids) one can try to add edges to
the network in such a way to get closer to a graph which is embeddable into a $d$-hypergrid, for $d$ a  function of the 
number of nodes in the network. We explore this idea experimentally.  Namely we propose a simple algorithm {\tt Agrid} that given a network $G$  and a parameter  $d$ outputs a new network $G^{\tt A}$ closer to a $d$-hypergrid than the original network  having minimal degree $d$. We test experimentally our algorithm on real examples of networks, on random graphs, and also on random placement of monitors, obtaining results which are always positive and promising to boost maximal identifiability on real networks. We discuss examples of cost-benefit tradeoffs to evaluate feasibility of {\tt Agrid} on real networks.      

%\red{Finally, using the lower bound for trees, we study  the maximal number of single nodes that are uniquely identifiable over an undirected network $G$, is upper and lower bounded  in terms of the size of certain edge-cuts and can be efficiently computed (see also \cite{BHK17} for another approach).}   
%Our result complements a recent result of \cite{BHK17} giving an efficiently computable measure for the maximal number of identifiable nodes.   

 % As defined in \cite{} and used in \cite{BHK17} maximal identifiability was refined to capture identifiability of set of paths with respect to a given set $S$.
 
 \subsection{Organization}
 In \emph{Section \ref{sec:prel}} we include all the preliminary definitions, including definitions related  with maximal identifiability.

 In \emph{Section \ref{sec:struct-lim}} we start presenting some upper bounds for maximal identifiability in terms of structural properties of the network. We consider: (1) the number of nodes linked to monitors (\emph{Theorem \ref{thm:nummon}}),  (2) the minimal degree  (\emph{Lemma \ref{lem:ubd} and \ref{lem:bd}}), and (3) the number of edges and nodes (\emph{Corollary \ref{cor:nodesandpaths}}).  We discuss consequences of these results.

\emph{Section \ref{sec:dirtop}}  includes the tight bounds on the maximal identifiability of  trees, grids and $d$-dimensional grids in the directed case.  While the upper bounds are derived as consequence of the results on the degree in Section \ref{sec:struct-lim}, in the section we 
prove  the lower bounds (\emph{Theorems \ref{thm:dirtree}, \ref{thm:dirgrid} and \ref{thm:hg}})  for trees, grids and $d$-dimensional grids. We use a specific monitor placement to prove the lower bounds, but we discuss its optimality.
    
 In \emph{Section \ref{sec:und}}  we analyse the undirected case of trees, grids and $d$-hypergrids. As for the previous section the upper bounds are a consequence of the results on the degree in Section \ref{sec:struct-lim} and in the section  we prove the two lower bounds (\emph{Theorems \ref{thm:undtrees} and \ref{thm:undgrid2}}). The $d-1$ lower bound for $d$-hypergids is given for any monitor placement of $2d$ monitors.

\emph{Section \ref{sec:emb}} self-contains all the results about maximal identifiability and isomorphic embeddings. 
In \emph{Theorem \ref{thm:emdcr}} we prove that under a specific routing scheme upper bounds for maximal identifiability (for DAGs) are preserved under any embedding. Later, we restrict the classes of embeddings to those increasing the distance and we can prove that lower bounds on maximal identifiability are preserved under such embeddings (\emph{Theorem \ref{thm:dremb}}). This result in turn  is used to prove that, for DAG closed under transitivity, maximal identifiability  is lower bounded by the dimension of the graph, under isomorphic embeddings (\emph{Theorem \ref{thm:dim}}).
 
In \emph{Section \ref{sec:algo}} we discuss practical applications of our results. First we observe how Theorem \ref{thm:undgrid2} suggests
 the design of a network on $N$ nodes potentially reaching a maximal identifiability of $O(\log N)$. Later in \emph{Subsection \ref{subsecalgo:2}} we describe a heuristic, {\tt Agrid}, to boost maximal identifiability in a network adding random edges in order to increase the original minimal degree. {\tt Agrid} implements the idea of taking a network $G$ and a parameter $d$ and producing a network $G^{\tt A}$ (with the same nodes of $G$ but with minimal degree $d$) and a monitor placement for $G^{\tt A}$ (and $G$) to simulate  a $d$-hypergrid, with the aim of boosting the maximal identifiability of $G$ as close to $d$ as possible. We discuss its feasibility in static, dynamic and sub-networks. 
 
 %In \emph{Subsection \ref{subsec:spanning}} we use the lower bound for undirected trees (Theorem \ref{thm:undtrees}) to prove that the maximal number of single nodes which are uniquely identifiable over network with $n$ nodes is included between $n-2\ell$ and $n-\ell$, where $\ell$ is the size of the minimal size edge-cut inducing a tree in the network (\emph{Theorem \ref{thm:spanning}}).
 %This in turn implies that the  maximal number of single nodes which are uniquely identifiable is upper bounded by the size of the maximal spanning tree (\emph{Corollary \ref{cor:spanning}}) of the network. In the last \emph{Subsection \ref{subsec:applied}} we shortly addresses  potential applied scenarios of our results.
  
\emph{Section \ref{sec:perfo}} is about the report of data on {\tt Agrid} performance.  
We discuss in several Tables and  under four different type of data the  performance of {\tt Agrid} on concrete examples of networks.
%(1) we evaluate on three examples of real networks the increasing of the maximal identifiability passing from $G$ to $G^{\tt A}$ keeping equal the number of monitors in both networks; (2) We evaluate the same type of performance on pairs $(G,G^{\tt A})$ where $G$ is a random graph (over 5,8, and 10 nodes) and for 30,100 and 500 trials, presenting the percentage indicating how maximal identifiability is increasing; (3) We introduce an approximation of maximal identifiability $\mu_\lambda$ which is more feasible from a computational point of view and we run experiments on the real networks  measuring the performance of {\tt Agrid} for several different test $(G,G^{\tt A})$ over the same $G$; (4) {\tt Agrid} implements a specific heuristic to place monitors on $G^{\tt A}$. The last experiments are devoted to prove that even with a random placement of monitors $G^{\tt A}$ is still  incrementing the maximal identifiability with respect to $G$ in a significative way.   
 
The last \emph{Section \ref{sec:final}}  contains a discussion on three topics:  degenerate paths,  routing mechanisms and  further research directions.
 
 \section{Preliminaries}
\label{sec:prel}
%\subsection{Sets, Graphs, Paths, Posets, Embeddings}
For sets $U,V$, %let $\chi_U$ be its indicator variable, i.e. $\chi_U(x)=1$ iff $x \in U$.
$U\triangle V= (U \setminus V) \cup (V \setminus U)$ is the symmetric difference
between $U$ and $V$. 
%Notice that $\chi_{U\triangle V}=\chi_U\oplus\chi_V$.
In a graph $G=(V,E)$, $V$ is a set of nodes and  $E \subseteq V \times V$. $G$ is undirected if pairs in $E$ are unordered.
% i.e. $(uv) \in E$ iff $(vu)\in E$. 
 Otherwise $G$ is directed. $G$ is DAG if it is directed and with no cycles. 
A path $p$ in $G$  from  a node $u$ to a node $v$ is a sequence of edges $p=(u_1u_2),(u_2u_3)\dots(u_{k-1}u_k)$ such that
$u_1=u$ and $u_k=v$ and $(u_iu_{i+1})\in E$ for all $i\in[k-1]$. If $G$ is a DAG, then we identify the path $p$ also with sequence of nodes $u_1\ldots u_k$.
For a node $u$ in $G$, $N(u)$ is the set of neighbours of $G$, i.e. $\{v \in V \;|\; (uv)\in E\}$. The degree of $u$,  $\deg(u)$,  is the cardinality of 
$N(u)$. The degree of $G$ is $\Delta(G)=\max _{u\in V}\deg(u)$. We also consider the minimal degree $\delta(G)$ of $G$. 
If $G$ is directed then we distinguish $N_{\tt i}(u)$, the set of neighbours $v$ of $u$ s.t. $(vu)\in E$, from 
$N_{\tt o}(u)$, the neighbours $v$ of $u$ s.t. $(uv)\in E$. For all degree measures on $G$ we distinguish the in-degree $\Delta_{\tt i}(G)$ and $\delta_{\tt i}(G)$ and the out-degree $\Delta_{\tt o}(G)$, and  $\delta_{\tt o}(G)$.  
%In a DAG $G$, we denote by $S$,  the source nodes, i.e. nodes $u$  with in-degree $0$, and by $T$ the target nodes $v$ with out-degree $0$.
%We are interested in the following graphs.

\begin{table}[h!]
  \begin{center}
   \scalebox{1}{
    \begin{tabular}{l|l} % <-- Alignments: 1st column left, 2nd middle and 3rd right, with vertical lines in between
      \textbf{Symbol} & \textbf{Meaning} \\
      \hline
      \hline
      $V$ & set of nodes \\
      \hline
      $E$ & set of edges \\
       \hline
      $N(u)$ & neighbours of $u$ \\
        \hline
     $\deg(u)$ & degree of $u$, i.e. $|N(u)|$ \\        
        \hline
        $\Delta(G)$ & maximal degree in $G$\\ 
       \hline
       $\delta(G)$ & minimal degree in $G$\\
        \hline
        $\Delta_{\tt o}(G), \Delta_{\tt i}(G)$ & maximal out (resp. in)-degree in directed $G$\\ 
       \hline
     $\delta_{\tt o}(G), \delta_{\tt i}(G)$ & minimal out (resp. in)-degree in directed $G$\\ 
       \hline
    \end{tabular}
    }
     \caption{\em Graph notations on $G=(V,E)$.}
    \label{fig:graphnot}
  \end{center}
\end{table}

\smallskip
\noindent{\bf \em Topologies}. We consider the following graphs. 
Let  $d\in \mathbb N^+$ and $n \in \mathbb N$, $n \geq 4$. The \emph{(directed) hypergrid} of dimension $d$ over support $[n]$,  $\calH_{n,d}$,  is the graph with  vertex set $[n]^d$ and where there is a directed  edge from a node $x = (x_1, x_2, . . . , x_d)$ to  a node $y = (y_1, y_2, . . . , y_d)$ if for some $i\in [d]$ we have $y_i - x_i = 1$ and $x_j = y_j$ for all $j \not= i$.

In the case of {\em undirected hypergrid}  in $\calH_{n,d}$ there is an edge between  a node  $x$ and a node $y$ if for some 
$i\in [d]$ we have $|x_i - y_i| = 1$ and $x_j = y_j$ for all $j \not= i$.
In the case of simple grids over $n$ nodes, i.e. $d=2$, we  use the notation $\calH_n$. $\partial_i$ is the set of nodes $x = (x_1, x_2, . . . , x_d)$ such that $x_i=1$ .
A {\em border node} is a node of $\calH_{n,d}$ which is also in some $\partial_i$.

\begin{figure}[h!]
\centering
\begin{center}
\begin{scriptsize}
\tikz[scale=1.4] {

\node at (-0.5,0) () {$(1,1)$};
\coordinate (a) at (0,0);
\draw [] (a) circle [radius=0.05];
\coordinate (b) at (0.5,0);
\draw [] (b) circle [radius=0.05];
\coordinate (c) at (1,0);
\draw [] (c) circle [radius=0.05];
\coordinate (d) at (1.5,0);
\draw [] (d) circle [radius=0.05];

\draw[draw=black, ->]  (0+0.05,0) -- (0.5-0.05,0);
\draw[draw=black, ->]  (0.5+0.05,0)--(1-0.05,0);
\draw[draw=black, ->]  (1+0.05,0)--(1.5-0.05,0);

\coordinate (a1) at (0,-0.5);
\draw [] (a1) circle [radius=0.05];
\coordinate (b1) at (0.5,-0.5);
\draw [] (b1) circle [radius=0.05];
\coordinate (c1) at (1,-0.5);
\draw [] (c1) circle [radius=0.05];
\coordinate (d1) at (1.5,-0.5);
\node at (2,0) () {$(4,1)$};
\node at (2,-1.5) () {$(4,4)$};
\node at (-0.5,-1.5) () {$(1,4)$};
\draw [] (d1) circle [radius=0.05];
%Orizzontal
\draw[draw=black, ->]  (0+0.05,-0.5) -- (0.5-0.05,-0.5);
\draw[draw=black, ->]  (0.5+0.05,-0.5)--(1-0.05,-0.5);
\draw[draw=black, ->]  (1+0.05,-0.5)--(1.5-0.05,-0.5);
%Vertical
\draw[draw=black, ->]  (0,0-0.05) -- (0,-0.5+0.05);
\draw[draw=black, ->]  (0.5,0-0.05) -- (0.5,-0.5+0.05);
\draw[draw=black, ->]  (1,0-0.05) -- (1,-0.5+0.05);
\draw[draw=black, ->]  (1.5,0-0.05) -- (1.5,-0.5+0.05);

\coordinate (a2) at (0,-1);
\draw [] (a2) circle [radius=0.05];
\coordinate (b2) at (0.5,-1);
\draw [] (b2) circle [radius=0.05];
\coordinate (c2) at (1,-1);
\draw [] (c2) circle [radius=0.05];
\coordinate (d2) at (1.5,-1);
\draw [] (d2) circle [radius=0.05];
%Orizzontal
\draw[draw=black, ->]  (0+0.05,-1) -- (0.5-0.05,-1);
\draw[draw=black, ->]  (0.5+0.05,-1)--(1-0.05,-1);
\draw[draw=black, ->]  (1+0.05,-1)--(1.5-0.05,-1);
%Vertical
\draw[draw=black, ->] (0,-0.5-0.05) -- (0,-1+0.05);
\draw[draw=black, ->]  (0.5,-0.5-0.05) -- (0.5,-1+0.05);
\draw[draw=black, ->]  (1,-0.5-0.05) -- (1,-1+0.05);
\draw[draw=black, ->]  (1.5,-0.5-0.05) -- (1.5,-1+0.05);

% $ ROW
\coordinate (a3) at (0,-1.5);
\draw [] (a3) circle [radius=0.05];
\coordinate (b3) at (0.5,-1.5);
\draw [] (b3) circle [radius=0.05];
\coordinate (c3) at (1,-1.5);
\draw [] (c3) circle [radius=0.05];
\coordinate (d3) at (1.5,-1.5);
\draw [] (d3) circle [radius=0.05];
%Orizzontal
\draw[draw=black, ->]  (0+0.05,-1.5) -- (0.5-0.05,-1.5);
\draw[draw=black, ->]  (0.5+0.05,-1.5)--(1-0.05,-1.5);
\draw[draw=black, ->]  (1+0.05,-1.5)--(1.5-0.05,-1.5);
%Vertical
\draw[draw=black, ->] (0,-1-0.05) -- (0,-1.5+0.05);
\draw[draw=black, ->]  (0.5,-1-0.05) -- (0.5,-1.5+0.05);
\draw[draw=black, ->]  (1,-1-0.05) -- (1,-1.5+0.05);
\draw[draw=black, ->]  (1.5,-1-0.05) -- (1.5,-1.5+0.05);

%\node (m5) at (-0.6,0) {$m$};
%\node (m6) at (-0.6,-0.5) {$m$};
%\node (m7) at (-0.6,-1) {$m$};
%\node (m8) at (-0.6,-1.5) {$m$};
%\draw[draw=black,dotted,->] (m5)-- (0-0.05,0);
%\draw[draw=black,dotted,->] (m6)-- (0-0.05,-0.5);
%\draw[draw=black,dotted,->] (m7)-- (0-0.05,-1);
%\draw[draw=black,dotted,->] (m8)-- (0-0.05,-1.5);

%\node (m9) at (1.5+0.6,0) {$M$};
%\draw[draw=black,dotted,->] (1.5+0.05,0) -- (m9);
%\node (m10) at (1.5+0.6,-0.5) {$M$};
%\draw[draw=black,dotted,->] (1.5+0.05,-0.5) -- (m10);
%\node (m11) at (1.5+0.6,-1) {$M$};
%\draw[draw=black,dotted,->] (1.5+0.05,-1) -- (m11);
%\node (m12) at (1.5+0.6,-1.5) {$M$};
%\draw[draw=black,dotted,->] (1.5+0.05,-1.5) -- (m12);
}
\end{scriptsize}
\end{center}
\caption{\em Directed hypergrid $\calH_4=\calH_{4,2}$.}
\label{fig:H4}
\end{figure}

We  consider (see Figure \ref{fig:dirtrees}) directed rooted trees  $\calT_n$ over $n \in \mathbb N^+$ nodes (from now on we omit that $n\in \mathbb N^+$). An (undirected) tree is an acyclic graph with no cycle where any two nodes $u$ and $v$ are  connected by a path. We consider: (1) {\em downward directed trees} $\calT_n$ where the root of $\calT_n$ is the only source node and the leaves of $\calT_n$ are the only target nodes (i.e. $\Delta_{\tt i}(\calT_n)\leq 1$)  (2) {\em upward directed trees} $\calT_n$, where the root is the only target node and the leaves are source nodes (i.e. $\Delta_{\tt o}(\calT_n)\leq 1$).
\begin{comment}
\begin{figure}[h!]
\centering
\begin{center}
\begin{scriptsize}
\begin{tikzpicture} [triangle/.style = {draw, regular polygon, regular polygon sides=3 },
    border rotated/.style = {shape border rotate=180}]
% CASO 1
\node[triangle] at (0,0.9) {$\white{t}$};
%\node (rm) at (0,2.1) {$m$};

%\node (rm1)  at (0,1.3){};
%\draw[->,dotted] (rm) -- (rm1);
%\node at (0,1.1) (c)  {{\bf Case 1}};
%\node at (0,1.3) (e) {$w$};
\node at (0,0.7) (c) {};
\node at (0,1) {};
\node at (0,0) (d)  {$u$};
\node at (0,-0.1) (f1) {};
\node at (0.5,-0.7) (f3) {$w$};
\node[triangle] at (0.5,-1.3) {$\white{t}$};
\node at (-0.5,-0.7) (f4) {$z$};
\node[triangle] at (-0.5,-1.3) {$\white{t}$};
\draw[->] (f1) -- (f3);
\draw[->] (f1) -- (f4);
\draw[->] (c)--(d);

%\node (M1) at (-0.9,-2.3) {$M_1$};
%\node (M2) at (-0.1,-2.3) {$M_s$};
%\draw[->,dotted] (-0.9,-1.6)-- (M1);
%\draw[->,dotted] (-0.1,-1.6)-- (M2);
%\node (M11) at (0.5+0.4,-2.3) {$M_{t}$};
%\draw[->,dotted] (0.5+0.4,-1.6)-- (M11);
%\draw [-,dotted] (M1)--(M11);

% CASO 2
\node[triangle] at (4,0.9) {$\white{t}$};
%\node at (0,1.1) (c)  {{\bf Case 1}};
%\node at (0,1.3) (e) {$w$};
\node at (4,0.7) (1c) {};
\node at (4,1) {};
\node at (4,0) (1d)  {$u$};
\node at (4,-0.1) (1f1) {};
\node at (4.5,-0.7) (1f3) {$w$};
\node[triangle] at (4.5,-1.3) {$\white{t}$};
\node at (3.5,-0.7) (1f4) {$z$};
\node[triangle] at (3.5,-1.3) {$\white{t}$};
\draw[<-] (1f1) -- (1f3);
\draw[<-] (1f1) -- (1f4);
\draw[<-] (1c)--(1d);
%\node (m1) at (4-0.9,-2.3) {$m_1$};
%\draw[<-,dotted] (4-0.9,-1.6)-- (m1);
%\node (m11) at (4+0.5+0.4,-2.3) {$m_{t}$};
%\draw[<-,dotted] (4+0.5+0.4,-1.6)-- (m11);
%\draw [-,dotted] (m1)--(m11);
%\node (rM) at (4,2.1) {$M$};

%\node (rM1)  at (4,1.3){};
%\draw[<-,dotted] (rM) -- (rM1);
%\draw[dotted] (e)--(c);
\end{tikzpicture}
\end{scriptsize}
\end{center}
\caption{\em Downward and upward directed trees.}
\label{fig:dirtreesdef}
\end{figure}
\end{comment}
\begin{figure}[H]
\centering
\begin{center}
%\begin{scriptsize}
\tikz[scale=1.5]{
\node (g) at  (-1,-.5) {$G_1$};
\node (u1) at  (0,0) {$u_1$};
\node (w1) at  (-.5,-.5)  {$u_2$};
\node (u2) at  (0,-1) {$u_3$};
\node (w2) at  (0.5,-.5) {$u_4$};

\draw[->] (u1)--(u2);
\draw[->] (u1)--(w1);
\draw[->] (u1)--(w2);
\draw[->] (w1)--(u2);
\draw[->] (w2)--(u2);

\node (g2) at  (4,-.5) {$G_2$};
\node (au1) at  (3,0) {$w_1$};
\node (w11) at  (2.5,-.5)  {$w_2$};
\node (au2) at  (3,-1) {$w_3$};
\node (aw2) at  (3.5,-.5) {$w_4$};

%\draw[->] (u1)--(u2);
\draw[->] (au1)--(w11);
\draw[->] (au1)--(aw2);
\draw[->] (w11)--(au2);
\draw[->] (aw2)--(au2);
}
%\end{scriptsize}
\end{center}
\caption{\em Example of embedding $G_1 \hookrightarrow_f G_2$:  $f(u_i)=w_i$.}
\label{fig:embedding}
\end{figure}

\noindent{\bf \em Embeddings}.
Each DAG $G = (V, E)$ is equivalent to a {\em poset} with elements $V$  and partial order $\preceq_G$, where $u \preceq_G v$ if $v$ is
reachable from $u$ in $G$. Elements $u$ and $v$  are {\em comparable} if $u\preceq_G v$  or $v \preceq_Gu$ , and {\em incomparable} otherwise.
We write $u \prec_G v$  if $u \preceq_G v$ and $u \not = v$. 
A mapping $f$ from a poset $G=(V,E)$ to a poset $G'=(V',E')$
is called an {\em embedding} if $f$ is injective and it respects the partial order, that is, all
$u,v \in V$ are mapped to $u',v'\in V'$ such that $u \preceq_G v$ iff $u' \preceq_{G'}v'$. If $G$ is embeddable into $G'$ we write 
$G \hookrightarrow G'$. 
%It is immediate to see that  embeddings are onto and one-to-one mappings\footnote{}. For example in Figure \ref{fig:embedding}, 
$w_1=f(u_1)\preceq f(u_3)=w_3$ since in $G_2$ there is a path from $w_1$ to $w_3$.

\noindent{\bf \em Paths, monitors and identifiability}.
\label{subsec:assume}
Let \calP be a set of paths over nodes $V$. For a node $v\in V$, let  $\calP(v)$ be the set of paths in \calP passing through $v$.  
For a set of nodes $U$, $\pp(U) =\bigcup _{u\in U}\pp(u)$. Hence if $U\subseteq V$, $\pp(U)\subseteq \pp(V)$.
In end-to-end measurement paths, messages are routed  and received through monitor nodes.  We  work with  the assumption that: \emph{physical monitors are external to the network}.
This is justified by two reasons: (1) Monitors by default must be reliable, hence there is no failure to identify for them; 
(2) Since we study maximal identifiability in set of paths associated to given topologies $G=(V,E)$, the assumption allows to 
consider all the nodes in $G$ as equally potentially identifiable for a failure.

%We distinguish between \emph{input-monitors} and {\em output-monitors} and we model the link monitors-to-node as follows:
Let $I,O$ be sets of physical monitors. A \emph{monitor placement} for $G=(V,E)$ is a pair of injective mappings 
$\chi=(\chi_{\tt i},\chi_{\tt o})$ such that $\chi_{\tt i}:I \rightarrow V$ and $\chi_{\tt o}:O \rightarrow V$. 
We always denote by $(\mathfrak{m},\mathfrak{M})$ the pair $(\chi_{\tt i}(I),\chi_{\tt o}(O))$, where clearly $\mathfrak m=\bigcup_{i\in I} \chi_{\tt i}(i)$ and $\mathfrak M=\bigcup_{i\in O} \chi_{\tt o}(i)$.  The interpretation is that $\mathfrak m$ is the set of the nodes in $G$ (\emph{input nodes}) linked to input monitors and $\mathfrak M$  (\emph{output nodes}) the nodes in $G$ linked to output monitors. We use to denote measurement paths in $G$ under $\chi$ as  $\mathfrak m\!\cdot \!(v_1v_2) \cdots (v_{k-1}v_k) \!\cdot \! \mathfrak M$ where $v_1 \in \mathfrak m$ is an input node, $v_k \in \mathfrak M$ is an output node and $(v_1v_2) \cdots (v_{k-1}v_k) $ a path in $G$.
%\footnote{Notice that the distinction between input and output monitors is for notational convenience. There is no real difference between $I$ and $O$.}. 
%The $\chi_{\tt i}$s are injective to avoid  different monitors (of the same type) to  be linked to the same node.  In this paper we always use the hypothesis that a monitor placement $\chi$ is \emph{non-degenerate}, i.e. $m \cap M  = \emptyset$. This means that we do not consider paths made by only one node i.e. linked to both an input and an output monitor.
Given a  graph $G=(V,E)$ and a monitor placement $\chi=(\mathfrak m,\mathfrak M)$ we denote by $\calP(G|\chi) $ the set of all \emph{distinct} paths
from a node in $\mathfrak m$ to a node in $\mathfrak M$. 
Let \calP be a set of paths over a set of nodes $N$. Following  \cite{DBLP:journals/ton/MaHSTL17} we define:
\begin{definition}[$k$-identifiability]
\label{def:kid}
$N$ is $k$-identifiable with respect to \calP if and only if 
for all $U,W \subseteq N$, with $U\triangle W \not =\emptyset$ and  $|U|,|W|\leq k$, it holds that $\pp (U)\triangle \mathbb \pp (W) \not =\emptyset$.
\end{definition}

In \cite{DBLP:journals/ton/MaHSTL17}, and later in  \cite{DBLP:journals/corr/abs-1903-10636}, $k$-identifiability was used to localize failure
within specific subsets $S$ of $V$. That definition is given restricting  the condition  $U\triangle W \not =\emptyset$ 
to $(U\cap S) \triangle (W \cap S) \not =\emptyset$. When we need to distinguish our measure from the original one in 
\cite{DBLP:journals/ton/MaHSTL17}, we call the latter \emph{local} identifiability. 

\begin{definition}[Maximal identifiability]
The maximal identifiability of $V$, $\mu(V)$, with respect to 
$\calP$ is the $\max_{k\geq 0}$ such that $V$ is  $k$-identifiable with respect to \calP. 
\end{definition}

Monotonicity of identifiability (a property noticed in  several works \cite{DBLP:conf/imc/MaHSTLL14,DBLP:journals/ton/MaHSTL17}), i.e. that $k$-identifiability implies $k'$-identifiability for $k'<k$, is trivial  from our definition. 

\begin{table}[H]
  \begin{center}
    \scalebox{1}{
    \begin{tabular}{l|l} % <-- Alignments: 1st column left, 2nd middle and 3rd right, with vertical lines in between
      \textbf{Symbol} & \textbf{Meaning} \\
      \hline
      \hline
      $\calP$ & set of paths \\
       \hline
      $\pp(u)$ & paths in $\calP$ passing through  $u$ \\
        \hline
     $\pp(U)$ & $\bigcup_{u\in U} \pp(u)$ \\        
       \hline
      $(I,O)$  &  physical input and output monitors\\ 
        \hline
      $\chi$  & monitor placement: $\chi=(\chi_{\tt i}(I),\chi_{\tt o}(O))$\\ 
       \hline
       $(\mathfrak m,\mathfrak M)$  & nodes in $V$ linked to $I$ and $O$ by $\chi$ \\ 
        \hline
      $\calP(G|\chi)$ & set of all paths in $G$ from $\mathfrak m$ to $\mathfrak M$\\ 
       \hline
     $\mu(G| \chi)$ & maximal identifiability of $V$ wrt $\pp(G|\chi)$\\
       \hline
    \end{tabular}}
   \caption{\em Notations for paths, monitors, identifiability.}
   \label{fig:pathsnot}
  \end{center}  
\end{table}

%\subsubsection{Monitors, Routing  and \dots }
\noindent{\bf \em Degenerate paths and maximal identifiability}.
In the Boolean system as in Equation \ref{eqn:1}
we can have equations made by only one variable, $x_v=b$ for some $b\in\{0,1\}$.  This situation might occur when a node $v$ is linked to both input and output monitors. Nevertheless one node alone does not correspond to any real path in a graph and this creates an asymmetry between the system and the set of paths. Since we analyze maximal identifiability from the point of view of (real) paths in a graph, we {\em force} such equation to corresponds to a loop path of one node $\mathfrak m\!\cdot\! (vv) \!\cdot \!\mathfrak M$.  We call this a $\DLP$(-node or -path) from {\em degenerate loop path}. 
While $\DLP$ where previously appeared in the literature on node failure identification, forcing them into a node-loop is new to our knowledge. 

In this work  we consider routing mechanism where $\DLP$ paths are not allowed\footnote{Notice that this only make more difficult to prove lower bounds for maximal identifiability (see Section \ref{subsubsec:lb}).}. In the final section (Sec.\ref{sec:final}) we argue precisely why this assumption is theoretically sound and practically feasible when we consider node failure localization.   

\smallskip

\noindent{\bf \em Routing mechanisms and set of paths}. 
Given the topology $G$ and the monitor placement $\chi=(\mathfrak m,\mathfrak M)$, the probing mechanism plays a crucial
role in determining the set of measurement paths.
We consider three probing mechanisms (see also \cite{DBLP:journals/ton/MaHSTL17}):
\begin{enumerate} 
\item \emph{Controllable Arbitrary-path Probing} ($\CAP$) which includes in $\pp(G|\chi)$ any path/cycle, allowing repeated nodes/links,
provided each path/cycle starts and ends at (the same or different) input/output nodes.
\item \emph{Controllable Arbitrary-path Probing with no $\DLP$} ($\CAP^-$). This is as $\CAP$ but not allowing degenerate loop-paths.
\item  \emph{Controllable Simple-path Probing ($\CSP$)} which allows in $\pp(G|\chi)$ any simple (i.e. cycle-free)
path between different input/output nodes. 

\end{enumerate}

As described in \cite{DBLP:journals/ton/MaHSTL17} and \cite{DBLP:conf/infocom/RenD16}, these probing mechanisms capture the main features of several existing and emerging routing techniques. In Section \ref{sec:final} we discuss in more details these routing mechanisms and their effective implementation.

For a graph $G=(V,E)$ and a monitor placement $\chi$ for $G$, 
we write $\mu(G|\chi)$ (and call it the maximal identifiability of $G|\chi$), to indicate the maximal identifiability of $V$  with respect to $\pp(G|\chi)$ under the routing mechanism considered.
We might  omit the $\chi$,  when it is either clear from the context, or when the result holds for all possible $\chi$.
Notice that  since $\CSP$ does not allow loops, then \emph{$\DLP$ paths are not allowed under $\CSP$}. Our results in the next sections hold for $\CSP$ and $\CAP^-$ routing mechanisms (unless explicitly specified).

\subsubsection{How to prove upper bounds for $\mu$}
\label{subsubsec:ub}
To prove that $\mu(G|\chi)\leq k-1$ it is sufficient to show that $G|\chi$ is  not $k$-identifiable. By Definition \ref{def:kid} this means to show the existence of two distinct node sets $U$ and $W$ of cardinality at most $k$ such that $\pp(U) \triangle \pp(W) = \emptyset$.   
Hence by the monotonicity property of identifiability,  this implies that $\mu(G|\chi) \leq k-1$.  

\subsubsection{How to prove lower bounds for $\mu$}
\label{subsubsec:lb}
If we want to prove that $\mu(G|\chi)\geq k$ for some $k$, then  by Definition \ref{def:kid} it is enough to argue that for all distinct node sets $U$ and $W$ of cardinality $|U|,|W| \le k$,  $\pp(U) \triangle \pp(W) \not = \emptyset$. To prove this, we have to show that for any two distinct node sets $U$ and $W$ of cardinality at most $k$ there exists always a path in \calP intersecting exactly one node set between $U$
and $W$. Lower bounds on $\mu(G|\chi)$ are  hence interesting since to prove them we have to show the existence of paths in \calP  distinguishing between any two node sets $U$ and $W$ of cardinality at most $k$, i.e. touching exactly one of them.

\section{Structural upper bounds on maximal identifiability}
\label{sec:struct-lim}
In this section we  show some upper bounds on maximal identifiability due to structural property of the topology. We consider mainly two aspects: the number of nodes linked to monitors  and the minimal degree of the network.  

 \subsection{Number of input and output nodes}
 \label{subsec:2monitor}
%Notice that a monitor placement $\chi$ on a graph $G=(V,E)$ is not saying anything about how many different physical monitors are linked to the same node.  The next theorem 

Having monitors external to the network, we look at  the maximal identifiability we can hope for in a graph, knowing how many internal nodes are linked to monitors.
Next theorem answers  such question for the  $\CSP$ routing scheme.

%proves an upper bound for $\mu(\calP)$  in terms of this quantity and Corollary \ref{} proves that with less than 2 monitors we do not have identifiability at all. 
Let $G=(V,E)$ be a graph. Let  $\chi=(\chi_{\tt i},\chi_{\tt o})$ be a monitor placement for $G$ of physical monitors $I$ and $O$.
Let $\mathfrak m=\bigcup_{i\in I} \chi_{\tt i}(i)$ and $\hat m = |\mathfrak m|$. Let $\mathfrak M=\bigcup_{i\in O} \chi_{\tt o}(i)$ and $\hat M =|\mathfrak M|$. 
 %\chi_{\tt i}(i)=A_i$, i.e. the set of nodes of $V$ the $i$-th monitor of $I$ is linked to. Let $\chi_{\tt o}(j)=B_j$, i.e. the set of nodes of $V$ the $j$-th monitor in $O$ is linked to.  Let $m = |\bigcup_i m_i|$ and $M = |\bigcup_j M_j|$.
\begin{theorem}
\label{thm:nummon}
Let $G=(V,E)$ be connected and $\chi$ a monitor placement for $G$. Then, under $\CSP$ routing,  $\mu(G|\chi)\!<\!\max(\hat m, \hat M)$.
\end{theorem}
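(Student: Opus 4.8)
The plan is to prove the bound through the upper-bound recipe of Section~\ref{subsubsec:ub}. Writing $k=\max(\hat m,\hat M)$, it suffices to produce two \emph{distinct} node sets $U,W\subseteq V$ with $|U|,|W|\le k$ and $\pp(U)\triangle\pp(W)=\emptyset$; this certifies that $G|\chi$ is not $k$-identifiable, whence by monotonicity $\mu(G|\chi)\le k-1<\max(\hat m,\hat M)$.

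The single observation driving everything is that, under $\CSP$, every path in $\calP(G|\chi)$ is a simple path whose first node lies in $\mathfrak m$ and whose last node lies in $\mathfrak M$. Hence every path passes through at least one input node and at least one output node, which immediately gives $\pp(\mathfrak m)=\calP(G|\chi)=\pp(\mathfrak M)$: both $\mathfrak m$ and $\mathfrak M$ \emph{cover} the entire path set. (Here $\pp(\mathfrak m)\subseteq\calP(G|\chi)$ holds by definition, and the reverse inclusion is exactly the endpoint remark.)

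First I would treat the generic case $\mathfrak m\neq\mathfrak M$. Here I simply set $U=\mathfrak m$ and $W=\mathfrak M$: these are distinct, have sizes $\hat m\le k$ and $\hat M\le k$, and satisfy $\pp(U)=\pp(W)=\calP(G|\chi)$, so $\pp(U)\triangle\pp(W)=\emptyset$ as required. This is precisely where the $\max$ (rather than $\min$) enters: to use both $\mathfrak m$ and $\mathfrak M$ as witnesses I must allow sets as large as the larger of the two.

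The main obstacle is the degenerate case $\mathfrak m=\mathfrak M=:S$ (so $\hat m=\hat M=k$), in which $U=\mathfrak m$ and $W=\mathfrak M$ coincide. I would resolve it using that $\CSP$ forbids loops and $\DLP$: the first and last node of any path are then \emph{distinct} elements of $S$, so every path meets $S$ in at least two nodes. Consequently, for any fixed $s\in S$, deleting $s$ cannot uncover any path, since each path retains an endpoint in $S\setminus\{s\}$; hence $\pp(S\setminus\{s\})=\calP(G|\chi)=\pp(S)$. Taking $U=S$ and $W=S\setminus\{s\}$ yields distinct sets of sizes $k$ and $k-1$ with equal coverage, completing the case (the edge cases $\calP(G|\chi)=\emptyset$ or $|S|=1$ are handled identically, as both coverages are then empty). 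The connectedness hypothesis and the existence of monitors serve only to keep $\max(\hat m,\hat M)\ge 1$ and the path set meaningful; the endpoint argument itself needs no connectivity. I expect the degenerate case and the careful use of the no-$\DLP$ assumption to be the only delicate points, the rest being immediate from the endpoint observation.
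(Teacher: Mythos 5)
Your proof is correct and follows essentially the same route as the paper's: the same witness pair $U=\mathfrak m$, $W=\mathfrak M$ when $\mathfrak m\neq\mathfrak M$, and in the degenerate case $\mathfrak m=\mathfrak M$ the same use of the no-loop property of $\CSP$ to drop one node while preserving path coverage. Your version is marginally cleaner in that you show any $s\in S$ can be deleted (the paper singles out a special node and runs equivalent casework), you handle the edge cases $|S|=1$ and $\calP(G|\chi)=\emptyset$ explicitly, and you note that connectivity is never actually used, but these are cosmetic refinements rather than a different argument.
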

\begin{proof}
Define $U=\mathfrak m$  and $W=\mathfrak M$. Hence $|U|,|W|\leq \max(\hat m,\hat M)$.  
If $U \not =W$, then since $G$ is connected, there is no way of separating $U$ from $W$ with a path going from an input-node to an output-node. We will always touch both. Then $\pp(U)\triangle \pp(W)=\emptyset$ and hence $\mu(G|\chi)<\max(\hat m,\hat M)$.
If $U=W$ then it must be that  $|U|=|W| \ge 2$, since otherwise  $U=W=\{u\}$ and we would have loop path which is not allowed under $\CSP$.
 Then define $U'= U- \{u\}$ where $u$ is one of the nodes that is both the termination of a path and the source of another ($u$ necessarily exists since $U=W$). It is obvious that $\pp(U')\subseteq \pp(W)$. Now, if $p\in \pp(W)$ is not touching $u$, then $p\in \pp(U')$ since $U'\subset W$. If $p\in \pp(W)$ is touching $u$, then the source of this path is in $U'$ and $p$ touches $U'$ as well, unless the source of $p$ is $u$. If the source of $p$ is $u$, then the termination of this path is in $W-\{u\}=U'$ (since loop path is not allowed under $\CSP$) and touches $U'$ as well. Therefore $\pp(U')\triangle \pp(W)=\emptyset$ and hence $\mu(G|\chi)<\max(\hat m,\hat M)$.
\end{proof}

%\vspace{-0.5em}
%When in a monitor placement $\chi$ every monitor is linked to exactly one internal node\footnote{Notice that this can be always possible to have at the cost of increasing the number of external monitors to $\sum_i|m_i|+\sum_j|M_j|$.} we call it a \emph{functional} monitor placement. Previous theorem then implies that for functional $\chi$, $\mu(G|\chi)<\max(m,M)$, where $m$ and $M$ are the real number of physical input and output monitors.
%Furthermore previous result immediately implies no identifiability at all when there are only two monitor internal nodes. 

%\begin{corollary}
%Let $G=(V,E)$ be a graph and $\chi=(\mathfrak m,\mathfrak M)$ be a monitor placement for $G$. If $m=M=1$  
%or $\chi$ is a functional placement of only two monitors, then, under $\CSP$ routing, $\mu(G|\chi) = 0$.
%\end{corollary}
%\begin{proof}
%\vspace{-0.5em}
%if $m=M=1$, then previous theorem implies the claim. If $\chi$ is a  functional placement of two monitors, then again $m=M=1$. 
%\end{proof}
%A similar result holds also only for $\CSP$ routing,  but for additive metric on link-identifiability and was given in \cite{DBLP:journals/ton/MaHLST14}.
%Notice that the reason theorem \ref{} it is not working for $CAP^-$ is because

\subsection{Degree}
\label{subsec:deg}
Next  results hold for any monitor placement in $\CSP$ or $\CAP^-$ and we omit $\chi$. We start with the undirected case.
\begin{lemma}
\label{lem:ubd}
Let $G=(V,E)$ be undirected. Then  $\mu(G) \leq \delta(G)$.
\end{lemma}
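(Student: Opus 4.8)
The plan is to apply the upper-bound recipe of Section~\ref{subsubsec:ub}: to conclude $\mu(G)\leq \delta(G)$ it suffices to exhibit, for $k=\delta(G)+1$, two distinct node sets $U,W$ with $|U|,|W|\leq k$ and $\pp(U)\triangle\pp(W)=\emptyset$, witnessing that $G$ is not $k$-identifiable. By the monotonicity of identifiability this yields $\mu(G)\leq k-1=\delta(G)$.

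To build the witnesses, first I would fix a node $v\in V$ realizing the minimum degree, so that $|N(v)|=\deg(v)=\delta(G)$. I would then set $W=N(v)$ and $U=N(v)\cup\{v\}$. These satisfy $|W|=\delta(G)\leq k$ and $|U|=\delta(G)+1=k$, and they are distinct because $U\triangle W=\{v\}\neq\emptyset$ (here I use that $v\notin N(v)$, as $G$ has no self-loops). Since $W\subseteq U$ we immediately have $\pp(W)\subseteq\pp(U)$ and in fact $\pp(U)=\pp(W)\cup\pp(v)$, so the whole statement reduces to the single inclusion $\pp(v)\subseteq\pp(W)=\pp(N(v))$.

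The heart of the argument is this last inclusion: every admissible path through $v$ must also pass through at least one neighbour of $v$. Indeed, if $v$ is an internal vertex of such a path, then it is entered and left along two edges incident to $v$, hence it meets two nodes of $N(v)$; and if $v$ is an endpoint of the path (an input or output node), the path still has length at least one, so it leaves $v$ along an incident edge and meets at least one node of $N(v)$. In either case the path lies in $\pp(N(v))$, giving $\pp(v)\subseteq\pp(N(v))$ and therefore $\pp(U)=\pp(W)$, i.e. $\pp(U)\triangle\pp(W)=\emptyset$, as required.

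The step I expect to be the only delicate point is precisely the endpoint case, where one must rule out a path consisting of $v$ alone. This is exactly where the standing assumption that $\DLP$ paths are forbidden under $\CSP$ and $\CAP^-$ enters: without it, the degenerate loop $\mathfrak m\!\cdot\!(vv)\!\cdot\!\mathfrak M$ would belong to $\pp(v)$ but touch no neighbour of $v$, breaking the inclusion $\pp(v)\subseteq\pp(N(v))$. Granting this assumption, the argument goes through uniformly for both $\CSP$ and $\CAP^-$ and for every monitor placement $\chi$, so the $\chi$ may indeed be suppressed as announced.
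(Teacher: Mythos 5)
Your proof is correct and follows essentially the same route as the paper's: both pick a minimum-degree node $v$, take the witness pair $N(v)$ and $N(v)\cup\{v\}$, and reduce everything to the inclusion $\pp(v)\subseteq\pp(N(v))$. Your explicit treatment of the endpoint case and of why the no-$\DLP$ assumption under $\CSP$ and $\CAP^-$ is exactly what rules out the degenerate loop $\mathfrak m\!\cdot\!(vv)\!\cdot\!\mathfrak M$ is a welcome elaboration of a step the paper leaves implicit, but it is not a different argument.
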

\begin{proof} Let $u\in V$ be such that $\deg(u)=\delta(G)$. Fix $U=N(u)$ and $W=\{u\} \cup N(u)$. 
Each path touching $u$ is passing through at least a node in $N(u)$. Hence $\pp(\{u\})\subseteq \pp(N(u))$.
Hence $\pp(W)= \pp(\{u\})\cup \pp(N(u)) = \pp(N(u))=\pp(U)$ and then  $\pp(U)\triangle \pp(W)  = \emptyset$.
We have found two sets $U,W$ of cardinality at most $\delta(G)+1$. Hence $\mu(G) \leq \delta(G)$. 
\end{proof}

Notice that if a node $v$ in $V$ is disconnected in $G$, then $\mu(G)=\delta(G)=0$. 
Hence in the rest of the paper, we assume the graphs always to be connected.  
\begin{corollary}
\label{cor:nodesandpaths}
 Let  $G=(V,E)$ be defined  over $n$ nodes and $m$ edges. Then $\mu(G)\leq \min\{n,\lceil \frac{2m}{n} \rceil\}$.
\end{corollary}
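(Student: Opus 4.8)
The plan is to derive the bound as an immediate consequence of the minimum-degree bound of Lemma \ref{lem:ubd} together with an elementary counting argument on the degree sequence. The statement is a conjunction of two inequalities, $\mu(G)\le n$ and $\mu(G)\le\lceil 2m/n\rceil$, so I would establish each separately and then take the minimum.

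For the first inequality, I would note that it is essentially definitional: since $V$ contains only $n$ nodes, the sets $U,W$ appearing in Definition \ref{def:kid} never have cardinality larger than $n$. Thus requiring identifiability for $k>n$ adds no constraint beyond $k=n$, so $\mu(G)$ is capped at $n$ and $\mu(G)\le n$.

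For the second inequality, I would bound the minimum degree by the average degree. By the handshake identity $\sum_{v\in V}\deg(v)=2m$, the average degree equals $2m/n$, and the minimum of a collection of values cannot exceed their average, so $\delta(G)\le 2m/n$. Because $\delta(G)$ is a non-negative integer, this improves to $\delta(G)\le\lceil 2m/n\rceil$. Lemma \ref{lem:ubd} then gives $\mu(G)\le\delta(G)\le\lceil 2m/n\rceil$. Combining the two inequalities yields $\mu(G)\le\min\{n,\lceil 2m/n\rceil\}$, as required.

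There is no real technical obstacle here; the argument is a short combination of a previously established lemma and a pigeonhole/averaging step. The only points requiring care are the integrality rounding (using that $\delta(G)\in\mathbb{N}$ lets us pass from $2m/n$ to its ceiling) and the convention that caps $\mu(G)$ at $n$. If the corollary is intended for directed graphs as well, I would instead invoke the directed degree bound (Lemma \ref{lem:bd}) and use that both the in-degrees and the out-degrees sum to $m$; this yields the even smaller quantity $\lceil m/n\rceil$, so the stated bound holds \emph{a fortiori}.
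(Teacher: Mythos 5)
Your proof is correct and takes essentially the same route as the paper: both arguments bound the minimum degree by the average degree $2m/n$ via the handshake count and then invoke Lemma \ref{lem:ubd}, with your version merely spelling out the trivial cap $\mu(G)\le n$ and the rounding that the paper leaves implicit. One small remark: the ceiling bound $\delta(G)\le\lceil 2m/n\rceil$ needs no integrality at all (it follows since $2m/n\le\lceil 2m/n\rceil$); integrality of $\delta(G)$ would in fact yield the stronger floor bound $\delta(G)\le\lfloor 2m/n\rfloor$.
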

\begin{proof}
Assume a graph $G$ has $n$ nodes and minimal degree $d$. Then there are at least $nd/2$ edges in $G$. 
So $m\geq nd/2$. Hence $d\leq 2m/n$. By Lemma \ref{lem:ubd} $\mu(G)\leq d=2m/n$.
\end{proof}

Let us now consider the directed case. 
Let $G=(V,E)$ be a directed graph and $\chi=(\mathfrak m,\mathfrak M)$ be a monitor placement. 
A node $v\in V$ is called a \emph{complex source} if $v \in m$ and $\deg_{\tt i}(v)>0$ and a \emph{simple source}
if $v \in m$ and $\deg_{\tt i}(v)=0$. Let $K$ (resp. $L$) be the set of complex (resp. simple) source nodes and $R = V\setminus (K\cup L)$. 
We let  $\hat \delta(G) = \min\{\;\min_{v \in R} \deg_{\tt i}(v), \min_{v \in K}(\deg_{\tt i}(v)+\deg_{\tt o}(v))\;\}.$

\begin{figure}[h!]
\centering
\begin{center}
\begin{scriptsize}
\tikz {
%\node (u1) at  (0,0) {$u_1$};
%\coordinate (u) at (-.5,-0.5);
%\draw [fill=red] (u) circle [radius=0.05];
\node (m2) at  (0.5,0)  {$\mathfrak m_2$};
\node (m1) at  (-.5,0)  {$\mathfrak m_1$};
\node (ul) at  (-.8,-.5)  {$u$};
\node (u) at  (-.5,-.5)  {\red{$\bullet$}};
\node (w) at  (0,-1.2) {$\circ$};
\node (wl) at  (0.3,-1.2) {$w$};
\node (vl) at  (0.8,-.5)  {$v$};
\node (v) at  (0.5,-.5) {\blue{$\bullet$}};
\node (M) at  (0,-1.7) {$\mathfrak M$};
\draw[->] (u)--(w);
\draw[->] (u)--(v);
\draw[->] (v)--(w);
\draw[-,dotted] (m1)--(u);
\draw[-,dotted] (m2)--(v);
\draw[-,dotted] (M)--(w);

%\draw[->] (w2)--(u2);

%\node (au1) at  (3,0) {$w_1$};
%\node (w11) at  (2.5,-.5)  {$w_2$};
%\node (au2) at  (3,-1) {$w_3$};
%\node (aw2) at  (3.5,-.5) {$w_4$};

%\draw[->] (u1)--(u2);
%\draw[->] (au1)--(w11);
%\draw[->] (au1)--(aw2);
%\draw[->] (w11)--(au2);
%\draw[->] (aw2)--(au2);
}
\end{scriptsize}
\end{center}
\caption{\emph{Example of simple ($\red{\bullet}$) and complex (\blue{$\bullet$}) source nodes.}}
\label{fig:complexsource}
\end{figure}

%Notice that If $G$ has no complex sources, then $\hat \delta(G)=\delta_{\tt i}(G)$.

\begin{lemma}
\label{lem:bd}
Let $G=(V,E)$ be directed. Then $\mu(G) \leq \hat \delta(G)$.
\end{lemma}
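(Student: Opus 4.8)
The plan is to mirror the undirected argument of Lemma~\ref{lem:ubd}, using the recipe of Section~\ref{subsubsec:ub}: to certify $\mu(G)\leq\hat\delta(G)$ I will exhibit, for a node $v$ realizing the minimum defining $\hat\delta(G)$, two distinct sets $U\subsetneq W$ with $|W|\leq\hat\delta(G)+1$ and $\pp(U)\triangle\pp(W)=\emptyset$. In every case $W$ will be $U\cup\{v\}$, so it suffices to choose $U$ (a suitable neighbourhood of $v$) large enough that every measurement path touching $v$ already touches $U$, i.e. $\pp(v)\subseteq\pp(U)$; then $\pp(W)=\pp(v)\cup\pp(U)=\pp(U)$. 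The two relevant cases are $v\in R$ and $v\in K$, matching the two terms of $\hat\delta(G)$.

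First I would treat $v\in R$. Since $v$ is not an input node, $v$ cannot be the source of any measurement path, so any $p\in\pp(v)$ reaches $v$ through one of its in-neighbours; hence $\pp(v)\subseteq\pp(N_{\tt i}(v))$. Taking $U=N_{\tt i}(v)$ and $W=U\cup\{v\}$ gives $\pp(U)\triangle\pp(W)=\emptyset$ with $|W|=\deg_{\tt i}(v)+1$, so $\mu(G)\leq\deg_{\tt i}(v)$.

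The case $v\in K$ is the crux, and I expect it to be the main obstacle. Now $v\in\mathfrak m$, so a path in $\pp(v)$ may use $v$ as its own origin, in which case it need not pass through an in-neighbour and instead leaves $v$ along an out-edge. I would first argue that any $p\in\pp(v)$ has at least one edge incident to $v$: this is exactly where the no-$\DLP$ assumption ($\CSP$ or $\CAP^-$) is needed, as it rules out the single-node loop at $v$. Then if $v$ is the first node of $p$ it meets $N_{\tt o}(v)$, and otherwise $v$ has a predecessor on $p$ and $p$ meets $N_{\tt i}(v)$. Hence $\pp(v)\subseteq\pp(N_{\tt i}(v)\cup N_{\tt o}(v))$, and with $U=N_{\tt i}(v)\cup N_{\tt o}(v)$, $W=U\cup\{v\}$ we obtain $\mu(G)\leq\deg_{\tt i}(v)+\deg_{\tt o}(v)$.

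Finally I would combine the two cases: minimising over $v\in R$ and $v\in K$ yields $\mu(G)\leq\min\{\min_{v\in R}\deg_{\tt i}(v),\ \min_{v\in K}(\deg_{\tt i}(v)+\deg_{\tt o}(v))\}=\hat\delta(G)$. A minor point to verify throughout is that $U\neq W$, which holds because $v\notin N_{\tt i}(v)\cup N_{\tt o}(v)$ in the absence of self-loops, so $W=U\cup\{v\}$ strictly contains $U$. The delicate part is therefore only the in/out accounting for complex sources together with the correct invocation of the no-$\DLP$ hypothesis; the simple-source set $L$ plays no role, since the two neighbourhood witnesses above already deliver the bound $\hat\delta(G)$.
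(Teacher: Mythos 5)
Your proposal is correct and takes essentially the same route as the paper's proof: the same case split between $v\in R$ and $v\in K$, with the same witness sets $N_{\tt i}(v)$ and $N_{\tt i}(v)\cup N_{\tt o}(v)$ paired against their union with $\{v\}$. If anything, your explicit invocation of the no-$\DLP$ hypothesis (to rule out a path consisting of $v$ alone in the complex-source case) and your check that $U\neq W$ spell out steps the paper's one-line treatment of the $K$ case leaves implicit.
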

\begin{proof}
Let $w$ be a node in $G$ which minimizes $\hat \delta(G)$. If $w \in R$,  then  $\hat \delta(G)=\delta_{\tt i}(w)$. Define $W=N_{\tt i}(w)$ and 
$U=N_{\tt i}(w)\cup \{w\}$. Since $w \in R$, then each path passing through $w$ is necessarily proceeding  from a node in $N_{\tt i}(w)$, hence 
$\pp(\{w\}) \subseteq \pp(N_{\tt i}(w))$. Therefore $\pp(U)=\pp(W)$, which proves the claim since $|U|=\delta_{\tt i}(G)+1=\hat \delta(G)+1$.

If $w \in K$, then define $W=N_{\tt i}(w) \cup N_{\tt o}(w)$ and $U= W \cup \{w\}$. 
If a path is passing from $w$ and raising from an input monitor linked to $w$, it is necessarily  continuing to a node in
$N_{\tt o}(w)$.
\end{proof}

\subsection{Graphs including lines}
We call a path $p$ in an undirected graph  $G=(V,E) $ a  {\em line} if $p:=(u_0u_1)\ldots(u_k u_{k+1})$ and $N(u_i)=\{u_{i-1},u_{i+1}\}$ for any $i\in [k]$ (see also \cite{DBLP:journals/ton/DuffieldP04,DBLP:conf/sigmetrics/BuDPT02}).
Reasoning exactly as in Lemma \ref{lem:ubd} is easy to observe that if $\pp(G|\chi)$ includes a path which is a line, the maximal identifiability of $G$ is less than $1$. 
Hence meaningful topologies should not include a line.  
We define an undirected topology $G$ to be  {\em Line-Free} (LF) if each node $u$ is linked to at least two other nodes in $G$.

 \section{Directed trees and grids}
\label{sec:dirtop}
We now consider directed trees $\calT_n$. For downward trees we consider the monitor placement $\chi_{\tt t}$ which includes in $\mathfrak m$ 
the root of $\calT_n$  and in $\mathfrak M$ all the leaves. Vice versa in an upward tree $\chi_{\tt t}$ assigns the root of $\calT_n$ in $\mathfrak M$ and the leaves
in $\mathfrak m$ (see Figure \ref{fig:dirtrees}).

\begin{theorem} 
\label{thm:dirtree}
Let $\calT_n$ be a directed tree. Then  $\mu(\calT_n|\chi_{\tt t})=1$ under $\CSP$ or $\CAP^-$.
\end{theorem}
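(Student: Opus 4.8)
The plan is to prove the two inequalities $\mu(\calT_n|\chi_{\tt t})\le 1$ and $\mu(\calT_n|\chi_{\tt t})\ge 1$ separately, and to reduce the whole argument to the downward case. Indeed, an upward tree is obtained from a downward one by reversing every edge and swapping $\mathfrak m$ with $\mathfrak M$; this operation reverses each measurement path while preserving the set of nodes it touches, hence it leaves every $\pp(U)\triangle\pp(W)$ (and thus $\mu$) unchanged, and it carries the upward placement $\chi_{\tt t}$ exactly onto the downward one. So I would fix a downward tree, where $\mathfrak m$ is the root $r$, $\mathfrak M$ is the set of leaves, and $\pp(\calT_n|\chi_{\tt t})$ is precisely the family of root-to-leaf paths; these are simple and loop-free, hence admissible under both $\CSP$ and $\CAP^-$. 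The key bookkeeping fact I would set up first is that $\pp(v)$ is in bijection with the leaves of the subtree rooted at $v$: each leaf determines a unique root-to-leaf path, and that path meets $v$ iff the leaf lies below $v$.

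For the upper bound I would simply invoke Lemma \ref{lem:bd}. In a downward tree the only input node is the root, which has in-degree $0$ and is therefore a \emph{simple} source, so the set $K$ of complex sources is empty; every other node has exactly one parent, i.e.\ in-degree $1$, whence $\min_{v\in R}\deg_{\tt i}(v)=1$ and $\hat\delta(\calT_n)=1$. Lemma \ref{lem:bd} then yields $\mu(\calT_n|\chi_{\tt t})\le\hat\delta(\calT_n)=1$, following the recipe of Section~\ref{subsubsec:ub}.

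For the lower bound I would establish $1$-identifiability directly, i.e.\ show $\pp(U)\triangle\pp(W)\ne\emptyset$ for all distinct $U,W$ with $|U|,|W|\le 1$, as in Section~\ref{subsubsec:lb}. The cases where one set is empty reduce to $\pp(v)\ne\emptyset$, which holds because every node lies on some root-to-leaf path (the root on all of them, an internal node on any path to a descendant leaf, a leaf on its own path). The substantive case is $U=\{u\}$, $W=\{w\}$ with $u\ne w$, where I must produce a path through exactly one of them. If $u$ and $w$ are incomparable in the tree order, any path from $r$ through $u$ down to a leaf below $u$ cannot contain $w$ (it meets only ancestors and descendants of $u$, and $w$ is neither), so it separates them. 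The delicate case, and the \textbf{main obstacle}, is when $u$ and $w$ are comparable, say $u$ a strict ancestor of $w$: then every path reaching $w$ first passes through $u$, so $w$ cannot be isolated, and one must instead route a path through $u$ that avoids $w$. Such a path exists precisely because $u$, being a non-leaf, branches: pick at $u$ a child off the unique $u$-to-$w$ path and extend it to any leaf below that child. This is exactly the step that uses the tree's branching structure (no internal node of out-degree $1$); it is genuinely indispensable, since a tree degenerating into a single directed line would force every pair of nodes to share the same unique path and collapse $\mu$ to $0$, which is why the line-free structure discussed earlier is the relevant hypothesis. Combining the two bounds gives $\mu(\calT_n|\chi_{\tt t})=1$.
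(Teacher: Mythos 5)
Your proposal is correct and follows essentially the same route as the paper: the lower bound is the paper's argument in substance (incomparable pairs are separated by any root-to-leaf path through one of them, and for comparable pairs the line-free branching at the ancestor lets you route around the descendant), while your upper bound via Lemma \ref{lem:bd} with $\hat\delta(\calT_n)=1$ is exactly the degree-based construction the paper uses inside its own proof (the pair $U=\{u\}$, $W=\{u,w\}$ with $w$ a child of $u$), as the paper itself notes when saying the upper bounds follow from the degree results of Section \ref{sec:struct-lim}. The only packaging differences---reducing the upward case to the downward one by edge reversal with swapped monitors, and treating the empty-set case explicitly---are harmless and consistent with the paper's symmetric handling of the two orientations.
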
 
\begin{proof}
We assume the tree to be line-free (LF) so that the bound depends only on the topology and not on the fact that contains a line. 
Consider a node $u$ in $\calT_n$. Since $\calT_n$ is LF $u$ has either in-degree $\geq 2$ or out-degree $\geq 2$. 
According to whether the tree is downward or upward, one of the two cases in Figure \ref{fig:dirtrees} can happen:
\begin{figure}[h!]
\begin{center}
\begin{scriptsize}
\begin{tikzpicture} [scale=1,triangle/.style = {draw, regular polygon, regular polygon sides=3 },
    border rotated/.style = {shape border rotate=180}]
% CASO 1
\node[triangle] at (0,0.9) {$\white{t}$};
\node (rm) at (0,2) {$\mathfrak m$};

\node (rm1)  at (0,1.3){};
\draw[->,dotted] (rm) -- (rm1);
%\node at (0,1.1) (c)  {{\bf Case 1}};
%\node at (0,1.3) (e) {$w$};
\node at (0,0.7) (c) {};
\node at (0,1) {};
\node at (0,0) (d)  {$u$};
\node at (0,-0.1) (f1) {};
\node at (0.5,-0.7) (f3) {$w$};
\node[triangle] at (0.5,-1.3) {$\white{t}$};
\node at (-0.5,-0.7) (f4) {$z$};
\node[triangle] at (-0.5,-1.3) {$\white{t}$};
\draw[->] (f1) -- (f3);
\draw[->] (f1) -- (f4);
\draw[->] (c)--(d);

\node (M1) at (-0.9,-2.3) {$\mathfrak M_1$};
%\node (M2) at (-0.1,-2.3) {$M_s$};
\draw[->,dotted] (-0.9,-1.6)-- (M1);
%\draw[->,dotted] (-0.1,-1.6)-- (M2);
\node (M11) at (0.5+0.4,-2.3) {$\mathfrak M_{t}$};
\draw[->,dotted] (0.5+0.4,-1.6)-- (M11);
\draw [-,dotted] (M1)--(M11);

% CASO 2
\node[triangle] at (4,0.9) {$\white{t}$};
%\node at (0,1.1) (c)  {{\bf Case 1}};
%\node at (0,1.3) (e) {$w$};
\node at (4,0.7) (1c) {};
\node at (4,1) {};
\node at (4,0) (1d)  {$u$};
\node at (4,-0.1) (1f1) {};
\node at (4.5,-0.7) (1f3) {$w$};
\node[triangle] at (4.5,-1.3) {$\white{t}$};
\node at (3.5,-0.7) (1f4) {$z$};
\node[triangle] at (3.5,-1.3) {$\white{t}$};
\draw[<-] (1f1) -- (1f3);
\draw[<-] (1f1) -- (1f4);
\draw[<-] (1c)--(1d);
\node (m1) at (4-0.9,-2.3) {$\mathfrak m_1$};
\draw[<-,dotted] (4-0.9,-1.6)-- (m1);
\node (m11) at (4+0.5+0.4,-2.3) {$\mathfrak m_{t}$};
\draw[<-,dotted] (4+0.5+0.4,-1.6)-- (m11);
\draw [-,dotted] (m1)--(m11);
\node (rM) at (4,2) {$\mathfrak M$};

\node (rM1)  at (4,1.3){};
\draw[<-,dotted] (rM) -- (rM1);
%\draw[dotted] (e)--(c);
\end{tikzpicture}
\end{scriptsize}
\end{center}
\caption{\em Directed trees with monitor placement $\chi_{\tt t}$.}
\label{fig:dirtrees}
\end{figure}

For the upper bound: fix $W=\{u,w\}$ and  $U=\{u\}$. $\pp(U)\subseteq \pp(W)$. Moreover in both cases each path 
passing through $w$ is also touching $u$. Hence  $\pp(\{w\}) \subseteq \pp(\{u\})$. Therefore  $\pp(W) \subseteq \pp(U)$ and then $\pp(U)=\pp(W)$ and $\pp(U)\triangle \pp(W) =\emptyset$.
For the lower bound, let $u$ and $w$ be two distinct nodes in  $\calT_n$.  Let $U=\{u\}$ and $W=\{w\}$. 
Each node in $\calT_n$ is on some path from the root to a leaf.
If $u$ and $w$ lie on different paths, then clearly there are paths in $\pp(U)$ but not in $\pp(W)$. Hence $\pp(U)\triangle \pp(W) \not =\emptyset$. If $u$ and $w$ lie on the same path $p$ and say that $p$ meets $w$ before $u$.
Let $p_w$ be the subpath of $p$ truncated at node $w$. Let $w_1\in N_{\tt o}(w)$ be the neighbour of $w$ lying on $p$. Since $\calT_n$ is LF
 there is necessarily another node $w_2 \not =w_1, w_2\in N_{\tt o}(w)$ and in $\calT_n$ there is a path $q$ from $w_2$ to a leaf. Hence the concatenation of $p_w$ with $q$ is a path from the root to a leaf touching $w$ but not $u$. Hence in $\pp(W)$ but not in $\pp(U)$. Hence 
 $\pp(U)\triangle \pp(W) \not =\emptyset$.
\end{proof}

\noindent{\bf \em Optimality of $\chi_{\tt t}$.}
Notice that the monitor placement $\chi_{\tt t}$ in both cases is optimal. Consider the downward case: if we modify $\chi_{\tt t}$ by removing  one 
output monitor from a leaf, say $u$, then $\mu(\calT_n)=0$: let $v$ be the node parent of $u$ and let $w$ be its other son.
From $\{w\}$ and  $\{v\}$ pass exactly one path. Hence $\mu(\calT_n)<1$.

%\red{Our results might be refined to hold for directed tree topologies where the monitors are not placed on every source and 
%target node. In  particular it is sufficient.....[to do. sketch]}

%\subsection{Bounded degree graphs}
%\red{This section can be removed and simply saying in new Section 3 when we explin the Lemma for the undirected case that is tru also for directed one.}

%\subsection{Upward Trees}

\subsection{Grids} 
Can we find  topologies whose maximal identifiability is strictly greater than 1?
We analyze $2$-dimensional directed grid $\calH_n$.
\begin{figure}[h!]
\begin{center}
\begin{scriptsize}
\tikz[scale=1.3] {
%\node (o) at (0.75,1.1) {$\calH_4$};

\node (m1) at (-0.5,0.5) {$\mathfrak m$};
\node (m2) at (0.5,0.6) {$\mathfrak m$};
\node (m3) at (1,0.6) {$\mathfrak m$};
\node (m4) at (1.5,0.6) {$\mathfrak m$};
%\draw[draw=black,dotted] (m1)-- (a);
\coordinate (a) at (0,0);
\draw[draw=black,dotted,->] (m1)-- (0,0+0.05);
\draw[draw=black,dotted,->] (m2)-- (0.5,0+0.05);
\draw[draw=black,dotted,->] (m3)-- (1,0+0.05);
\draw[draw=black,dotted,->] (m4)-- (1.5,0+0.05);

\node (m13) at (0,-1.5-0.6) {$\mathfrak M$};
\draw[draw=black,dotted,->] (0,-1.5-0.05) -- (m13);
\node (m14) at (0.5,-1.5-0.6) {$\mathfrak M$};
\draw[draw=black,dotted,->] (0.5,-1.5-0.05) -- (m14);
\node (m15) at (1,-1.5-0.6) {$\mathfrak M$};
\draw[draw=black,dotted,->] (1,-1.5-0.05) -- (m15);
%\node (m16) at (1.5,-1.5-0.6) {$M$};
%\draw[draw=black,dotted,->] (1.5,-1.5-0.05) -- (m16);

\draw [fill=red] (a) circle [radius=0.05];
\coordinate (b) at (0.5,0);
\draw [fill=blue] (b) circle [radius=0.05];
\coordinate (c) at (1,0);
\draw [fill=blue] (c) circle [radius=0.05];
\coordinate (d) at (1.5,0);
\draw [fill=green] (d) circle [radius=0.05];

\draw[draw=black, ->]  (0+0.05,0) -- (0.5-0.05,0);
\draw[draw=black, ->]  (0.5+0.05,0)--(1-0.05,0);
\draw[draw=black, ->]  (1+0.05,0)--(1.5-0.05,0);

\coordinate (a1) at (0,-0.5);
\draw [fill=blue] (a1) circle [radius=0.05];
\coordinate (b1) at (0.5,-0.5);
\draw [] (b1) circle [radius=0.05];
\coordinate (c1) at (1,-0.5);
\draw [] (c1) circle [radius=0.05];
\coordinate (d1) at (1.5,-0.5);
\draw [] (d1) circle [radius=0.05];
%Orizzontal
\draw[draw=black, ->]  (0+0.05,-0.5) -- (0.5-0.05,-0.5);
\draw[draw=black, ->]  (0.5+0.05,-0.5)--(1-0.05,-0.5);
\draw[draw=black, ->]  (1+0.05,-0.5)--(1.5-0.05,-0.5);
%Vertical
\draw[draw=black, ->]  (0,0-0.05) -- (0,-0.5+0.05);
\draw[draw=black, ->]  (0.5,0-0.05) -- (0.5,-0.5+0.05);
\draw[draw=black, ->]  (1,0-0.05) -- (1,-0.5+0.05);
\draw[draw=black, ->]  (1.5,0-0.05) -- (1.5,-0.5+0.05);

\coordinate (a2) at (0,-1);
\draw [fill=blue] (a2) circle [radius=0.05];
\coordinate (b2) at (0.5,-1);
\draw [] (b2) circle [radius=0.05];
\coordinate (c2) at (1,-1);
\draw [] (c2) circle [radius=0.05];
\coordinate (d2) at (1.5,-1);
\draw [] (d2) circle [radius=0.05];
%Orizzontal
\draw[draw=black, ->]  (0+0.05,-1) -- (0.5-0.05,-1);
\draw[draw=black, ->]  (0.5+0.05,-1)--(1-0.05,-1);
\draw[draw=black, ->]  (1+0.05,-1)--(1.5-0.05,-1);
%Vertical
\draw[draw=black, ->] (0,-0.5-0.05) -- (0,-1+0.05);
\draw[draw=black, ->]  (0.5,-0.5-0.05) -- (0.5,-1+0.05);
\draw[draw=black, ->]  (1,-0.5-0.05) -- (1,-1+0.05);
\draw[draw=black, ->]  (1.5,-0.5-0.05) -- (1.5,-1+0.05);

% $ ROW
\coordinate (a3) at (0,-1.5);
\draw [fill=green] (a3) circle [radius=0.05];
\coordinate (b3) at (0.5,-1.5);
\draw [] (b3) circle [radius=0.05];
\coordinate (c3) at (1,-1.5);
\draw [] (c3) circle [radius=0.05];
\coordinate (d3) at (1.5,-1.5);
\draw [] (d3) circle [radius=0.05];
%Orizzontal
\draw[draw=black, ->]  (0+0.05,-1.5) -- (0.5-0.05,-1.5);
\draw[draw=black, ->]  (0.5+0.05,-1.5)--(1-0.05,-1.5);
\draw[draw=black, ->]  (1+0.05,-1.5)--(1.5-0.05,-1.5);
%Vertical
\draw[draw=black, ->] (0,-1-0.05) -- (0,-1.5+0.05);
\draw[draw=black, ->]  (0.5,-1-0.05) -- (0.5,-1.5+0.05);
\draw[draw=black, ->]  (1,-1-0.05) -- (1,-1.5+0.05);
\draw[draw=black, ->]  (1.5,-1-0.05) -- (1.5,-1.5+0.05);

%\node (m5) at (-0.6,0) {$m$};
\node (m6) at (-0.6,-0.5) {$\mathfrak m$};
\node (m7) at (-0.6,-1) {$\mathfrak m$};
\node (m8) at (-0.6,-1.5) {$\mathfrak m$};
%\draw[draw=black,dotted,->] (m5)-- (0-0.05,0);
\draw[draw=black,dotted,->] (m6)-- (0-0.05,-0.5);
\draw[draw=black,dotted,->] (m7)-- (0-0.05,-1);
\draw[draw=black,dotted,->] (m8)-- (0-0.05,-1.5);

\node (m9) at (1.5+0.6,0) {$\mathfrak M$};
\draw[draw=black,dotted,->] (1.5+0.05,0) -- (m9);
\node (m10) at (1.5+0.6,-0.5) {$\mathfrak M$};
\draw[draw=black,dotted,->] (1.5+0.05,-0.5) -- (m10);
\node (m11) at (1.5+0.6,-1) {$\mathfrak M$};
\draw[draw=black,dotted,->] (1.5+0.05,-1) -- (m11);
\node (m12) at (1.5+0.6,-2) {$\mathfrak M$};
\draw[draw=black,dotted,->] (1.5+0.05,-1.5) -- (m12);
}
\end{scriptsize}
\end{center}

\caption{\em A directed grid $\calH_4$ with the monitor placement $\chi_{\tt g}$.}
\label{fig:dirgridmon}
\end{figure}
%Notice that in a 2-dimensional grids we have $2n$ source nodes and $2n$ target nodes. 
%So when we say that a set of measurement paths is a grid we assume to have $4n$ monitor nodes linked to sources and target nodes. 
%We later discuss the case when we only have fewer monitor nodes but still a grid of measurement paths. Notice moreover that {\em SLH} trivially holds for grids.
Let us consider the monitor placement $\chi_{\tt g}$ for $\calH_n$ as in Figure \ref{fig:dirgridmon}. 
Formally $\mathfrak m= \{(1,1),\ldots (1,n),(2,1),\ldots,(n,1)\}$ and $\mathfrak M= \{(n,1),(n,2),\ldots (n,n),(1,n), (2,n),\ldots,(n-1,n)\}$.
%Notice that there are only two {\em simple source} nodes $(1,n)$ and $(n,1)$ (see Subsection \ref{subsec:deg} ).
$(1,1)$ is the only simple source  node and Lemma \ref{lem:bd} can be applied to this case. 

\begin{lemma}
\label{lem:gridub}
Let $n\geq 3$. Then $\mu(\calH_{n}|\chi_{\tt g}) \leq 2$ under $\CAP^-$ and under  $\CSP$.
\end{lemma}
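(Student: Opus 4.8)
The plan is to derive this upper bound directly from the degree bound of \emph{Lemma \ref{lem:bd}}, which states $\mu(G)\le\hat\delta(G)$ for directed $G$ under $\CSP$ and $\CAP^-$, where $\hat\delta(G)=\min\{\min_{v\in R}\deg_{\tt i}(v),\ \min_{v\in K}(\deg_{\tt i}(v)+\deg_{\tt o}(v))\}$. Since $\hat\delta$ is a minimum, it suffices to produce a single witness showing $\hat\delta(\calH_n|\chi_{\tt g})\le 2$; the bound $\mu(\calH_n|\chi_{\tt g})\le 2$ then follows immediately for both routing mechanisms.

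First I would pin down the source structure under $\chi_{\tt g}$. The input nodes $\mathfrak m$ form the left column $\{(1,j)\}$ together with the top row $\{(i,1)\}$, so the set of source nodes $K\cup L$ is exactly $\mathfrak m$, and hence $R=V\setminus\mathfrak m=\{(i,j): i\ge 2,\ j\ge 2\}$. As already observed in the text, $(1,1)$ is the unique simple source while the remaining nodes of $\mathfrak m$ are complex sources, but for this upper bound only $R$ is needed. Every $w=(i,j)\in R$ has $i,j\ge 2$ and therefore exactly the two in-neighbours $(i-1,j)$ and $(i,j-1)$; thus $\deg_{\tt i}(w)=2$ for all $w\in R$. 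Because $n\ge 3$ the set $R$ is nonempty (e.g.\ $(2,2)\in R$), so $\min_{v\in R}\deg_{\tt i}(v)=2$ and consequently $\hat\delta(\calH_n|\chi_{\tt g})\le\min_{v\in R}\deg_{\tt i}(v)=2$. Applying \emph{Lemma \ref{lem:bd}} gives $\mu(\calH_n|\chi_{\tt g})\le 2$ under $\CAP^-$ and $\CSP$.

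For concreteness I would unwind the witness furnished by the $w\in R$ case in the proof of \emph{Lemma \ref{lem:bd}}: taking $w=(2,2)$, set $W=N_{\tt i}(w)=\{(1,2),(2,1)\}$ and $U=W\cup\{w\}$, two distinct node sets of size at most $3$. Since $w\in R$ is not linked to any input monitor, every measurement path visiting $w$ must enter it along an incoming edge, i.e.\ through $(1,2)$ or $(2,1)$, so $\pp(\{w\})\subseteq\pp(N_{\tt i}(w))$ and hence $\pp(U)=\pp(W)$, giving $\pp(U)\triangle\pp(W)=\emptyset$. By the recipe of \emph{Section \ref{subsubsec:ub}} this shows $\calH_n|\chi_{\tt g}$ is not $3$-identifiable, which is exactly $\mu\le 2$.

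I expect essentially no real obstacle here, since the statement is a direct specialization of the structural degree bound. The only points requiring care are: (i) confirming that $R$ coincides with the region $\{i,j\ge 2\}$, so that each $R$-node has in-degree exactly $2$ (a border node such as $(1,j)$ or $(i,1)$ would be a source and hence excluded from $R$); and (ii) noting that $\calH_n$ is a DAG, so that $\CAP^-$ and $\CSP$ induce the same family of simple, acyclic measurement paths and the argument applies uniformly to both. The substantive work for grids lies not in this upper bound but in the matching lower bound $\mu(\calH_n|\chi_{\tt g})\ge 2$.
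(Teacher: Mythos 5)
Your proposal is correct and matches the paper's route exactly: the paper gives no separate proof of Lemma \ref{lem:gridub}, deriving it directly from Lemma \ref{lem:bd} after observing that under $\chi_{\tt g}$ the node $(1,1)$ is the only simple source (so $\hat\delta(\calH_n|\chi_{\tt g})=2$, witnessed both by the in-degree-$2$ nodes of $R$ that you use and by the corner complex sources $(1,n),(n,1)$). Your explicit unwinding of the witness $U=\{(2,2),(1,2),(2,1)\}$, $W=\{(1,2),(2,1)\}$ is a sound, if redundant, addition.
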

%\red{Check again this proof and use of $S$ and $S^+$. it seems we are not using $S^+$.}
To prove a matching lower bound on  $\mu(\calH_{n}|\chi_{\tt g})$ we prove that any two distinct node sets  $U$ and $W$ of size at most $2$
can be separated by a path in $\pp(\calH_{n}| \chi_{\tt g})$.  
Since in $\CSP$ and $\CAP^-$ we do not allow $\DLP$ paths, we have to
be careful that no path separating $U$ from $W$  can be either $\mathfrak m \!\cdot \! (1,n)(1,n) \! \cdot \! \mathfrak M$ or $\mathfrak m \!\cdot \! (n,1)(n,1) \! \cdot \! \mathfrak M$. With this aim we assign a special role to the complex sources $(1,n)$ and $(n,1)$ (green nodes in Figure \ref{fig:dirgridmon}) and we consider the following assumption which implies (being in fact stronger) that no $\DLP$ paths will separate sets of nodes. 
\begin{assumption}
\label{ass:nodlp}
Nodes $(1,n)$ and $(n,1)$ can be endpoint but never starting point of a path starting in $\mathfrak m$ and ending in $\mathfrak M$. 
\end{assumption}
Let us denote with $V$ the nodes in $\calH_n$ and with $V^-$  the nodes of $\calH_n$  except for $(1,n)$ and $(n,1)$.
By the definition of $\chi_{\tt g}$, $\mathfrak m$ and $\mathfrak M$ are both formed by the border nodes. Hence to fulfill our assumption we define $S=\mathfrak m\setminus\{(1,n),(n,1)\}$, and $T=\mathfrak M$. Given a node  $u$ in $V$, let $S(u)= \{v \in V^- \;| \mbox{$\exists$ a path from $v$ to $u$ in  $\calH_n$} \} $ and 
$T(u)= \{v \in V \;| \mbox{$\exists$ a path from $u$ to $v$ in  $\calH_n$} \}$.

The following Lemmas give a way to build paths  avoiding specific nodes. We always assume 
$n\geq 3$ since otherwise, independently of $d$, $\calH_{n,d}$ would have no node with degree $2d$. 
\begin{lemma}
\label{lem:dgrid-in}
Let  $n\geq 3$. Let $u$ be a node in $V^-$ and $w \in S(u)$ with $w\not =u$. There is a path $p^w_{u}$ from a node in $S$ to $u$ not touching $w$.
 \end{lemma}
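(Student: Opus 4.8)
The plan is to exploit the lattice structure of the directed grid. In $\calH_n$ reachability is coordinatewise: there is a directed path from $v=(v_1,v_2)$ to $u=(u_1,u_2)$ iff $v_1\le u_1$ and $v_2\le u_2$, and every such path is monotone, increasing exactly one coordinate by one at each step. Recall also that $S=\mathfrak m\setminus\{(1,n),(n,1)\}=\{(1,j):1\le j\le n-1\}\cup\{(i,1):1\le i\le n-1\}$, and that the hypothesis $w\in S(u)$, $w\neq u$ means $w$ is a non-corner node with $w_1\le u_1$, $w_2\le u_2$, $w\neq u$. So the task reduces to: produce a monotone directed path inside the rectangle $[1,u_1]\times[1,u_2]$ that ends at $u$, starts on the top row or the left column at a node of $S$ (never at a forbidden corner), and misses $w$. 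The degenerate $u=(1,1)$ is vacuous since then $S(u)=\{u\}$ admits no $w\neq u$.

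First I would treat the \emph{interior} case $2\le u_1\le n-1$ and $2\le u_2\le n-1$ by building two monotone paths from $S$ to $u$ that are internally vertex-disjoint, i.e.\ share only the endpoint $u$: a column path $A_u=(1,u_2)(2,u_2)\cdots(u_1,u_2)$ that descends into $u$ from above, and a row path $B_u=(u_1,1)(u_1,2)\cdots(u_1,u_2)$ that enters $u$ from the left. Both starts $(1,u_2)$ and $(u_1,1)$ lie in $S$ precisely because $u_2\le n-1$ and $u_1\le n-1$, and the two paths meet only at $u$. Since $w\neq u$ is a single node, it can lie on at most one of $A_u,B_u$; I then take the other one as $p^w_u$. (Conceptually this is a Menger-type statement: no single internal vertex separates the rich border $S$ from $u$.)

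Next I would dispose of the \emph{near-border} geometries, $u_1=1$ or $u_2=1$. Here every monotone path to $u$ is confined to the single row (resp.\ column) through $u$, so the two-path argument collapses; but then $w\le u$, $w\neq u$ forces $w$ to sit on that same corridor strictly before $u$, and I simply start the corridor path one step past $w$, at $(1,w_2{+}1)$ (resp.\ $(w_1{+}1,1)$). This starting node is still in $S$ exactly because $u\neq(1,n),(n,1)$ forces $u_2\le n-1$ (resp.\ $u_1\le n-1$), so the start is not the excluded corner.

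The delicate point, and the step I expect to be the main obstacle, is the interaction with the excluded corners when $u$ lies on a \emph{far border} ($u_1=n$ or $u_2=n$, in particular $u=(n,n)$). There the natural entry points of $A_u$ or $B_u$ would be exactly $(1,n)$ or $(n,1)$, which are not in $S$ and are barred as starting nodes by Assumption~\ref{ass:nodlp}. I would resolve this by bending the offending approach one step earlier, entering column $u_2$ from the adjacent column $n-1$ (resp.\ entering row $u_1$ from row $n-1$) and turning into $u$ only at the last edge; for $u=(n,n)$ both approaches are bent, and I would offset their turning points (so one uses column $n-1$ down to row $n-1$ and the other uses row $n-1$ only up to column $n-2$) so that they still share only $u$. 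Choosing the bends so that $A_u$ and $B_u$ remain internally disjoint, corner-free, and rooted at genuine non-corner nodes of $S$ is what makes a single $w$ still miss one of them; the hypothesis $n\ge 3$ is exactly what guarantees enough room for these detours. Verifying internal disjointness of the bent paths and that their start nodes lie in $S$ is the only genuinely case-dependent bookkeeping, and it is what the formal argument must nail down.
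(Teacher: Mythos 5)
Your proposal is correct, but it takes a genuinely different route from the paper's. The paper proves Lemma \ref{lem:dgrid-in} by a three-line induction on $S(u)$: if $u\in S$ the one-node path at $u$ already avoids $w$ (the base case, phrased there as $S(u)=\{u\}$); otherwise $u$ has two in-neighbours, so some $w_1\in N_{\tt i}(u)$ with $w_1\neq w$ exists, $S(w_1)\subset S(u)$, and the inductive path $p^w_{w_1}$ is extended by the edge $(w_1u)$. You instead give an explicit, Menger-flavoured construction: in the interior case, two monotone $S$--$u$ paths (the column approach $A_u$ and the row approach $B_u$) that are internally disjoint, so the single node $w$ misses at least one; corridor truncation for $u_1=1$ or $u_2=1$; and offset bends to dodge the excluded corners. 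Your cases check out, including the $(n,n)$ offsets, with one correction to the bookkeeping you yourself flag: in the single-bend case, say $u=(n,u_2)$ with $2\le u_2\le n-1$, ``turning into $u$ only at the last edge'' via $(n-1,u_2)$ makes the bent path share the internal node $(n-1,u_2)$ with $A_u$ (and a $w$ sitting there would hit both paths); you need the same offset you use at the corner, i.e.\ bend at $(n-1,u_2-1)(n,u_2-1)(n,u_2)$, after which internal disjointness and the start $(n-1,1)\in S$ are immediate (passing \emph{through} $(n,1)$ when $u_2=2$ is harmless, since Assumption \ref{ass:nodlp} only bars the corners as starting points and $w\in V^-$ can never equal them). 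Comparing the two: the paper's recursion is shorter, needs no border case analysis at all (within $V^-$ the nodes of in-degree $<2$ all lie in $S$, where the base case fires), and generalizes verbatim to $\calH_{n,d}$ — which is what the unproved Theorem \ref{thm:hg} silently relies on — whereas your argument is tied to the planar geometry; on the other hand, your construction produces concrete paths (useful algorithmically) and, in the interior case, proves a slightly stronger uniform statement: one fixed pair of internally disjoint $S$--$u$ paths works simultaneously for every choice of $w$, so no single vertex separates $S$ from an interior $u$, a fact the inductive proof does not exhibit.
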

 \begin{proof}
 By induction on $S(u)$. If $S(u)=\{u\}$ for some $u \in S$, then $u$ is linked to an input monitor (notice in $S$ we do not have the two mentioned complex sources) and since $u\not =w$, then  $p^w_u$ is the path made by the only node $u$.
In the inductive hypothesis $|N_{\tt i}(u)|=2$, Hence there is $w_1 \in N_{\tt i}(u)$ such that $w_1\not  = w$. Since  $|N_{\tt i}(u)|=2$ , then $S(w_1) \subset S(u)$.  By induction there is a path $p^w_{w_1}$ from $S$ to $w_1$ avoiding $w$. Then define as $p^w_u$, the path concatenating $p^w_{w_1}$ with $u$.\end{proof}
 
A similar proof holds also for the nodes in $T$ reachable from  $u$ without worrying about the two nodes $\{(1,n),(n,1)\}$.
%the two mentioned complex sources. 
%We omit details in this version. 
 \begin{lemma}
\label{lem:dgrid-out}
Let  $n\geq 3$. Let $u$ be a node in $\calH_n$ 
and $w \in T(u)$ with $w\not =u$. There is path $q^w_u$ from $u$ to a node in $T$ not touching $w$.
\end{lemma}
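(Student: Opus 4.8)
The plan is to dualise the construction in Lemma~\ref{lem:dgrid-in}, building the path forward along out-edges instead of backward along in-edges, by induction on $|T(u)|$. The key structural fact I would use is that in $\calH_n$ every edge increases exactly one coordinate by one, so $v \in T(u)$ exactly when $v \geq u$ in both coordinates; consequently moving from $u$ to any out-neighbour strictly shrinks the reachable set, which is what makes the induction on $|T(u)|$ go through.

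For the base case I would stop as soon as $u \in T = \mathfrak M$, taking $q^w_u$ to be the single-node path $u$: it runs from $u$ to a node of $T$ (namely $u$ itself) and avoids $w$ because $w \neq u$. This clause absorbs the genuine minimal case $|T(u)|=1$ (forced by $u=(n,n)$), and it also disposes of the two corners $(1,n)$ and $(n,1)$, which lie in $T$; here they appear only as endpoints, in accordance with Assumption~\ref{ass:nodlp}, so unlike Lemma~\ref{lem:dgrid-in} I do not need to exclude them and may take $u$ to be any node of $\calH_n$.

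For the inductive step I would assume $u \notin T$, so both coordinates of $u$ are at most $n-1$ and hence $|N_{\tt o}(u)| = 2$. I would pick an out-neighbour $w_1 \in N_{\tt o}(u)$ with $w_1 \neq w$, which is always possible since $u$ has two distinct out-neighbours and $w$ can equal at most one of them. Because $w_1$ is reachable from $u$ while $u$ is not reachable from $w_1$, we get $T(w_1) \subsetneq T(u)$, so the induction hypothesis applies to $w_1$: if $w \in T(w_1)$ it yields a path $q^w_{w_1}$ from $w_1$ to $T$ avoiding $w$, while if $w \notin T(w_1)$ then $w$ is simply unreachable from $w_1$ and any forward path from $w_1$ to the border (which exists, since increasing coordinates eventually reaches $\mathfrak M$) already avoids it. Prepending the edge $(u,w_1)$ then gives $q^w_u$, and since $u \neq w$ the whole path still avoids $w$.

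The only delicate point — and the step I expect to be the real obstacle — is guaranteeing that a $w$-avoiding forward move always exists; this is exactly the out-degree-$2$ property of the nodes with both coordinates at most $n-1$, and it is also why the degenerate endpoints $(1,n),(n,1)$ must be quarantined into the base case rather than used as intermediate vertices.
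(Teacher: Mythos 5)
Your proof is correct and is essentially the paper's own argument: the paper states Lemma~\ref{lem:dgrid-out} without a separate proof, remarking only that ``a similar proof holds'' by dualizing the induction of Lemma~\ref{lem:dgrid-in}, and your forward induction on $|T(u)|$ --- base case as soon as $u \in T$, inductive step via the out-degree-$2$ property of nodes with both coordinates at most $n-1$ --- is exactly that dualization. If anything, your version is slightly more careful than the paper's template, since you explicitly handle the subcase $w \notin T(w_1)$ (where any forward path from $w_1$ works) and correctly observe that the corners $(1,n),(n,1)$ need no special exclusion here because they lie in $T$ and can only arise as endpoints.
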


Next Claim handles the case when one among $U$ and $W$ contains at least a complex source. In this case to fulfill our assumption, 
we have to show  an $\mathfrak m-\mathfrak M$ path touching exactly one between $U$ and $W$  which is not starting neither with $(1,n)$ nor with $(n,1)$. This immediately implies that this path can be neither  $\mathfrak m \!\cdot \! (1,n)(1,n) \! \cdot \! \mathfrak M$ nor $\mathfrak m \!\cdot \! (n,1)(n,1) \! \cdot \! \mathfrak M$.

\begin{claim}
\label{cl:byc}
Let $U,W$ be  non-empty sets of nodes of $\calH_n$, $n\geq 3$ such that $|U|,|W|\le 2$ and at least one of the complex sources $(1,n)$ or $(n,1)$ belongs to one of them. Then  there is a path from a node in $\mathfrak m$ to a node in $T$ passing though exactly one between $U$ and $W$ fulfilling Assumption \ref{ass:nodlp}. 
\end{claim}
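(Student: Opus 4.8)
The plan is to produce, for the given $U,W$, a single $\mathfrak m$--$\mathfrak M$ path that meets exactly one of them and never starts at a complex source. First I would cut down the cases by symmetry. The coordinate swap $\sigma(x_1,x_2)=(x_2,x_1)$ is an automorphism of $\calH_n$ that fixes $S$ and $T$ setwise and interchanges the two complex sources $(1,n)$ and $(n,1)$; since ``meeting exactly one of $U,W$'' is symmetric in $U$ and $W$, I may assume without loss of generality that $(1,n)\in U$. The structural fact I would keep in mind throughout is the rigidity of $(1,n)$: its only in-neighbour is $(1,n-1)$ and its only out-neighbour is $(2,n)$, so every path through $(1,n)$ runs down a suffix of the left column into $(1,n)$ and then, optionally, right along the bottom row; in particular every such path contains $(1,n-1)$.

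The engine of the construction is the comparability structure of the DAG together with Lemmas \ref{lem:dgrid-in} and \ref{lem:dgrid-out}. Any node lying on a path through a chosen node $t$ is comparable to $t$, so a node of the opposite set that is incomparable to $t$ is avoided for free, and I only have to dodge the comparable ones: a predecessor is dodged on the way into $t$ via Lemma \ref{lem:dgrid-in}, a successor on the way out via Lemma \ref{lem:dgrid-out}. Since Lemma \ref{lem:dgrid-in} produces its incoming segment from a node of $S$, every path I build automatically starts outside $\{(1,n),(n,1)\}$, so Assumption \ref{ass:nodlp} is met with no extra work.

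With these tools I would split into two cases. If $(1,n)\notin W$, then $(1,n)\in U\setminus W$ and I route through $(1,n)$: when $(1,n-1)\notin W$, the one-edge path from $(1,n-1)$ to $(1,n)$ already meets $U$ and, touching only those two nodes, misses $W$; when $(1,n-1)\in W$ every path through $(1,n)$ is blocked, so I instead route through a node of $U\triangle W$ distinct from $(1,n)$, dodging the at most two nodes of the opposite set with the avoidance lemmas. If $(1,n)\in W$, then $(1,n)\in U\cap W$, so $U\triangle W$ consists only of the remaining at-most-one element of each set; I pick $t\in U\triangle W$ and build an $\mathfrak m$--$\mathfrak M$ path through $t$ avoiding the opposite set (which now contains $(1,n)$), distributing the obstacles between the in- and out-lemmas.

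The hard part will be the configuration in which the two nodes to be avoided lie on the \emph{same} side of the chosen node $t$ --- both predecessors or both successors --- since a single application of each lemma no longer suffices. The genuinely blocking sub-case is when the opposite set contains \emph{all} in-neighbours of $t$ (respectively all out-neighbours), so that no monotone path can enter (resp.\ leave) $t$ while dodging them. I expect to resolve this by switching which set I route through. For instance, when $U=\{(1,n),(2,n)\}$ and $W=\{(1,n),(2,n-1)\}$, the node $(2,n)\in U\setminus W$ cannot be reached while avoiding $W$, because its two in-neighbours $(1,n)$ and $(2,n-1)$ both lie in $W$; so I reroute through $(2,n-1)\in W\setminus U$, leaving it via $(3,n-1)$ (available since $n\geq 3$) to skirt $(2,n)$, and using that $(1,n)$ is incomparable to $(2,n-1)$ and hence avoided for free. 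Verifying that such a fallback always exists, together with the bookkeeping that every constructed path begins in $S$ and ends in $\mathfrak M$, is the crux of the argument.
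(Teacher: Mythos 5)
Your plan is architecturally the same as the paper's: the same symmetry reduction to $(1,n)\in U$, the same one-edge path $(1,n-1)(1,n)$ when the in-neighbour is free, and the same fallback of switching to the other set with an $n\geq 3$ detour when $(1,n)$ is walled in (your worked example $U=\{(1,n),(2,n)\}$, $W=\{(1,n),(2,n-1)\}$ is precisely the paper's ``surrounded'' construction with the roles of the sets exchanged); your comparability-plus-Lemmas \ref{lem:dgrid-in}/\ref{lem:dgrid-out} engine is a cleaner packaging of what the paper does case by case. However, the step you explicitly defer as the crux --- ``verifying that such a fallback always exists'' --- genuinely fails, in exactly one configuration of your Case A2: take $U=\{(1,n),(1,n-1)\}$ and $W=\{(1,n-1)\}$ (or its mirror at $(n,1)$). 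Then $U\triangle W=\{(1,n)\}$, so there is no node of $U\triangle W$ distinct from $(1,n)$ to reroute through; and by the rigidity fact you yourself record --- every path through $(1,n)$ fulfilling Assumption \ref{ass:nodlp} must enter it via its unique in-neighbour $(1,n-1)$ --- any admissible path touching $U$ touches the shared node $(1,n-1)\in W$, and conversely. So no path satisfying the Assumption meets exactly one of the two sets: the claim as stated is \emph{false} for this pair, and no completion of your plan (or any plan) can cover it.

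You share this hole with the paper: that pair lands in the paper's case (3) with $U\cap\{(2,n-1),(2,n)\}=\emptyset$, where the exhibited path $(1,n-1)(2,n-1)(2,n)$ is asserted to ``touch only $W$'' --- false, since $(1,n-1)\in U\cap W$. The repair is at the level of the statement rather than the proof: either weaken Assumption \ref{ass:nodlp} so that it forbids only genuine $\DLP$ loops, or carve this pair out of the claim and separate it directly inside Lemma \ref{lem:dgrid} by the path $(1,n)(2,n)$, a legitimate two-node $\CSP$/$\CAP^-$ path (not a $\DLP$) that touches $U$ and misses $W$ even though it starts at $(1,n)$. One further technical point in your engine: Lemma \ref{lem:dgrid-in} only dodges obstacles $w\in S(u)$, and $S(u)$ is by definition contained in $V^-$, so when the predecessor obstacle is $(1,n)$ itself (route node on the bottom row) the lemma does not literally apply; you need the small ad hoc remark --- again a consequence of your rigidity observation --- that one can enter the bottom row from above, thereby avoiding $(1,n)$ without invoking the lemma.
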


\begin{proof} 
Assume without loss of generality that $(1,n) \in U$ (the case where $(n,1) \in U$ is symmetric). Let $N((1,n))$ be the neighbours of $(1,n)$ in $\calH_n$, i.e. $N((1,n))=\{(1,n-1),(2,n)\}$. We distinguish the following 4 cases:
\begin{enumerate}
\item $W\cap N((1,n))=\emptyset$;
\item $W=N((1,n))$
\item $W  \cap N((1,n)) =\{(1,n-1)\}$
\item $W  \cap N((1,n)) =\{(2,n)\}$
\end{enumerate} 

In each of these cases we find an $\mathfrak m-\mathfrak M$ path touching only one between $U$ and $W$ fulfilling Assumption \ref{ass:nodlp}.

In case (1) and in case (4) the path $(1,n-1)(1,n)$, is a path touching $U$ but not $W$ fulfilling our assumption. 

In case (2) $(1,n)$ is completely surrounded by $W$. So we will build a path touching $W$ but not $U$. If the node $(2,n-1)$ is not in $U$ 
then the path  $(1,n-1)(2,n-1)(2,n)$ proves the claim. If instead $(2,n-1)$ is in $U$ we have to avoid it.
We use here that $n\geq 3$ to build the path starting in $(1,n-1)$ going up to $(1,n-2)$, then going right until the node  $(3,n-2)$ and finally 
going down to the $\mathfrak M$ node $(3,n)$. 

In case (3) we distinguish the following two cases according to whether $U \cap \{(2,n-1),(2,n)\}=\emptyset$ or not.
In the first case the path $(1,n-1)(2,n-1)(2,n)$ touches only $W$ and fulfill the assumption.  In the second case, we follow case (2) and avoid both nodes in  
$\{(2,n-1),(2,n)\}$ using the fact that $n\geq 3$. We start in $(1,n-1)$, go up to $(1,n-2)$, then right up to $(3,n-2)$ and finally down to
$(3,n)$. This path touches $W$ but not $U$ and fulfill the assumption. 
 
\end{proof}

\begin{lemma} (Main Lemma)
\label{lem:dgrid}
%\vspace{-1em}
Let $n\geq 3$. $\mu(\calH_n | \chi_{\tt g}) \geq 2$.
\end{lemma}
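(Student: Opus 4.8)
The plan is to show that for every pair of distinct node sets $U,W$ with $|U|,|W|\le 2$ there is an $\mathfrak m$--$\mathfrak M$ path in $\pp(\calH_n\mid\chi_{\tt g})$ meeting exactly one of them; by the criterion in Section~\ref{subsubsec:lb} this gives $\pp(U)\triangle\pp(W)\neq\emptyset$ for all such $U\neq W$, hence $\mu(\calH_n\mid\chi_{\tt g})\ge 2$. If one of the complex sources $(1,n),(n,1)$ lies in $U\cup W$, then Claim~\ref{cl:byc} already produces such a separating path respecting Assumption~\ref{ass:nodlp}, so from now on I may assume $U,W\subseteq V^-$. Every separating path will be obtained by fixing a single node $z$ of the symmetric difference, building an in-path from $S$ to $z$ with Lemma~\ref{lem:dgrid-in}, building an out-path from $z$ to $T$ with Lemma~\ref{lem:dgrid-out}, and concatenating the two. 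The key simplification is that any $\mathfrak m$--$\mathfrak M$ path in the directed grid is coordinate-monotone, so a path through $z$ meets only nodes comparable with $z$: obstacles incomparable with $z$ are avoided for free, and I only have to dodge obstacles lying strictly below or strictly above $z$.

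First I would pick $v\in U\triangle W$ maximal with respect to $\preceq$ and assume, without loss of generality, $v\in U\setminus W$; the goal is then a path through $v$ avoiding $W$. A short counting argument bounds the obstacles above: if $w_1,w_2\in W$ were both $\succ v$, maximality of $v$ would force $w_1,w_2\in U\cap W\subseteq U$, and since $v\in U$ with $|U|\le 2$ this is a contradiction, so \emph{at most one} node of $W$ lies strictly above $v$. Consequently, whenever at most one node of $W$ also lies strictly below $v$, I can finish at once: route the in-path to $v$ around the (at most one) lower obstacle by Lemma~\ref{lem:dgrid-in} and the out-path from $v$ around the (at most one) upper obstacle by Lemma~\ref{lem:dgrid-out}, treating a border $v$, where one of the two segments is trivial, as a degenerate instance. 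The resulting path touches $v\in U$ and no node of $W$.

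The remaining, and genuinely delicate, case is when both nodes $W=\{w_1,w_2\}$ lie strictly below $v$, where a single use of Lemma~\ref{lem:dgrid-in} does not suffice. I expect the crux to be the observation that a monotone in-path can still dodge both obstacles \emph{unless} $\{w_1,w_2\}$ is exactly the pair of in-neighbours of $v=(a,b)$, i.e. $\{(a-1,b),(a,b-1)\}$: any other lower pair can be bypassed by refining the induction of Lemma~\ref{lem:dgrid-in} (choosing at $v$ the predecessor not among the obstacles reduces the task to avoiding a single remaining obstacle), whereas the in-neighbour pair blocks every monotone path into $v$. In the non-blocking sub-case this refined in-path, followed by any out-path from $v$, isolates $v$ and we are done.

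In the single blocking configuration I would switch to isolating $W$ instead. Since $\{w_1,w_2\}\subseteq U$ would force $U=W\ni v$, at least one $w_i\in W\setminus U$; route through such a $w_i$. Stepping from $w_i$ to its successor other than $v$ (namely $(a-1,b+1)$ from $(a-1,b)$, or $(a+1,b-1)$ from $(a,b-1)$, or simply terminating at $w_i$ when it is already an output node) makes $v$ incomparable with the rest of the out-path, so $v$ is dodged automatically; the only remaining node of $U$ to avoid is then a single vertex, handled by Lemmas~\ref{lem:dgrid-in} and~\ref{lem:dgrid-out}. This yields a path meeting $w_i\in W$ but no node of $U$. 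The symmetric ``two above $v$'' situation is treated identically after reversing the roles of $S$ and $T$ and of the two path-construction lemmas. Collecting all cases produces a separating path for every $U\neq W$ with $|U|,|W|\le 2$, which establishes $\mu(\calH_n\mid\chi_{\tt g})\ge 2$.
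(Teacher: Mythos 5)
Your overall architecture is sound and close in spirit to the paper's proof (reduce to $U,W\subseteq V^-$ via Claim~\ref{cl:byc}, isolate one node of $U\triangle W$ by composing the avoidance Lemmas~\ref{lem:dgrid-in} and~\ref{lem:dgrid-out}, and switch to isolating $W$ when the in-paths to $v$ are blocked), and your maximality argument showing at most one node of $W$ lies strictly above $v$ is a genuine simplification. However, the crux of your hard case rests on a characterization that is false: you claim that two obstacles $w_1,w_2\prec v$ block every monotone in-path from $S$ to $v$ \emph{only if} $\{w_1,w_2\}$ is the in-neighbour pair $\{(a-1,b),(a,b-1)\}$ of $v=(a,b)$. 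This fails next to the excluded corners, precisely because Assumption~\ref{ass:nodlp} forbids paths from \emph{starting} at $(1,n)$ and $(n,1)$ (recall $S=\mathfrak m\setminus\{(1,n),(n,1)\}$). Concretely, take $U=\{(2,n)\}$ and $W=\{(1,n-1),(2,n-1)\}$: here $v=(2,n)$ is the unique maximal element of $U\triangle W$, both $W$-nodes lie strictly below $v$, and $W$ is \emph{not} the in-neighbour pair of $v$ (which is $\{(1,n),(2,n-1)\}$). Yet every $S$-path into $(2,n)$ either enters through $(2,n-1)\in W$, or enters column $n$ through $(1,n)$, which can only be reached via $(1,n-1)\in W$ since one may not start at $(1,n)$. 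So the in-path your ``refined induction'' promises does not exist, your switch-to-$W$ fallback is never triggered by your criterion, and the argument produces no separating path for this pair (a separating path does exist, e.g.\ $(1,n-1)\to(1,n)$ with $(1,n)\in\mathfrak M$, but it isolates $W$, not $U$). The symmetric configuration at $(n,2)$ with $W=\{(n-1,1),(n-1,2)\}$ fails the same way.

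A secondary, repairable flaw: even away from the corners, your justification of the dichotomy is too quick. Choosing at $v$ a predecessor not among the obstacles does \emph{not} in general ``reduce the task to avoiding a single remaining obstacle'': for $v=(3,3)$ with obstacles $(2,2)$ and $(1,1)$, both obstacles still lie below the chosen predecessor $(2,3)$, so the recursion must be set up with more care (and a greedy choice of free predecessor can even steer into a dead end, e.g.\ stepping to $(2,2)$ when the obstacles are $\{(1,2),(2,1)\}$). Note that the paper sidesteps both problems by never characterizing blocking configurations: it first builds the candidate path $p_u$ avoiding one designated obstacle, and \emph{whenever the second obstacle $v$ happens to lie on $p_u$} it switches to constructing a path through $v$ avoiding $U$ (via the auxiliary node $z$). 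Adopting that trigger --- switch whenever the first attempt fails, rather than only in the in-neighbour-pair configuration --- or explicitly adding the two corner-adjacent blocking pairs to your switch case, would close the gap.
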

\begin{proof}
%\vspace{-0.5em} 
Let $V$ be the set of nodes of $\calH_n$. We have to prove that for any $U,W \subseteq V$ with $U\not =W$ and such that $|U|,|W|\leq 2$, $\pp(U) \triangle \pp(W) \not = \emptyset$. It is sufficient to  find a path $p \in \calH_n$ from $\mathfrak m$ to $T$ touching exactly one between $U$ and $W$. 
By Claim \ref{cl:byc} we can assume that neither of $U$ and $W$ contain $(1,n)$ and $(n,1)$. So in the rest of the proof we work only with $S$ and no node will be ever $(1,n)$ and $(n,1)$.
  We split in the following cases: 
\begin{enumerate}
\item at least one between $U$ and $W$ has cardinality $1$;
\item both $U$ and $W$  have cardinality $2$.
\end{enumerate}

\noindent{\bf Case 1}.
Assume wlog that $W=\{w\}$. Since $U \triangle W \not =\emptyset$, then there is a node $u \in U \setminus W$, such that $w\not =u$.
$w$ can be either in (1) $S(u)$,  or (2)  in $T(u)$; or (3)  in $V\setminus(S(u) \cup T(u))$. 
%\begin{enumerate} 
%\item $S(u)$;  or 
%\item in $T(u)$; or 
%\item  in 
%$V\setminus(S(u) \cup T(u))$. 
%\end{enumerate}
In case (3) any path $p$  from $S$ to $T$  passing through $u$ is not touching $w$ and proves the claim. In case (1) we use Lemma \ref{lem:dgrid-in} to have a path $p^w_u$ from $S$ to $u$ avoiding $W$. Moreover, any path $p$ from $u$ to $T$ is avoiding $w$. Then the composition of $p^w_u$ with $p$ proves the claim. In case (2) any path $p$ from $S$ to $u$ avoids $w$, and Lemma \ref{lem:dgrid-out} guarantees a path $q^w_u$ from $u$ to $T$ avoiding $w$. Hence the composition of the paths  $p$ and  $q^w_u$  proves the claim.

\noindent{\bf Case 2 }. 
Observe that though $U\triangle W \not =\emptyset$, they might  share a node. So there might be two cases: (A) $|U \cap W| =1$ and (B) $|U \cap W| =0$. In case (A) we fix $u$ to be the node of $U$ not in $W$. In case (B) say $U=\{u_0,u_1\}$ we fix $u$ to be the node in $U$ not reachable in $\calH_n$ by the other node in $U$, i.e. the  $u_i$ such that $u_i \not \in S(u_{1-i})$. Notice that this node always exists since the nodes in $U$ cannot reach each other in $\calH_n$.
As in case (1) we divide in three cases according to the position of $W$ wrt $u$.

\begin{enumerate}[i.]
\item $W \subseteq S(u)$;
\item $W \subseteq T(u)$;
\item $|S(u) \cap W |\leq 1$ and  $|T(u) \cap W |\leq 1$;
\end{enumerate}

In case (iii)  a similar argument as above works. Since $|S(u) \cap W |\leq 1$, then either (if $|S(u) \cap W |=0$) any path from $S$ to $u$ avoids $W$, or (if $|S(u) \cap W|=1$) we can apply Lemma \ref{lem:dgrid-in} to find a path $p_u$  from $S$ to $u$ avoiding $W$.  Using $|T(u) \cap W |\leq 1$, a similar argument works for  finding a path $q_u$ from $u$ to $T$ avoiding $W$. Hence the composition of $p_u$ and $q_u$ is a path from $S$ to $T$ passing from $u$ but avoiding $W$.
In case (i)  we further distinguish two cases and fix the $w$ as follows:
\begin{description} 
\item[(A)] $|U \cap W| =1$. $w$ is the only node in $U \cap W$. 
\item[(B)] $w$ is any node in $W$. Denote by $v$ be the other node in $W$.
\end{description}
In case (A) $U=\{u,w\}$ and $W=\{w,v\}$, hence since $u\not =w$, then by Lemma \ref{lem:dgrid-in} there is a path $p_u^w$  from $S$ to $u$ avoiding $w$.  
Moreover, since $W \subseteq S(u)$ any path $q_u$ from $u$ to $T$ avoids $W$. Hence the path $p_u$ which is the concatenation of  $p_u^w$ with  $q_u$ touches $U$ and avoids $W$. This path proves the claim unless $v \in p_u$, and precisely $v \in p^w_u$, since $W \subseteq S(u)$ and $q_u$ lives only in $T(u)$.

If $v\in p_u$ then we modify $p_u$ into a new path $p_v$ touching $v$,  hence the set $W$, but avoiding $U$ and this will prove the claim. To do this we first identify a node $z$ lying on $p^w_u$ before $u$ but  after $v$ and we consider the subpath $p^w_z$  of $p_u^w$ stopping at $z$, hence touching $v$.  The node $z$ is defined as follows: assume $u$ to be the node $u=(x_1,x_2)$ with $x_1,x_2\in [n]$. Since $p_u^{w}$ is ending at $u$ and  $\calH_n$ is directed, there is a first node  $z_1$ in $p_u^{w}$ such that starting from $z_1$ all the nodes 
$z_1,\ldots z_r,u$ of the subpath of $p_u^{w}$ starting at $z_1$ lie either on the same row ($x_1$) or on the same column ($x_2$) of $u$. $z$ is defined to be either $z_1$ or $z_i$ if $v=z_i$ for some $i \in[r]$. The main properties of $z$ are that: $u \in T(z)$ and that $w\not \in p^w_z$. The first is straightforward. For the latter first notice that before $z$ no node on $p^w_z$ can be $w$ because $p^w_z$ is a subpath of  $p^w_u$. Furthermore $w\not \in T(z)$ since $z$ is by definition on the same border of $u$ and hence $S(u) \cap T(z)$ is the set  of nodes $\{z_1,\ldots z_r\}$ and none of them can be $w$.

Since  $u\in T(z)$ and $z \not = u$\footnote{If $v$ is a source node and $v \in N_{\tt i}(u)$, then $z$ is $v$ itself.}, then we can use Lemma \ref{lem:dgrid-out} on $z$ and $u$ to find a path  $q_z^u$ from $z$ to $T$ avoiding $u$. Define $p_v$  the path concatenating $p_z^w$ with $q_z^u$. $p_v$ touches $v$ but avoids both $u$ and $w$,  hence touches $W$ but avoids $U$. Case (A) is proved.

\begin{figure}[h!]
 \begin{center}
\begin{scriptsize}
\tikz[scale=1.3] {
\coordinate (a) at (0,0);
\draw [] (a) circle [radius=0.05];
\coordinate (b) at (2,0);
\draw [] (b) circle [radius=0.05];
\coordinate (c) at (0,-2);
\draw [] (c) circle [radius=0.05];
\coordinate (d) at (2,-2);
\draw [] (d) circle [radius=0.05];
\coordinate (d) at (0.9,-1.1); %u%
\draw [] (d) circle [radius=0.05];
\coordinate (e) at (0.9,-0.7); %z%
\draw [] (e) circle [radius=0.05];
%a--b
%\node (Case) at (1,-2.3) {Case (i).A};
\node (u) at (0.9+.2,-1.1) {$u$};
\node (su) at (+0.25,-.75) {\tiny $S(u)$};
\node (tu) at (2.25,-1.55) {\tiny $T(u)$};
\node (tz) at (2.25,-0.85) {\tiny $T(z)$};
\node (pu1) at (0.6,-.2) {\tiny $p^w_u$};
\node (pu2) at (1.1,-1.75) {\tiny $q_u$};
\node (quz) at (1.7,-1.8) {\tiny $q^u_z$};
\node (z) at (0.9+.2,-0.7) {$z$};
%horiz grid
\draw[draw=black,dotted, ->]  (0+0.05,0) -- (2-0.05,0);
\draw[draw=black,dotted, ->]  (0+0.05,-2) -- (2-0.05,-2);
%Vert grid
\draw[draw=black,dotted, ->]  (0,0-.05) -- (0,-2+.05);
\draw[draw=black,dotted, ->]  (2,0-.05) -- (2,-2+.05);

%horiz u
\draw[draw=black,dotted, -]  (0+0.05,-1.1) -- (0.9-0.05,-1.1);
\draw[draw=black,dotted, -]  (0.9+0.05,-1.1) -- (2-0.05,-1.1);
\draw[draw=black,dotted, -]  (0.9+0.05,-0.7) -- (2-0.05,-0.7);
% Vert u--Z
\draw[draw=black,dashed, ->]  (0.9,-0.7-.05) -- (0.9,-1.1+.05);
\draw[draw=black,dotted, -]  (0.9,-1.1-.05) -- (0.9,-2+.05);

%\draw[draw=black,dotted, -]  (0.9,-0.7-.05) -- (0.9,-2+.05);
%vert grid--Zz
\draw[draw=black,dotted, -]  (0.9,0-.05) -- (0.9,-0.7+.05);
%p_v
\draw [dashed,->] (0.2,0) -- (0.4,-0.2)-- (.6,-.5)--(0.9,-.7);
% -- 
\draw [dashed,->] (0.9,-1.1) -- (1.1,-1.3)-- (1.2,-1.5)--(1.3,-2);

\draw [dashed,->] (0.9,-0.7) -- (1.4,-1.1)-- (1.4,-1.3)--(1.5,-1.5)--(2,-1.7);
}
\end{scriptsize}
\end{center}
\caption{\em Case (i).A}
\label{fig:dirgrid}
\end{figure}
In case (B) $w\not \in U$ and let $u_1$ be the other node of $U$. So $U=\{u,u_1\}$ and $W=\{w,v\}$.
The same proof of Case(A) works here too. If $v \in p_u$ however we have to be slightly more careful.
Assume without loss of generality  that $u_1$ appears before $u$ on $p_u$ (the other case is exactly the same swapping $u$ and $u_1$). 
We want to build a path $p_v$ avoiding both $u_1$ and $u$.
Since $v \in S(u)$, we can have two cases: (1)  $v$ is before both $u_1$ and $u$;  and (2) $v$ is in between $u_1$ and $u$. 
Let $t_v$ be the subpath of $p_u$ ending in $v$. In case (1) we use a first time Lemma \ref{lem:dgrid-out} on $v$ and $u_1$ to find a path $t$ from $v$ to $T$ avoiding $u_1$. If $t$ still passes through $u$, then we notice that $u$ cannot be on the border of the grid, 
since otherwise $u_1$ would also be on the same border and hence $t$ would not avoid $u_1$. Hence $u$ is an internal node in the grid.
Let $\{i_1,i_2\}$ be the incoming nodes in $u$. Only one can be $u_1$, say $i_1$. Hence $t$ is entering in $u$ through $i_2$. Let 
$t_2$ be the subpath of $t$ ending at $i_2$. Again by Lemma \ref{lem:dgrid-out} on $i_2$ and $u$ we can find a path $q$ from $i_2$ to $T$ avoiding $u$. The path $p_v$ concatenating $t_2$ with $q$ proves the Claim.
Case (2) is easier. First we apply Lemma \ref{lem:dgrid-in} on $v$ and $u_1$ to have path $t$ from $S$ to $v$ avoiding $u_1$.
Then we apply Lemma \ref{lem:dgrid-out} on $v$ and $u$ to have a path $q$ from $v$ to $T$ avoiding $u$. Then the concatenation
of $t$ with $q$ proves the Claim. In case (ii) a similar argument of case (i) but on $T(u)$ works. 
We left the details to the reader. 
\end{proof}

Together previous Lemma and Lemma \ref{lem:gridub}, imply the following.

\begin{theorem}
\label{thm:dirgrid}
Let $n \in \mathbb N$, $n\geq 3$. Then $\mu(\calH_n |\chi_{\tt g})=2$.
\end{theorem}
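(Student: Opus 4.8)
The plan is to pin down $\mu(\calH_n \mid \chi_{\tt g})$ by sandwiching it between the two bounds already established, so the theorem reduces to juxtaposing them. Concretely, Lemma \ref{lem:gridub} supplies the upper bound $\mu(\calH_n \mid \chi_{\tt g}) \leq 2$ (valid under both $\CAP^-$ and $\CSP$), and the Main Lemma (Lemma \ref{lem:dgrid}) supplies the matching lower bound $\mu(\calH_n \mid \chi_{\tt g}) \geq 2$, for every $n \geq 3$. Since $2 \leq \mu(\calH_n \mid \chi_{\tt g}) \leq 2$, we conclude $\mu(\calH_n \mid \chi_{\tt g}) = 2$, which is exactly the claim.

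At the level of the theorem itself I expect no obstacle: once the two inequalities are in hand, combining them finishes the proof, since all the substance has been absorbed into the two preceding lemmas. The genuinely hard part sits in the lower bound (Lemma \ref{lem:dgrid}), whose proof must produce, for every pair of distinct node sets $U, W$ with $|U|, |W| \leq 2$, an $\mathfrak m$-to-$\mathfrak M$ path touching exactly one of them, all while respecting Assumption \ref{ass:nodlp} so that no degenerate loop path through the complex sources $(1,n)$ or $(n,1)$ is used. That argument leans on the path-surgery constructions of Lemma \ref{lem:dgrid-in} and Lemma \ref{lem:dgrid-out} together with the case split on the position of $W$ relative to $S(u)$ and $T(u)$, and on Claim \ref{cl:byc} to dispatch the sets containing a complex source.

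By contrast the upper bound is structurally cheap: it follows from the minimal in-degree of $\calH_n$ under $\chi_{\tt g}$ via Lemma \ref{lem:bd}, since an interior node of the grid has in-degree $2$, forcing $\hat\delta(\calH_n) \leq 2$ and hence $\mu \leq 2$. Thus, if I had to reprove the theorem end-to-end rather than cite the lemmas, the effort would be concentrated entirely in the combinatorial path constructions underlying the lower bound, not in the final combination performed here.
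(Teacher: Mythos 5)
Your proposal is correct and matches the paper exactly: the paper proves Theorem \ref{thm:dirgrid} precisely by combining the upper bound of Lemma \ref{lem:gridub} (which the paper too derives from Lemma \ref{lem:bd}, noting $(1,1)$ is the only simple source) with the lower bound of the Main Lemma (Lemma \ref{lem:dgrid}). You have also correctly located where the real work lies, namely the path constructions of Lemmas \ref{lem:dgrid-in}, \ref{lem:dgrid-out} and Claim \ref{cl:byc} underlying the lower bound.
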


Previous result can be easily proved for grids of dimension 
$d>2$ generalizing the definitions and the proofs to the case of a generic $d$ . We left the details to the interested reader.  

\begin{theorem}
\label{thm:hg}
 Let $d\!,n\!\in \!\mathbb N$, $d\!>\!2$ and $n\!\geq\!3$. Then $\mu(\calH_{n,d}|\chi_{\tt g})= d$.
\end{theorem}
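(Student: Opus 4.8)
The plan is to combine the degree-based upper bound with a full combinatorial lower bound that replays the two-dimensional argument of Lemma~\ref{lem:dgrid} in $d$ dimensions. For the upper bound I would invoke Lemma~\ref{lem:bd}: every node of $\calH_{n,d}$ all of whose coordinates are $\geq 2$ lies in $R$ and has in-degree exactly $d$, so $\min_{v\in R}\deg_{\tt i}(v)=d$ and hence $\hat\delta(\calH_{n,d})\leq d$, giving $\mu(\calH_{n,d}|\chi_{\tt g})\leq d$. (One checks the complex-source term also equals $d$, attained at $(1,n,\dots,n)$, so in fact $\hat\delta=d$.) The placement $\chi_{\tt g}$ generalizes verbatim: $\mathfrak m=\bigcup_{i\in[d]}\partial_i=\{x:\exists i,\ x_i=1\}$ and $\mathfrak M=\{x:\exists i,\ x_i=n\}$, and the role of the two complex sources $(1,n),(n,1)$ is now played by the \emph{mixed border} nodes $\mathcal{X}=\mathfrak m\cap\mathfrak M$ (some coordinate $1$ and some coordinate $n$), which are exactly the nodes that can generate a $\DLP$. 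I would set $V^-=V\setminus\mathcal{X}$, $S=\mathfrak m\setminus\mathcal{X}$, $T=\mathfrak M$, and generalize Assumption~\ref{ass:nodlp} so that nodes of $\mathcal{X}$ may be endpoints but never start points of a path.

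The two routing lemmas are the technical engine and must be upgraded from avoiding a single node to avoiding up to $d-1$ nodes: for $u\in V^-$ and any $F\subseteq S(u)\setminus\{u\}$ with $|F|\leq d-1$ there is a monotone path from $S$ to $u$ avoiding $F$, and symmetrically from $u$ to $T$. The clean justification is that between two comparable nodes of $\calH_{n,d}$ there are $d$ internally vertex-disjoint monotone paths: whenever all coordinates of $u$ are $\geq2$ its in-neighbours form a minimum cut of size exactly $\deg_{\tt i}(u)=d$, so by Menger's theorem deleting $\leq d-1$ internal vertices cannot disconnect and a path survives (one can equally exhibit the $d$ disjoint paths explicitly). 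When $u$ has a coordinate equal to $1$ and is not mixed it already lies in $S$ and the path is trivial, so the interesting case always has in-degree $d$; note these paths may legitimately pass through (though never start at) nodes of $\mathcal{X}$, which is why restricting $F$ to $V^-$ is harmless.

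With these lemmas in hand I would reproduce the case analysis of Lemma~\ref{lem:dgrid}. First generalize Claim~\ref{cl:byc} to reduce to $U,W\subseteq V^-$: if some $x\in\mathcal{X}$ lies in, say, $U$, then — exactly as in Claim~\ref{cl:byc} — one inspects how $W$ meets the neighbourhood of $x$ and, using $n\geq 3$ to detour around any blocking node, produces an $\mathfrak m$--$\mathfrak M$ path that starts at a genuine source of $S$ and touches exactly one of $U,W$. Then, given distinct $U,W$ with $|U|,|W|\leq d$, pick $u\in U\triangle W$, say $u\in U\setminus W$, chosen $\preceq_G$-maximal in $U$ so that $T(u)\cap(U\setminus\{u\})=\emptyset$. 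Split $W$ into $W\cap S(u)$, $W\cap T(u)$, and a remainder incomparable to $u$ (automatically avoided). If $|W\cap S(u)|\leq d-1$ and $|W\cap T(u)|\leq d-1$, the upgraded lemmas give an in-path and an out-path through $u$ avoiding $W$; their concatenation separates $U$ from $W$.

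The main obstacle is the degenerate case in which one intersection has size $d$ and, moreover, forms a minimum ($d$-node) vertex cut isolating $u$ on one side, say $W\cap S(u)=W\subseteq S(u)$ blocks every incoming path to $u$. Here, exactly as in cases (i).A and (i).B of Lemma~\ref{lem:dgrid}, I would switch to separating via $W$: route through a node $v\in W$ while avoiding $U$. The difficulty is that the natural path through $v$ re-enters the cone $T(v)$ and may hit $u$ together with other nodes of $U$; the remedy is to locate, on the in-path already built to $v$, the last turn node $z$ after which the path stays on a single axis-aligned subface through $u$, and then apply the out-lemma at $z$ to leave that subface while avoiding $u$ and the remaining $\leq d-1$ nodes of $U\cap T(z)$. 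Making this turn-node construction work in $d$ dimensions — where the two-dimensional ``same row or column'' becomes ``same axis-aligned subface'', and where one must certify that at most $d-1$ nodes of $U$ ever need avoiding on any single segment — is where the bulk of the bookkeeping lies; the $\preceq_G$-maximal choice of $u$ together with the full $(d-1)$-avoidance strength of the routing lemmas is precisely what keeps every segment within budget.
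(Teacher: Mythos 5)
Your proposal takes essentially the same approach as the paper, which proves Theorem~\ref{thm:hg} only by asserting that the two-dimensional argument (upper bound via Lemma~\ref{lem:bd}, lower bound via the placement $\chi_{\tt g}$, Assumption~\ref{ass:nodlp}, Claim~\ref{cl:byc} and the routing lemmas underlying Lemma~\ref{lem:dgrid}) ``can be easily proved for grids of dimension $d>2$ generalizing the definitions and the proofs,'' leaving all details to the reader. Your sketch carries out precisely that generalization correctly --- identifying the mixed border nodes $\mathfrak m\cap\mathfrak M$ as the higher-dimensional complex sources, upgrading the avoidance lemmas from one forbidden node to $d-1$, and adapting the turn-node construction --- and is in fact more explicit than the paper's own treatment.
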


\noindent{\bf \em Optimality of $\chi_{\tt g}$.}
In the case $d=2$ we were using $4n-2$ monitors.  
We wonder whether the number of monitors  can be reduced. The answer is essentially no.
Namely, it is easy to see that if in the monitor placement used in $\chi_{\tt g}$ for
Theorem \ref{thm:dirgrid} we remove  the input links to nodes $(1,2)$ and $(2,1)$ (so we have $4n-5$ monitors),
the sets $U=\{(1,2),(2,1)\}$ and $W=\{(1,1)\}$ cannot be separated by any path in $\calH_n$.

In the next section we prove that in the case of undirected grids we can reduce the number of monitors, placing them anywhere and still reaching a high identifiability in terms of the dimension of $\calH_{n,d}$. 
 
\section{Undirected Trees and Grids}
\label{sec:und}

In order to avoid  cases of topologies where identifiability is $0$ we make another assumption on 
the monitor placement for  tree topologies: the tree must be {\em monitor-balanced} in the sense we explain below in Definition \ref{def:mb}.
Notice that next result (Lemma \ref{lem:notmb}) represents in fact a structural limit on the monitor placement of networks whose underlying topology is a tree: if the monitor placement does not make the network monitor-balanced, then the maximal identifiability one can hope for is $0$.

 Let $\calT$ be a tree and $\chi=(\mathfrak m,\mathfrak M)$ be a monitor placement for $\calT$. We say that $\calT$
is an  {\em input tree} (respectively {\em output tree}) with respect to $\chi$  if there is a node of $\calT$ in $\mathfrak m$ (respectively in $\mathfrak M$). 
Notice that a tree can be both an input and an output tree. 

Given a tree $\calT$ and one of its edges $e=(uv)$, let $T^e(u)$ (respectively $T^e(v)$) be the subtrees 
of $\calT$ obtained from cutting in $\calT$  the edge $(uv)$ and taking the tree rooted at $u$ (respectively $v$).

For a node $u \in \calT$, we call the {\em $u$-subtrees of $\calT$} the family of trees $\{T^{(wu)}(w)\}_{w\in N(u)}$.

%\begin{fact}
%Each node $u$ in an input SPF undirected tree is reachable by a source-monitor. Each node $u$ in 
%an output SPF undirected tree can reach a target-monitor.
%\end{fact}

\begin{definition} (monitor-balanced tree)
\label{def:mb}
A tree $\calT$ is {\em monitor-balanced} under $\chi$  if for each non-leaf node $u$ in $\calT$
the  family $\{T^{(wu)}(w)\}_{w\in N(u)}$ of the $u$-subtrees of $\calT$ contains at least two  input trees and at least two output trees.
\end{definition}

\begin{lemma}
\label{lem:notmb}
If $\calT$ is not monitor-balanced under $\chi$, then $\mu(\calT|\chi) <1$.
\end{lemma}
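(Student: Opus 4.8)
The plan is to use the upper-bound recipe of Section~\ref{subsubsec:ub}: to conclude $\mu(\calT\mid\chi)<1$ it suffices to exhibit two \emph{distinct} node sets $U,W$ with $|U|,|W|\le 1$ and $\pp(U)\triangle\pp(W)=\emptyset$. For sets of size at most one this reduces to finding exactly one of two kinds of witness: either a node $a$ lying on \emph{no} measurement path (then take $U=\{a\}$ and $W=\emptyset$, which are distinct but both have empty path set), or two distinct nodes $a\neq b$ lying on exactly the \emph{same} measurement paths, i.e.\ $\pp(\{a\})=\pp(\{b\})$ (call such a pair \emph{twins}). So the whole proof is an existence statement: failure of monitor-balance forces an uncovered node or a twin pair.

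First I would record the path structure. In a tree there is a unique path between any two nodes, so under $\CSP$ and $\CAP^-$ the set $\pp(\calT\mid\chi)$ consists precisely of the unique tree-paths joining an input node $s\in\mathfrak m$ to an output node $t\in\mathfrak M$ with $s\neq t$ (no $\DLP$). The key combinatorial fact I would isolate is a coverage criterion phrased through the $v$-subtrees already introduced: a node $v$ distinct from the two endpoints lies on the $s$--$t$ path iff $s$ and $t$ lie in \emph{different} components of $\calT-v$, that is, in two different $v$-subtrees. Hence an internal node is touched by some measurement path iff it has one subtree containing an input and a \emph{different} subtree containing an output, which is exactly the flavour of Definition~\ref{def:mb}.

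Now let $u$ be a non-leaf node at which monitor-balance fails. Because the measurement relation on an undirected tree is symmetric under swapping the roles of $\mathfrak m$ and $\mathfrak M$ (the path from $s$ to $t$ is the same node set as the path from $t$ to $s$), I may assume the $u$-subtrees contain at most one input tree; write $T_1,\dots,T_k$ (with $k=\deg(u)\ge 2$) for the $u$-subtrees and $w_1,\dots,w_k\in N(u)$ for their roots. Since at most one $T_j$ is an input tree, every input node other than $u$ itself is confined to a single subtree, say $T_1$. I would then split into the natural cases: (i) if $\mathfrak m=\emptyset$ there are no paths at all and every node is uncovered; (ii) if some non-input subtree $T_j$ ($j\ge 2$) contains no output node either, then by the coverage criterion its root $w_j$ is uncovered, because any path reaching into $T_j$ would need its output endpoint inside $T_j$; (iii) otherwise the inputs are all trapped behind the single gateway edge $(u,w_1)$, every path leaving $T_1$ toward the output side is forced through both $w_1$ and $u$, and I would exhibit a twin by comparing $u$ with the node through which all inputs must funnel (the gateway $w_1$, or $u$ itself when $u$ carries a monitor), verifying equality of the two path sets via the coverage criterion.

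The step I expect to be the main obstacle is case~(iii): cleanly pinning down the twin pair when the inputs are confined to one subtree that may itself contain outputs, since an output lying \emph{inside} $T_1$ produces intra-$T_1$ paths through $w_1$ that do not cross $u$ and so threaten to break the inclusion $\pp(\{w_1\})\subseteq\pp(\{u\})$. This is precisely the point where the strength of Definition~\ref{def:mb} (\emph{two} input trees and \emph{two} output trees at every non-leaf) is essential: a single input direction creates a forced cut vertex whose incident paths coincide with those of the unique input gateway, whereas a genuinely balanced node admits crossing paths in several directions that separate every candidate twin. I would therefore finish case~(iii) by checking that, under the single-input-subtree hypothesis, any configuration that would keep $w_1$ and $u$ (or $u$ and the relevant gateway) from being twins instead drops into case~(ii) and delivers an uncovered node, so that one of the two witnesses always exists and $\mu(\calT\mid\chi)<1$ follows.
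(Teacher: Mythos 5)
Your overall framework is sound and matches the paper's recipe: the paper likewise exhibits two sets of size at most one with identical path sets, going directly for the pair $U=\{u\}$, $W=\{w\}$ where $w$ roots the unique input (resp.\ output) tree, and your cases (i) and (ii) (empty $\mathfrak m$; an uncovered subtree root, which exploits that Definition~\ref{def:kid} permits $W=\emptyset$) are correct and even more careful than the paper's text. The genuine gap is exactly where you flagged it, in case (iii), and your proposed repair is false. Concretely, take $V=\{u,w_1,w_2,a,b\}$ with edges $(u,w_1),(u,w_2),(w_1,a),(w_1,b)$ and $\mathfrak m=\{a\}$, $\mathfrak M=\{b,w_2\}$. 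Monitor-balance fails at $u$: the family of $u$-subtrees contains only one input tree, namely $T_1=\{w_1,a,b\}$. Under $\CSP$ the measurement paths are $a\,w_1\,b$ and $a\,w_1\,u\,w_2$, so every node is covered, and the only non-input $u$-subtree, $\{w_2\}$, does contain an output: neither case (i) nor case (ii) fires. Yet $\pp(\{u\})=\{a\,w_1\,u\,w_2\}\neq\pp(\{w_1\})$, because the intra-$T_1$ path $a\,w_1\,b$ touches $w_1$ but not $u$ --- precisely the threat you identified. So this configuration does \emph{not} ``drop into case (ii)''; the conclusion $\mu<1$ survives here only through a pair your dichotomy never produces (e.g.\ $\{a\},\{w_1\}$, found at the deeper unbalanced node $w_1$, or $\{u\},\{w_2\}$).

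Worse, the obstacle cannot be argued away at the generality you allow. Enlarge the example: let $w_1$ be adjacent to input leaves $a_1,a_2$ and output leaf $b$, and let $u$ be adjacent to $w_1$ and to output leaves $w_2,w_3$. Balance still fails at $u$ (one input tree among the $u$-subtrees), but the six $\CSP$ paths $a_i\,w_1\,b$, $a_i\,w_1\,u\,w_2$, $a_i\,w_1\,u\,w_3$ give all seven singletons pairwise distinct, nonempty path sets, so there is no uncovered node and no twin pair whatsoever: in this mixed configuration no witness of either of your two kinds exists. This shows that when the unique input subtree $T_1$ also contains outputs, case (iii) is not merely delicate but cannot be closed as sketched. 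The paper's own two-line argument ($\pp(\{u\})=\pp(\{w\})$ since every path must cross the edge $(wu)$) is sound only under the tacit assumption visible in its Figure~\ref{fig:NMB}, namely that each $u$-subtree is purely an input tree or purely an output tree; your proposal, which admits mixed subtrees, would need either to impose that purity assumption explicitly or to restructure the argument around a minimal or deepest unbalanced node, and the final claim of your case (iii) is false as stated.
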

\begin{proof}
If $\calT$ is not balanced, then there is a non-leaf node $u$ in $\calT$ such that the family $\{T^{(wu)}(w)\}_{w\in N(u)}$ contains
either only one input tree or only one output tree.
There are hence only three possible cases at such a node $u$ that can happen and which are visualized in Figure \ref{fig:NMB}.

\begin{figure*}[!htbp]
\begin{multicols}{3}
%%%%%%% PRIMA FIGURA %%%%%
\begin{center}
\begin{scriptsize}
\tikz [triangle/.style = {draw, regular polygon, regular polygon sides=3 },
    border rotated/.style = {shape border rotate=45}] {

%PARTE SOPRA
\node at  (0.2,0.5) {$w$};
\node (w) at (0,0.55) {};
\coordinate (i1) at (0,0.5);
\draw [] (i1) circle [radius=0.05];
\node[triangle, rotate =180] at (0,1.1) {\tt I};
%PARTE SOTTO
\node at  (0.2,0.1) {$u$};
\node (u) at (0,-0.05) {};
\draw[-] (w)--(u);
\node (u) at (0,0.05) {};
\coordinate (o1) at (0,0); %u%
\draw [] (o1) circle [radius=0.05]; %u%

\node at  (0.2,-0.5) {$v$};
\node (v2) at (0,-0.55) {};
\coordinate (o2) at (0,-0.5); %v2%
\draw [] (o2) circle [radius=0.05]; %v2%
\node[triangle] at (0,-1.1) {\tt O};

%\node at  (-1.2,-0.5) {$v_1\;$};
%\node (v1) at (-1,-0.55) {};
%\coordinate (o3) at (-1,-0.5); %v2%
%\draw [] (o3) circle [radius=0.05]; %v2%

%\node at  (1.2,-0.5) {$\;v_k$};
%\node (vk) at (1,-0.55) {};
%\coordinate (o4) at (1,-0.5); %v2%
%\draw [] (o4) circle [radius=0.05]; %v2%
%\node[triangle] at (1,-1.1) {\tt O};
%\draw[-,dotted](v2)--(vk);
%\draw[-] (v1)--(u);
\draw[-] (v2)--(u);
%\draw[-] (vk)--(u);
%\node (v1) at 
%\node (v2) at
%\node (vk) at
%\coordinate (o1) at (0,0);
%\draw [] (o1) circle [radius=0.05];
%\node[triangle] at (0,-0.6) {\tt O};
}
\end{scriptsize}
\end{center}
%\label{fig:NMB1}
%\caption{Prova1}
%%%%%%% SECONDa FIGURA %%%%%%
\begin{center}
\begin{scriptsize}
\tikz [triangle/.style = {draw, regular polygon, regular polygon sides=3 },
    border rotated/.style = {shape border rotate=45}] {

%PARTE SOPRA
\node at  (0.2,0.5) {$w$};
\node (w) at (0,0.55) {};
\coordinate (i1) at (0,0.5);
\draw [] (i1) circle [radius=0.05];
\node[triangle, rotate =180] at (0,1.1) {\tt I};
\draw[-] (w)--(u);
%PARTE SOTTO
\node (u) at (0,-0.05) {};
\draw[-] (w)--(u);
\node at  (0.2,0.1) {$u$};
\node (u) at (0,0.05) {};
\coordinate (o1) at (0,0); %u%
\draw [] (o1) circle [radius=0.05]; %u%

\node at  (0.2,-0.5) {$\;v_2$};
\node (v2) at (0,-0.55) {};
\coordinate (o2) at (0,-0.5); %v2%
\draw [] (o2) circle [radius=0.05]; %v2%
\node[triangle] at (0,-1.1) {\tt O};

\node at  (-1.2,-0.5) {$v_1\;$};
\node (v1) at (-1,-0.55) {};
\coordinate (o3) at (-1,-0.5); %v2%
\draw [] (o3) circle [radius=0.05]; %v2%
\node[triangle] at (-1,-1.1) {\tt O};

\node at  (1.2,-0.5) {$\;v_k$};
\node (vk) at (1,-0.55) {};
\coordinate (o4) at (1,-0.5); %v2%
\draw [] (o4) circle [radius=0.05]; %v2%
\node[triangle] at (1,-1.1) {\tt O};
\draw[-,dotted](v2)--(vk);
\draw[-] (v1)--(u);
\draw[-] (v2)--(u);
\draw[-] (vk)--(u);
%\node (v1) at 
%\node (v2) at
%\node (vk) at
%\coordinate (o1) at (0,0);
%\draw [] (o1) circle [radius=0.05];
%\node[triangle] at (0,-0.6) {\tt O};

}
\end{scriptsize}
\end{center}
%\label{fig:NMB2}
%\caption{Prova2}

%%%%%%%% TERZA FIGURA %%%%%%%%
\begin{center}
\begin{scriptsize}
\tikz [triangle/.style = {draw, regular polygon, regular polygon sides=3 },
    border rotated/.style = {shape border rotate=45}] {

%PARTE SOPRA
\node at  (0.2,0.5) {$w$};
\node (w) at (0,0.55) {};
\coordinate (i1) at (0,0.5);
\draw [] (i1) circle [radius=0.05];
\node[triangle, rotate =180] at (0,1.1) {\tt O};
\draw[-] (w)--(u);
%PARTE SOTTO
\node (u) at (0,-0.05) {};
\draw[-] (w)--(u);
\node at  (0.2,0.1) {$u$};
\node (u) at (0,0.05) {};
\coordinate (o1) at (0,0); %u%
\draw [] (o1) circle [radius=0.05]; %u%

\node at  (0.2,-0.5) {$\;v_2$};
\node (v2) at (0,-0.55) {};
\coordinate (o2) at (0,-0.5); %v2%
\draw [] (o2) circle [radius=0.05]; %v2%
\node[triangle] at (0,-1.1) {\tt I};

\node at  (-1.2,-0.5) {$v_1\;$};
\node (v1) at (-1,-0.55) {};
\coordinate (o3) at (-1,-0.5); %v2%
\draw [] (o3) circle [radius=0.05]; %v2%
\node[triangle] at (-1,-1.1) {\tt I};

\node at  (1.2,-0.5) {$\;v_k$};
\node (vk) at (1,-0.55) {};
\coordinate (o4) at (1,-0.5); %v2%
\draw [] (o4) circle [radius=0.05]; %v2%
\node[triangle] at (1,-1.1) {\tt I};
\draw[-,dotted](v2)--(vk);
\draw[-] (v1)--(u);
\draw[-] (v2)--(u);
\draw[-] (vk)--(u);
%\node (v1) at 
%\node (v2) at
%\node (vk) at
%\coordinate (o1) at (0,0);
%\draw [] (o1) circle [radius=0.05];
%\node[triangle] at (0,-0.6) {\tt O};

}
\end{scriptsize}
\end{center}
%\label{fig:NMB3}
%\caption{Prova3}
\end{multicols}
\caption{\em The three possible cases of Lemma \ref{lem:notmb} for $u$ when \calT is not monitor-balanced.}
\label{fig:NMB}

\end{figure*}
In all the cases we set $U=\{u\}$ and $W=\{w\}$. Since any path must necessarily go from an input node to an output node, then
$\pp(U)=\pp(W)$. This proves that $\mu(\calT |\chi) <1$.

\end{proof}

%As for the directed case it is  easy to derive the identifiability bounds for undirected trees with the same proof schema. We only briefly explain the property that the  monitor placement has to induce on $\calT_n$ in order to avoid  trivial cases. 
%Let us call the subtree of $\calT_n$ containing a node $v$ and obtained cutting an edge $(u,v)$ a 
%\emph{connected component at $u$}.  A monitor placement $\chi$ for  an undirected 
%$\calT_n$ is \emph{balanced} if at each node $u$ there are at least two connected components at $u$ containing nodes in $m$ 
%and at least two containing nodes in $M$. It is easy to see that for unbalanced monitor placement  $\mu(\calT_n)<1$.
On the other hand when $\chi$ is balanced, a similar proof as in Theorem \ref{thm:dirtree} proves that:

\begin{theorem}
\label{thm:undtrees}
Let $\calT$ be a tree and $\chi$ a monitor-balanced monitor placement for $\calT$. Then $\mu(\calT_n| \chi)=1$.
\end{theorem}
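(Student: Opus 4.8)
The plan is to establish the two inequalities $\mu(\calT|\chi)\le 1$ and $\mu(\calT|\chi)\ge 1$ separately, reusing as much of the mechanism of Theorem~\ref{thm:dirtree} as possible and invoking the monitor-balanced hypothesis (Definition~\ref{def:mb}) only where the absence of a global orientation forces extra care.

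For the upper bound I would argue exactly as in the directed case: by the criterion of Section~\ref{subsubsec:ub} it suffices to exhibit two distinct sets of size at most $2$ with the same path-trace. I pick a leaf $w$ of $\calT$ attached to a monitor (so that $w\in N$) together with its unique neighbour $u$. Since $w$ is a leaf, every measurement path containing $w$ has $w$ as an endpoint and hence continues through $u$, so $\pp(\{w\})\subseteq\pp(\{u\})$. Taking $U=\{u\}$ and $W=\{u,w\}$ gives $\pp(W)=\pp(\{u\})\cup\pp(\{w\})=\pp(\{u\})=\pp(U)$, so $\pp(U)\triangle\pp(W)=\emptyset$ with $U\neq W$ and $|U|,|W|\le 2$; this yields $\mu(\calT|\chi)\le 1$, and the step is insensitive to whether $\chi$ is balanced.

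For the lower bound I must show $\calT$ is $1$-identifiable, i.e.\ that for distinct $U,W$ with $|U|,|W|\le 1$ there is a measurement path touching exactly one of them. The case $W=\emptyset$, $U=\{v\}$ reduces to showing every $v\in N$ lies on some $\mathfrak m$-to-$\mathfrak M$ path: for a non-leaf $v$ this is immediate from Definition~\ref{def:mb}, since $v$ has at least two input and at least two output subtrees, and routing an input arm and an output arm into two distinct such subtrees produces a path through $v$; a leaf $v\in N$ is a monitor node and lies on a path by balance at its neighbour. The main case is $U=\{u\}$, $W=\{w\}$ with $u\neq w$: let $P$ be the unique $u$--$w$ path and let $D_u$ (resp.\ $D_w$) be the subtree of $w$ (resp.\ of $u$) pointing along the first edge of $P$ toward the other node. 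I would first try to route a measurement path \emph{through $w$ but avoiding $u$}: by monitor-balance $w$ has at least two input-bearing and at least two output-bearing subtrees, while only the single subtree $D_u$ points toward $u$, so I can generically choose an input-bearing subtree and an output-bearing subtree of $w$, distinct from each other and both different from $D_u$, and keep the whole path inside them, which then cannot reach $u$.

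The delicate point, and the step I expect to be the main obstacle, is precisely when this selection is blocked at $w$: this happens only when $D_u$ is one of exactly two monitor-bearing subtrees of $w$, so that discarding $D_u$ leaves a single usable direction and any path through $w$ is forced back toward $u$. In that situation I would run the symmetric construction at $u$ with $D_w$ in place of $D_u$; if it succeeds we separate $u$ and $w$ by a path through $u$. If the construction is blocked at \emph{both} endpoints, I would argue that no interior node of $P$ can carry a monitor-bearing branch off $P$ (such a branch would supply the missing input or output arm, contradicting the block at $u$ or at $w$ via monitor-balance), so every interior node of $P$ has its monitors only along $P$; this forces $P$ to be a \emph{line}, which is excluded by the line-free assumption, exactly as in Theorem~\ref{thm:dirtree}. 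Hence at least one of the two constructions always succeeds, $\calT$ is $1$-identifiable, and together with the upper bound this gives $\mu(\calT|\chi)=1$.
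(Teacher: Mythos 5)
Your upper bound is sound (it is essentially Lemma~\ref{lem:ubd} specialized to a leaf and its unique neighbour), and the easy lower-bound cases are fine. The genuine gap is in your resolution of the doubly-blocked case. From ``blocked at both $u$ and $w$'' you infer that no interior node of $P$ carries a monitor-bearing branch, and from that you conclude that $P$ is a line, contradicting line-freeness. The second inference is false: line-freeness guarantees that interior nodes of $P$ have branches, not that those branches carry monitors, and a monitor-free branch hanging off $P$ (or no interior node at all, when $u$ and $w$ are adjacent) is perfectly consistent with both hypotheses. Worse, under $\CSP$ the doubly-blocked configuration can genuinely occur, so no patch along your lines can close it: take adjacent internal nodes $u,w$; attach to $u$ a node $c$ with two leaf children $p'\in\mathfrak m$ and $q'\in\mathfrak M$, attach to $w$ a node $b$ with two leaf children $p\in\mathfrak m$ and $q\in\mathfrak M$, and give $u$ and $w$ one extra unmonitored leaf each (so all internal nodes have degree $3$). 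One checks that every non-leaf node has two input and two output subtrees, so this tree is line-free and monitor-balanced in the sense of Definition~\ref{def:mb}; yet the only simple $\mathfrak m$--$\mathfrak M$ paths are $p\,b\,q$, $p'\,c\,q'$ and the two cross paths through the edge $uw$, whence $\pp(\{u\})=\pp(\{w\})$ and $u,w$ are inseparable under $\CSP$. Your ``blocked'' condition holds at both endpoints here, and there is no line to contradict.

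Two further remarks. First, under $\CAP^-$ the blocked case dissolves for a reason different from yours: a walk may enter and leave the \emph{same} subtree of $w$ (bouncing at $w$), so monitor-balance at an internal node already yields a path through $w$ avoiding the entire subtree containing $u$ --- that, not a line-freeness contradiction, is the correct way to finish there; your delicate two-distinct-subtrees selection is only an issue for $\CSP$, where, as the example shows, Definition~\ref{def:mb} alone is too weak (one would need, e.g., that at each non-leaf node an input tree and an output tree can be chosen \emph{distinct}). Second, your main case silently assumes both $u$ and $w$ are non-leaves; Definition~\ref{def:mb} imposes nothing at leaves, so the leaf-endpoint cases need their own (easy) argument, which the proposal does not supply. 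For calibration: the paper gives no detailed proof of this theorem --- it only asserts that the argument of Theorem~\ref{thm:dirtree} adapts --- and the obstruction you correctly isolated is exactly where that adaptation is not routine, since in the directed rooted tree every branch leads to a monitored leaf, whereas monitor-balance permits monitor-free branches.
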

%\begin{center}
%\begin{scriptsize}
%\begin{tikzpicture} [triangle/.style = {draw, regular polygon, regular polygon sides=3 },
   % border rotated/.style = {shape border rotate=180}]
% CASO 1
%\node[triangle, border rotated] at (0,1.3) {$\white{t}$};
%\node at (0,1.1) (c)  {{\bf Case 1}};
%\node at (0,1.3) (e) {$w$};
%\node at (0,0.7) (c) {$\tilde w$};
%\node at (0,1) {};
%\node at (0,0) (d)  {$u$};
%\node at (0,-0.1) (f1) {};
%\node at (0.5,-0.7) (f3) {$v_2$};
%\node[triangle] at (0.5,-1.3) {$\white{t}$};
%\node at (-0.5,-0.7) (f4) {$v_1$};
%\node[triangle] at (-0.5,-1.3) {$\white{t}$};
%\draw[] (f1) -- (f3);
%\draw[] (f1) -- (f4);
%\draw[] (d)--(c);
%\draw[dotted] (e)--(c);
%\end{tikzpicture}
%\end{scriptsize}
%\end{center}

%\subsection{Bounded degree}
%Recall section \ref{subsec:NLM}.Let $G=(V,E)$ be an undirected graph.  We assume each node in $G$ 
%reachable from two distinct nodes linked to monitors. $1\leq \delta(G) \leq  \Delta(G)\leq |V|$. As for the directed case maximal 
%identifiability only depends on the minimal degree of $G$ and not where monitor are placed.
%\begin{lemma}
%\label{lem:ubd}
%Let $G=(V,E)$ undirected. Then  $\mu(G) \leq \delta(G)$.
%\end{lemma}
%\begin{proof} Let $u\in V$ be such that $\deg(u)=\delta(G)$. Fix $U=N(u)$ and $W=\{u\} \cup N(u)$. 
%Each path touching $u$ is passing through at least a node in $N(u)$. Hence $\pp(\{u\})\subseteq \pp(N(u))$.
%Hence $\pp(W)= \pp(\{u\})\cup \pp(N(u)) = \pp(N(u))=\pp(U)$ and then  $\pp(U)\triangle \pp(W)  = \emptyset$.
%We have found two sets $U,W$ of cardinality at most $\delta(G)+1$. Hence $\mu(G) \leq \delta(G)$. 
%\end{proof}

\subsection{Grids}
%Recall section \ref{subsec:NLM} and notice that in the undirected case monitors are no longer counting in the degree  of nodes of the graph modelling the network. Hence if  $\calH_{n,d}$ is  undirected, $\delta(\calH_{n,d}) = d$ (exactly on the corner nodes). Using Lemma \ref{lem:ubd}, we then have the  upper bound $\mu(\calH_{n,d})\leq d$. If we assume that {\em all border nodes are linked to monitors} that is $Z=\bigcup_{i\in [d]} \partial_i$,  then each (end-to-end) path in the directed $\calH_{n,d}$ is also a path for the undirected   $\calH_{n,d}$. Hence Theorem \ref{thm:hg} gives a lower bound also in this case. Therefore:

%\begin{theorem}
%\label{thm:undgrid}
%Let  $\calH_{n,d}$ be undirected and $Z=\bigcup_{i\in [d]} \partial_i$.  Then $\mu(\calH_{n,d})=d$.
%\end{theorem}

Previous Theorem \ref{thm:hg} on directed grids is true for the undirected case as well.
However, given the higher number of paths  which can be formed after a monitor placement in an undirected grid $\calH_{n,d}$,
it is reasonable to question whether we can reduce the number of monitors but still reaching an identifiability of the order
of the dimension of the grid.

We show that  $2d$ monitors suffice to get maximal identifiability at least $d-1$  and at most $d$ in the case of undirected 
$d$-dimensional grids \emph{for any} monitor placement and under $\CSP$ or $\CAP^-$ routing scheme.

\begin{theorem}
\label{thm:undgrid2}
Let  $n\geq 3$. Then under $\CSP$ and $\CAP^-$ routing schemes,  $d-1\leq \mu(\calH_{n,d}|\chi)\leq d$ 
for any monitor placement $\chi$. 
\end{theorem}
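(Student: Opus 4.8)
The minimum degree of $\calH_{n,d}$ is $d$: every corner $x\in\{1,n\}^d$ has exactly one neighbour in each of the $d$ coordinate directions, so $\deg(x)=d$, while every other node has degree between $d$ and $2d$. Hence $\delta(\calH_{n,d})=d$, and Lemma~\ref{lem:ubd}, which holds for every $\chi$, gives $\mu(\calH_{n,d}|\chi)\le\delta(\calH_{n,d})=d$ immediately.

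\textbf{Lower bound, set-up.} To prove $\mu\ge d-1$ I would follow the recipe of Section~\ref{subsubsec:lb}: for every pair of distinct sets $U,W$ with $|U|,|W|\le d-1$, exhibit an $\mathfrak m$--$\mathfrak M$ path meeting exactly one of them. I pick $v\in U\triangle W$, say $v\in U\setminus W$ (the other case being symmetric), and aim at a measurement path through $v$ that avoids $W$ entirely; such a path lies in $\pp(U)\setminus\pp(W)$. Three structural facts drive the construction. First, $\delta(\calH_{n,d})=d>d-1\ge|W|$, so $v$ keeps at least one neighbour $x\notin W$. Second, $\calH_{n,d}$ is $d$-vertex-connected, being the Cartesian product of $d$ paths, whose connectivity equals its minimum degree $d$; hence deleting the at most $d-1$ nodes of $W$ leaves it connected. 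Third, since at least one input \emph{and} one output monitor must survive the deletion of any $d-1$ nodes, the only way $2d$ monitors can achieve this is to split as exactly $d$ input and $d$ output monitors, and then for our fixed $W$ there is a surviving input node $s\notin W$ and a surviving output node $t\notin W$.

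\textbf{The easy routing ($\CAP^-$).} Here I would use a ``poke''. In the connected graph $\calH_{n,d}\setminus W$ take any walk from $s$ to $x$ and any walk from $x$ to $t$ and splice them as $s\rightsquigarrow x\to v\to x\rightsquigarrow t$. This walk visits $v$, stays outside $W$, and has length at least two, so it is not a $\DLP$ and is a legal $\CAP^-$ probe; it separates $U$ from $W$. This settles the $\CAP^-$ bound no matter how few free neighbours $v$ has.

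\textbf{The hard routing ($\CSP$).} Under $\CSP$ the poke is illegal and I must produce a \emph{simple} $s$--$t$ path through $v$ inside $\calH_{n,d}\setminus W$. When $|W|\le d-2$ this is immediate: $\calH_{n,d}\setminus W$ is then at least $2$-connected, and in a $2$-connected graph any vertex lies on a simple path joining any two others, so $v$ can be placed on an $s$--$t$ path. The genuine difficulty is the extremal case $|W|=d-1$, where $\calH_{n,d}\setminus W$ need only be connected and $v$ may be \emph{trapped}, all but one of its neighbours lying in $W$; this can occur only when $\deg(v)=d$, i.e. $v$ is a corner and $W$ consists of $d-1$ of its $d$ neighbours. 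In that situation I switch target sets: a counting argument gives $W\setminus U\neq\emptyset$, since $v\in U\setminus W$ forces $|U\cap W|\le|U|-1\le d-2<d-1=|W|$, so I pick $w\in W\setminus U$. Being a neighbour of a corner, $w$ has degree $d+1$ (this is exactly where $n\ge 3$ is used, giving $w$ room to move along its ``$2$'' coordinate), so after deleting $U$ it retains at least two free neighbours, and I route a simple path through $w$ avoiding $U$, touching $W$ but not $U$. The main obstacle is precisely this last step: turning ``$w$ has two free neighbours'' into an actual simple path through $w$ requires showing that $w$ is not itself cut off from the surviving monitors by a single vertex of $\calH_{n,d}\setminus U$, i.e. that a $2$-fan from $w$ to two free monitors exists. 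I expect to obtain this from the local richness of the grid around a degree-$(d+1)$ node, together with the $d$-connectivity used above; a little extra care is also needed to select $s\neq t$ so that the $\CSP$ requirement of distinct endpoints is met.
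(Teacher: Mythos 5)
Your upper bound is exactly the paper's (the corners of $\calH_{n,d}$ have degree $d$, so Lemma \ref{lem:ubd} applies for every $\chi$), and much of your lower-bound architecture is sound and in places cleaner than the published argument: the $\CAP^-$ ``poke'' walk $s\rightsquigarrow x\to v\to x\rightsquigarrow t$ completely settles that routing; the reduction of the case $|W|\le d-2$ to the fan lemma in the then $2$-connected punctured grid is correct; and your switch to some $w\in W\setminus U$ when $v$ is a trapped corner is precisely the move the paper makes in its final case (Figure \ref{fig:gridwithhole-corner}). Your reading that the $2d$ monitors must split as $d$ inputs and $d$ outputs is also a fair reconstruction of what the theorem's header leaves implicit. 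Note that the paper itself only writes out $d=2$, building paths by hand via a rectangle through three prescribed points (Claim \ref{claim:inthm}) and rerouting around the ``hole'' left by a deleted vertex, so your product-connectivity formulation ($\kappa(\calH_{n,d})=d$) is genuinely more general in spirit.

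However, the $\CSP$ case at the extremal cardinality $|W|=d-1$ (and symmetrically $|U|=d-1$ after your swap) contains a real gap, which you yourself flag: $\calH_{n,d}\setminus W$ is then only guaranteed connected, and cut vertices do occur --- already for $d=2$, deleting the single vertex $(1,2)$ leaves the corner $(1,1)$ with degree one and makes $(2,1)$ a cut vertex --- so the fan lemma is unavailable, and ``$w$ retains two free neighbours'' does not by itself produce a simple monitor-to-monitor path through $w$. One must additionally show that no single vertex of the punctured grid separates the chosen through-vertex from all surviving inputs or from all surviving outputs, and handle the sub-case where the unique surviving input or output node is itself the degree-one corner (there the path can still be routed because that node is an endpoint, but this needs saying). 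This 2-fan existence is not a routine verification: it is exactly the content the paper supplies constructively with its hole-avoidance rerouting along the borders of the hole (Figures \ref{fig:gridwithhole} and \ref{fig:gridwithhole-corner}). Until you replace ``I expect to obtain this from the local richness of the grid'' with an actual construction or a connectivity lemma tailored to punctured hypergrids, the $\CSP$ lower bound at $|W|=d-1$ --- which for $d=2$ is the entire lower bound --- remains unestablished.
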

The rest of the section is devoted to the proof  of the theorem for the case $d=2$. The proof 
for $d>2$ is along the same lines and we leave the proof to the reader. 
%its proof can be found in the Appendix \ref{app:thm:undgrid2}.

%The \emph{corner monitor placement} $\chi$ for $d=2$ assigns 
%$m=\{(1,1), (n,1)\}$ and $M=\{(n,n), (1,n)\}$. For notational convenience we let $S=m$ and $T=M$.
%\begin{definition} 
%Let $x=(x_1,x_2)$ be a node in in $\calH_n$. 
%\begin{small}
%$$
%\begin{array}{l}
%S(x)=\{v =(y_1,y_2) | 1\leq y_i \leq x_i, i\in[2] \} \\
%T(x)=\{v =(y_1,y_2) | x_i \leq y_i \leq n, i\in[2] \}
%\end{array}
%$$
%\end{small}
%\end{definition}

%\begin{theorem}
%\label{thm:lbgridund}
%Let  $n\geq 3$. Then under $\CSP$ and $\CAP^-$ routing scheme $1 \leq \mu(\calH_{n}|\chi)\leq 2$ for any monitor placement $\chi$. 
%\end{theorem}
\begin{proof} (of Theorem \ref{thm:undgrid2})
The upper bound follows from Lemma \ref{lem:ubd}.

For the lower bound we consider the following Claim.
\begin{claim}
\label{claim:inthm}
Let $z_1=(i_1,j_1), z_2=(i_2,j_2), z_3=(i_3,j_3)$ be three nodes in $\calH_{n}$ such that $z_1$ and $z_3$ are distinct nodes. 
There exists a simple path from $z_1$ to $z_3$ touching $z_2$. 
\end{claim}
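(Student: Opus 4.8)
The plan is to construct the required simple path explicitly as a concatenation of two axis-aligned \emph{staircase} (L-shaped) subpaths, $P_1$ from $z_1$ to $z_2$ and $P_2$ from $z_2$ to $z_3$, and to choose the turning directions so that $P_1$ and $P_2$ meet only at $z_2$; this meeting-only-at-$z_2$ property is exactly what guarantees that the full concatenation is a simple path through $z_2$. First I would dispose of the trivial situations: if $z_2 \in \{z_1,z_3\}$ then any monotone L-path from $z_1$ to $z_3$ already touches $z_2$, so from now on assume $z_2 \notin \{z_1,z_3\}$.

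For the generic case I would route $P_1$ vertically along column $j_1$ from $(i_1,j_1)$ to $(i_2,j_1)$ and then horizontally along row $i_2$ to $z_2=(i_2,j_2)$, and route $P_2$ from $z_2$ analogously toward $z_3$ (down column $j_2$ to row $i_3$, then along row $i_3$ to $z_3$). The vertices of $P_1$ lie on column $j_1$ and row $i_2$; those of $P_2$ lie on column $j_2$ and row $i_3$. I would then check the four segment-against-segment intersections: the two vertical segments are disjoint when $j_1 \neq j_2$, the two horizontal segments when $i_2 \neq i_3$, and the horizontal segment of $P_1$ (row $i_2$) meets the vertical segment of $P_2$ (column $j_2$) exactly at the single point $(i_2,j_2)=z_2$, which is allowed.

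The main obstacle is the one remaining crossing, where $P_1$'s vertical segment (column $j_1$) meets $P_2$'s horizontal segment (row $i_3$) at the point $(i_3,j_1)$, together with the degenerate configurations in which two of the three points share a row or a column, so that a naive leg runs along a line already used by the other. To handle these I would use the freedom to pick, independently for each leg, whether to turn \emph{vertical-then-horizontal} or \emph{horizontal-then-vertical}, arranging that the two legs leave and enter $z_2$ along perpendicular directions. A short case analysis on the relative orders of $i_1,i_2,i_3$ and of $j_1,j_2,j_3$ shows that at least one of these four combinations avoids the crossing point $(i_3,j_1)$. In the fully collinear cases (all three points on one line, or $z_2$ lying between the other two on a common row or column) I would exploit $n\geq 3$ to detour the offending leg one row or column to the side and back, which is always available inside the grid.

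As a conceptual sanity check I would note a cleaner but less explicit alternative: $\calH_n$ is $2$-connected for $n\geq 2$, so by the fan lemma applied to the vertex $z_2$ and the set $\{z_1,z_3\}$ there exist two internally vertex-disjoint paths from $z_2$, one ending at $z_1$ and one at $z_3$; their concatenation is precisely a simple $z_1$–$z_3$ path through $z_2$. This yields the claim at once, but I would still prefer the explicit staircase construction here, since the concrete shape of the separating paths is what the surrounding identifiability argument reuses.
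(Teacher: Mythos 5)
Your proposal is correct, but it takes a genuinely different route from the paper. The paper's own proof is a single-cycle argument: it fixes one axis-parallel rectangle of four grid paths whose boundary contains all of $z_1,z_2,z_3$, and then walks along that boundary from $z_1$ through $z_2$ to $z_3$ (on a cycle, the arc determined by $z_1$ and $z_3$ that contains $z_2$ is automatically simple). Your construction instead concatenates two independently oriented L-shaped legs at $z_2$ and resolves the single possible crossing $(i_3,j_1)$ by the four orientation choices. Your case analysis, though left implicit, does check out: if $i_1,i_3$ lie on opposite sides of $i_2$ then HV--VH works; if $j_1,j_3$ lie on opposite sides of $j_2$ then VH--HV works; and in the remaining same-side-in-both-coordinates situation, VH--VH works when $\lvert i_1-i_2\rvert < \lvert i_3-i_2\rvert$, HV--HV works when the inequality is reversed, and the tie $i_1=i_3$ is broken via the $j$-distances using $z_1\neq z_3$. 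What your approach buys is robustness: the paper's rectangle need not exist. For $z_1=(1,1)$, $z_2=(2,2)$, $z_3=(3,3)$, no axis-parallel rectangle has all three nodes on its boundary (membership of $(1,1)$ forces the rectangle's minimal row and column both to be $1$; then $(2,2)$ forces the maximal row or column to be $2$, which excludes $(3,3)$ from the rectangle's span), whereas your VH--VH staircase $(1,1)(2,1)(2,2)(3,2)(3,3)$ handles this configuration immediately. What the paper's approach buys, when the rectangle exists, is that simplicity of the path is free from the cycle structure, with no intersection bookkeeping. Your closing remark via $2$-connectivity of $\calH_n$ and the fan lemma is also a valid one-line existence proof (after your reduction to $z_2\notin\{z_1,z_3\}$), at the price of being non-constructive; since the surrounding argument in Theorem \ref{thm:undgrid2} only needs existence of the separating path, either version suffices there.
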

\begin{proof}
First we consider a rectangle/square of four paths in $\calH_{n}$ such that all these three nodes are lying on the edges of this rectangle/square (see Figure  \ref{fig:claim} for an example). Then we start from the node that we want to be the origin of our path and move along the edge towards our second node that we want to be touched by our path. After reaching the second node we continue moving along the edge which will lead to the third node that our path terminates at. We then build a path from  $z_1$ to $z_3$ touching $z_2$.
\end{proof}

Now we have to prove that independently of what nodes form $\mathfrak m$ and $\mathfrak M$, for any $U,W\subseteq V$ with $U\triangle W\neq \emptyset$ such that $|U|,|W|\leq 1$, then $\pp(U)\triangle \pp(W)\neq \emptyset$. Since $U \not = W$, then there there is at a least  an $\mathfrak m \not \in W$, at least an $\mathfrak M \not \in W$ and of course $u \not \in W$. By Claim \ref{claim:inthm} we get a  simple path from $\mathfrak m$ to $\mathfrak M$ passing through $u$. If this path touches $w$, then we can avoid it. 
If $w$ is an internal node (not on the borders), in order to avoid $w$ we remove this node and all the edges linked to it. Then we have a hole in our grid.  By previous observation, after removing $w$, at least one node in $\mathfrak m$ and one node in $\mathfrak M$ are in the remaining network and they must be different since we do not have degenerate paths allowed neither in $\CSP$ nor in $\CAP^-$. By previous Claim applied to $\mathfrak m$, $\mathfrak M$ and $u$ we have an $\mathfrak m-\mathfrak M$ path in $\calH_{n}$ touching $U$ but not $W$. Notice that if a part of the rectangle/square that we are considering in Claim \ref{claim:inthm} intersects with our hole then we can move along the borders of our hole (see Figure  \ref{fig:gridwithhole} for an example).
If $w$ is on the border but $u$ is an internal node, then by the same argument as above we can touch $w$ and avoid $u$.
If both $w$ and $u$ are on the same border and one of them say $u$ is isolated by $w$, $\mathfrak m$ and $\mathfrak M$ (see Figure  \ref{fig:gridwithhole-corner} for an example), then we remove $u$ and the edges linked to it and again by the same argument as above we have a path $\mathfrak m-\mathfrak M$ touching $W$ but not $U$.
\end{proof}

\begin{figure*}[!htbp]
\begin{multicols}{3}
\begin{center}
\begin{scriptsize}
\tikz[scale=1] {

\coordinate (a) at (0,0);
\draw [] (a) circle [radius=0.05];
\coordinate (d) at (1.5,0);
\draw [] (d) circle [radius=0.05];

\draw[draw=black, -]  (0+0.05,0) -- (0.5,0);
\draw[draw=black, -]  (0.5,0)--(1,0);
\draw[draw=black, -]  (1,0)--(1.5-0.05,0);

\coordinate (b11) at (0.25,-0.25);
\draw [] (b11) circle [radius=0.05];
\coordinate (c11) at (1,-0.5);
\draw [] (c11) circle [radius=0.05];

\draw[draw=black, -]  (0,0-0.05) -- (0,-0.5);
\draw[draw=black, -]  (1.5,0-0.05) -- (1.5,-0.5);
\draw[draw=black, ->]  (0.25,-0.25-0.05) -- (0.25,-1);
\draw[draw=black, ->]  (0.25,-1) -- (0.75-0.05,-1);
\draw[draw=black, ->]  (1,-1) -- (1,-0.5-0.05);
\draw[draw=black, ->]  (0.75+0.05,-1) -- (1,-1);
\draw[draw=black, dotted]  (0.25+0.05,-0.25) -- (1,-0.25);
\draw[draw=black, dotted]  (1,-0.25) -- (1,-0.5+0.05);

\node (z1) at (0.45,-0.35) {$z_1$};
\node (z2) at (0.75,-1.2) {$z_2$};
\node (z3) at (1.2,-0.5) {$z_3$};

\coordinate (b22) at (0.75,-1);
\draw [] (b22) circle [radius=0.05];
\draw[draw=black, -] (0,-0.5) -- (0,-1);
\draw[draw=black, -]  (1.5,-0.5) -- (1.5,-1);

\coordinate (a3) at (0,-1.5);
\draw [] (a3) circle [radius=0.05];
\coordinate (d3) at (1.5,-1.5);
\draw [] (d3) circle [radius=0.05];

\draw[draw=black, -]  (0+0.05,-1.5) -- (0.5,-1.5);
\draw[draw=black, -]  (0.5,-1.5)--(1,-1.5);
\draw[draw=black, -]  (1,-1.5)--(1.5-0.05,-1.5);
\draw[draw=black, -] (0,-1) -- (0,-1.5+0.05);
\draw[draw=black, -]  (1.5,-1) -- (1.5,-1.5+0.05);
}
\end{scriptsize}
\end{center}
\caption{\em Building a path in $\calH_n$ touching three points. \label{fig:claim}}

%%%%%%% SECONDa FIGURA %%%%%%%%%%

\begin{center}
\begin{scriptsize}
\tikz[scale=1]{
\coordinate (a) at (0,0);
\draw [] (a) circle [radius=0.05];

\coordinate (d) at (1.5,0);
\draw [] (d) circle [radius=0.05];

\draw[draw=black, -]  (0+0.05,0) -- (0.5,0);
\draw[draw=black, -]  (0.5,0)--(1,0);
\draw[draw=black, -]  (1,0)--(1.5-0.05,0);

\coordinate (b11) at (0.25,-0.25);
\draw [] (b11) circle [radius=0.05];
\coordinate (c11) at (0.75,-0.5);
\draw [] (c11) circle [radius=0.05];

\draw[draw=black, -]  (0,0-0.05) -- (0,-0.5);
\draw[draw=black, -]  (1.5,0-0.05) -- (1.5,-0.5);
\draw[draw=red, -]  (0.25,-0.25-0.05) -- (0.25,-1);
\draw[draw=red, -]  (0.25+0.05,-0.25) -- (0.75,-0.25);
\draw[draw=red, -]  (0.75,-0.25) -- (0.75,-0.5+0.05);
\draw[draw=red, -]  (0.25,-1) -- (0.5,-1);
\draw[draw=black, -]  (0.5,-1) -- (0.75-0.05,-1);
\draw[draw=black, -]  (1,-0.5) -- (1,-1);
\draw[draw=black, -]  (0.75+0.05,-1) -- (1,-1);
\draw[draw=black, -]  (0.5,-0.5) -- (0.75-0.05,-0.5);
\draw[draw=black, dotted]  (0.75,-0.5-0.05) -- (0.75,-1+0.05);
\draw[draw=black, dotted]  (0.5,-0.75) -- (1,-0.75);
\draw[draw=black, -]  (0.75+0.05,-0.5) -- (1,-0.5);
\draw[draw=black, -]  (0.5,-0.5) -- (0.5,-1);
\node (w) at (0.75,-0.75) {$w$};

\coordinate (b22) at (0.75,-1);
\draw [] (b22) circle [radius=0.05];

\draw[draw=black, -] (0,-0.5) -- (0,-1);

\draw[draw=black, -]  (1.5,-0.5) -- (1.5,-1);

\coordinate (a3) at (0,-1.5);
\draw [] (a3) circle [radius=0.05];

\coordinate (d3) at (1.5,-1.5);
\draw [] (d3) circle [radius=0.05];

\draw[draw=black, -]  (0+0.05,-1.5) -- (0.5,-1.5);
\draw[draw=black, -]  (0.5,-1.5)--(1,-1.5);
\draw[draw=black, -]  (1,-1.5)--(1.5-0.05,-1.5);
\draw[draw=black, -] (0,-1) -- (0,-1.5+0.05);
\draw[draw=black, -]  (1.5,-1) -- (1.5,-1.5+0.05);
}
\end{scriptsize}
\end{center} 
\caption{\em Avoiding a hole in $\calH_n.$ \label{fig:gridwithhole}}

%%%%%%%%%%% TERZA FIGURA %%%%%%%%%%%
\begin{center}
\begin{scriptsize}
\tikz[scale=1] {
\node (u) at (0-.2,0) {$u$};
\node (m) at (0.5,0+.2) {$\mathfrak m$};
\node (w) at (0-.2,-0.5) {$w$};
\node (M) at (0.5+.2,-0.5) {$\mathfrak M$};

\coordinate (a) at (0,0);
\draw [] (a) circle [radius=0.05];
\coordinate (b) at (0.5,0);
\draw [] (b) circle [radius=0.05];
\coordinate (d) at (1.5,0);
\draw [] (d) circle [radius=0.05];
\coordinate (a1) at (0,-0.5);
\draw [] (a1) circle [radius=0.05];
\coordinate (c11) at (0.5,-0.5);
\draw [] (c11) circle [radius=0.05];

\draw[draw=black, -]  (0+0.05,0) -- (0.5-0.05,0);
\draw[draw=red, ->]  (0.5+0.05,0)--(1,0);
\draw[draw=red, ->]  (1,0)--(1,-1);
\draw[draw=red, ->]  (1,-1)--(0,-1);

\draw[draw=black, -]  (1,0)--(1.5-0.05,0);
\draw[draw=black, -]  (1.5,0-0.05) -- (1.5,-0.5);
\draw[draw=black, -]  (0,0-0.05) -- (0,-0.5+0.05);
\draw[draw=black, -]  (0+0.05,-0.5) -- (0.5-0.05,-0.5);
\draw[draw=black, -]  (0.5,0-0.05) -- (0.5,-0.5+0.05);
\draw[draw=red, <-] (0,-0.5-0.05) -- (0,-1);
\draw[draw=black, -]  (1.5,-0.5) -- (1.5,-1);
\coordinate (a3) at (0,-1.5);
\draw [] (a3) circle [radius=0.05];
\coordinate (d3) at (1.5,-1.5);
\draw [] (d3) circle [radius=0.05];

\draw[draw=black, -]  (0+0.05,-1.5) -- (0.5,-1.5);
\draw[draw=black, -]  (0.5,-1.5)--(1,-1.5);
\draw[draw=black, -]  (1,-1.5)--(1.5-0.05,-1.5);
\draw[draw=black, -] (0,-1) -- (0,-1.5+0.05);
\draw[draw=black, -]  (1.5,-1) -- (1.5,-1.5+0.05);
}
\end{scriptsize}
\end{center} 
\caption{\em Avoiding a hole in a corner in $\calH_n.$ \label{fig:gridwithhole-corner}}
\end{multicols}
\end{figure*}

\section{Identifiability through embeddings}
\label{sec:emb}
%Recall from  Section \ref{sec:prel} the definitions and the notations about embeddings.
Let $G=(V,E)$ and $H=(V',E')$ be two DAGs and consider $f$ to be an embedding $G \hookrightarrow_f H$.

%\subsubsection*{Embeddings, monitor placements and routing paths.}
Let $\chi$ be a monitor placement in $G$, and $\chi^f$  be the monitor placement for 
$H$ defined by $(f \circ \chi_{\tt i},f \circ \chi_{\tt o})$.
We want to explore what can be said on $\mu(H | {\chi^f})$ in terms of $\mu(G|{\chi})$ 
under the same routing mechanism or even under different routing mechanisms.
 
%Let us start with an example to understand how embeddings modify monitor placement and routing paths.
\begin{figure}[H]
\centering
\begin{center}
\begin{scriptsize}
\tikz[scale=1.4]{
\node (u) at  (-2,-1.5) {$u$};
\node (v) at  (-2,-.5)  {$v$};
\node (uu) at  (-1,-1.5) {$u'$};
\node (vv) at  (-1,-.5)  {$v'$};
\node (zz) at    (-1,-1)   {$z$};
\draw[<-] (u)--(v);
\draw[<-] (uu)--(zz);
\draw[<-] (zz)--(vv);
\draw[->,dotted](u)--(uu);
\draw[->,dotted](v)--(vv);

\node (u) at  (1,-1.5) {$u$};
\node (v) at  (1,-.5)  {$v$};
\node (z) at    (0.5,-1)   {$z$};
\node (uu) at  (2,-1.5) {$u'$};
\node (vv) at  (2,-.5)  {$v'$};
\node (zz) at    (1.5,-1)   {$z'$};
%\draw[<-] (u)--(v);
\draw[<-] (uu)--(zz);
\draw[->] (v) -- (z);
\draw[<-] (u) -- (z);
\draw[<-] (zz)--(vv);
\draw[<-] (uu)--(vv);

\draw[->,dotted](u)--(uu);
\draw[->,dotted](v)--(vv);
\draw[->,dotted](z)--(zz);

}
\end{scriptsize}
\end{center}
\caption{{\em Injective and bijective embeddings.}}

\label{fig:monplace}
\end{figure}

As can be seen from the first example in Figure \ref{fig:monplace}, a $1-1$ mapping can map an edge into a  line, so  (on the condition that out-degree of $v$ is $1$ or in-degree of $u$ is $1$) reducing $\mu(H)$ to $0$ disregarding of $\mu(G)$. We then consider $1-1$ and onto mappings for 
the embeddings (also called order-isomorphisms, see \cite{Schr66}). 
We can be tempted to think that under bijective mappings  $G \hookrightarrow H$, 
we can prove that $\mu(G) \geq \mu(H)$. The second example in Figure \ref{fig:monplace} shows that it might  be not always the case:
the sets $\{u',v'\}$ and $\{z'\}$ are separated in $H$ but not their inverse images
$\{u,v\}$ and $\{z\}$ in $G$.

In some cases however we can use embedabbility to say something on identifiability.
In the rest of the section we study what can be said on 
$\mu(H | {\chi^f})$  from $\mu(G | {\chi})$ when $G \hookrightarrow_f H$ and the mapping $f$ is bijective.
To simplify readability we always omit the $\chi$'s, writing simply  $\mu(G)$ and $\mu(H)$.

\smallskip

\noindent{\bf \em  Restricted topologies}.  Consider the following definition 
%of {\em routing consistency} 
given in \cite{DBLP:journals/corr/abs-1908-03519,DBLP:journals/corr/abs-1903-10636}.
\begin{definition}(\cite{DBLP:journals/corr/abs-1908-03519,DBLP:journals/corr/abs-1903-10636})
\label{def:rc}
A set of paths \calP is {\em routing consistent} if any two distinct paths $p$ and $p'$ in \calP and any distinct nodes $u$ and $w$ traversed by both paths (if any) 
$p$ and $p'$ follow the same subpath between $u$ and $w$.
\end{definition}
%Though a weak class of  topologies (any edge-cut disconnect the graph into two components) 
%These graphs are  employed in many practical routing protocols (see \cite{BHK17}).
In the directed case we can prove the following result.

\begin{theorem}
\label{thm:emdcr}
Assume that $G=(V,E)$ is a routing consistent directed graph
and that $G \hookrightarrow_f G'$. Then $\mu(G) \leq \mu(G')$.
\end{theorem}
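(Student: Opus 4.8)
The plan is to prove the direct implication that $k$-identifiability of $G$ forces $k$-identifiability of $G'$; taking $k=\mu(G)$ and invoking monotonicity then gives $\mu(G')\ge\mu(G)$. Since in this part of the section $f$ is taken bijective (an order-isomorphism), $f^{-1}$ is well defined and $V'=f(V)$, so $G'$ has no "new" nodes to worry about — which is exactly what rules out the edge-into-a-line pathology of Figure \ref{fig:monplace}. As a warm-up I would check that $f$ carries the node set $N$ of $G$ (the nodes lying on some path of $\pp(G\,|\,\chi)$) exactly onto the corresponding set $N'$ of $G'$: if $x'\in N'$ lies on a $G'$-path from $f(a)$ to $f(b)$, then $a\preceq_G f^{-1}(x')\preceq_G b$ with $a\in\mathfrak m$ and $b\in\mathfrak M$, so $f^{-1}(x')$ sits on the $G$-path $a\leadsto f^{-1}(x')\leadsto b$, a measurement path; the reverse inclusion is symmetric, so $N'=f(N)$ and projecting sets back and forth is harmless.

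The heart of the argument is a lifting lemma: every measurement path $p$ of $\pp(G\,|\,\chi)$ is sent by $f$ to a measurement path $f(p)$ of $\pp(G'\,|\,\chi^f)$ with $\mathrm{nodes}(f(p))=f(\mathrm{nodes}(p))$. To establish it I would first use routing consistency to show that every edge $(v_iv_{i+1})$ traversed by a measurement path of $G$ is a \emph{covering} pair of $\preceq_G$, i.e. no node lies strictly between $v_i$ and $v_{i+1}$. Indeed, if some $u$ satisfied $v_i\prec_G u\prec_G v_{i+1}$, I could reroute $p$ through $u$ between $v_i$ and $v_{i+1}$, keeping the prefix of $p$ up to $v_i$ and its suffix from $v_{i+1}$; in a DAG the rerouted middle meets the prefix (nodes $\preceq_G v_i$) and suffix (nodes $\succeq_G v_{i+1}$) only at $v_i,v_{i+1}$, so the concatenation is again an admissible path under $\CSP$ or $\CAP^-$, traversing both $v_i$ and $v_{i+1}$ yet following a different subpath between them, contradicting Definition \ref{def:rc}. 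Since $f$ is an order-isomorphism it preserves covering pairs, and in the DAG $G'$ a covering pair $f(v_i)\lessdot f(v_{i+1})$ can only be realized by the single edge $(f(v_i)f(v_{i+1}))$ (any connecting path would have to use a node strictly between them, of which there are none). Concatenating these edges exhibits $f(p)$ as a genuine $G'$-path with the claimed node set, whose endpoints lie in $f(\mathfrak m),f(\mathfrak M)$; injectivity of $f$ keeps it simple for $\CSP$ and loop-free/non-$\DLP$ for $\CAP^-$.

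With the lifting lemma the conclusion is routine. Given distinct $U',W'\subseteq N'$ with $|U'|,|W'|\le k$, set $U=f^{-1}(U')$ and $W=f^{-1}(W')$, which are distinct subsets of $N$ of the same sizes. By $k$-identifiability of $G$ there is a measurement path $p$ touching exactly one of $U,W$, say $U$. Its lift $f(p)$ has node set $f(\mathrm{nodes}(p))$, so it meets $U'=f(U)$ (as $p$ meets $U$ and $f$ is injective) while avoiding $W'=f(W)$ (as $p$ avoids $W$); thus $\pp(U')\triangle\pp(W')\neq\emptyset$, and $G'$ is $k$-identifiable.

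I expect the main obstacle to be the covering-pair step inside the lifting lemma. One must argue carefully that the reroute through $u$ is an \emph{admissible} path for the routing mechanism at hand (simple for $\CSP$, non-$\DLP$ for $\CAP^-$) and that it truly shares the two distinct nodes $v_i,v_{i+1}$ with $p$ while diverging strictly between them, so that Definition \ref{def:rc} is actually contradicted. It is the directedness of $G$ that forces the prefix, rerouted middle, and suffix to overlap only at $v_i$ and $v_{i+1}$; pinning down this interface is the delicate point, and it is precisely where both the DAG hypothesis and routing consistency are indispensable.
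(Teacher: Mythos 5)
Your proposal is correct, and it proves the theorem by a genuinely different decomposition than the paper, even though the core mechanism is the same. The paper works in the contrapositive: assuming $\mu(G')\le k$, it pulls a non-identifiability witness $U',W'$ back to $U=f^{-1}(U')$, $W=f^{-1}(W')$, supposes for contradiction that some measurement path $p$ in $G$ separates them, and pushes $p$ forward to $G'$ by \emph{interpolating} each image pair $f(u_i)\preceq_{G'} f(u_{i+1})$ with an arbitrary connecting path $p'_i$; routing consistency is invoked only in the residual case where an interpolated node lands in $W'$, which pulls back to a node strictly $\preceq_G$-between $u_i$ and $u_{i+1}$ and hence to a second $u_i$--$u_{i+1}$ path besides the edge of $p$. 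Your covering-pair lemma makes that residual case vacuous from the outset: routing consistency (via your rerouting argument) forces every edge traversed by a measurement path to be a covering pair of $\preceq_G$; bijectivity of $f$ — the section's standing assumption, which the paper's proof also needs for $f^{-1}$ — preserves covering pairs; and in a DAG a covering pair can only be realized by a single edge, so $f(p)$ is an honest edge path with node set exactly $f(\mathrm{nodes}(p))$, and identifiability transfers forward in one line. The two proofs use routing consistency through the identical trick (a node strictly between two consecutive path nodes yields a divergent admissible subpath between two shared nodes, contradicting Definition \ref{def:rc}), but deployed at different moments: the paper only where an extra node happens to hit $W'$, you uniformly and in advance. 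Your arrangement buys a stronger structural byproduct (bijective embeddings of routing-consistent DAGs carry measurement paths edge-by-edge, so in fact the paper's interpolating paths $p'_i$ are always single edges) and, notably, you are more careful at the delicate point the paper glosses over: you verify that the rerouted path is actually \emph{in} $\calP$ under the routing scheme — simple under $\CSP$ by the DAG antisymmetry argument for the prefix/middle/suffix interfaces, and non-$\DLP$ under $\CAP^-$ — so that Definition \ref{def:rc}, which speaks only of pairs of paths in $\calP$, genuinely applies; your warm-up identification $N'=f(N)$ likewise tidies a pullback step the paper leaves implicit. The paper's arrangement is lighter, spending routing consistency only where needed rather than proving the full covering lemma.
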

\begin{proof}
Assume $\mu(G') \leq k$. We prove that $\mu(G)\leq k$. 
Since $\mu(G') \leq k$, there are two sets $U',  W' \subseteq V'$ such that $U'\triangle W' \not =\emptyset$, and at least one of them, 
wlog say $U'$, has cardinality  $k+1$, and $\pp_{G'}(U') \triangle \pp_{G'}(W') = \emptyset$.
Fix $U=f^{-1}(U')$ and $W=f^{-1}(W')$. By injectivity of $f$, $U$ has cardinality $k+1$ and $U\triangle W \not =\emptyset$ 
(since otherwise $U'\triangle W'  =\emptyset$). 
Assume by contradiction that $\pp_{G}(U) \triangle \pp_{G}(W) \not =\emptyset$.  That is, there exists a 
path $p$ in $G$ from $S$ to $T$ touching nodes of only one  between $U$ and $W$, say $U$.
Let $p=(u_1u_2)\ldots(u_ru_{r+1})$, $r\geq 1$. Hence $u_i\leq u_{i+1}$ for all $i\in[r]$. Let $u'_i=f(u_i)$.  Clearly if $u_i\in U$, then $u'_i\in U'$. 
Since $f$ is an embedding (i.e. $x\leq y$ iff $f(x)\leq f(y)$), then $u'_i\leq u'_{i+1}$, $u'_1 \in S'$ and $u'_{k+1} \in T'$. 
Hence  there are paths $p'_i$ in $G'$ from $u'_i$ to $u'_{i+1}$ and the path $p'=p'_1,\ldots p'_{k}$ is a path from
$S'$ to $T'$ in $G'$.  If all nodes in $p'$ are in $V'\setminus W$, this is a contradiction with the fact $\pp_{G'}(U') \triangle \pp_{G'}(W') = \emptyset$.
Then there is an $i\in[r]$ such that $p'_i$ is touching a node $w' \in W'$. Hence we have that in $G'$ , $u'_i\leq w' \leq u'_{i+1}$. Since $f$ is an embedding  and since $u_i=f^{-1}(u'_i)$, this means that in $G$, $u_i\leq f^{-1}(w) \leq u_{i+1}$. Then in $G$ there is a path from $u_i$ to $u_{i+1}$ passing through $ f^{-1}(w)$. This contradicts the routing consistency of $G$ since between $u_i$ and $u_{i+1}$ there is another path, the edge that is in $p$.
\end{proof}

\smallskip 

\noindent{\bf \em Restricted embeddings}.
The previous example shows that restricting the class of graphs one can still hope to bound identifiability using embeddability. 
We restrict the class of embeddings, obtaining similar relationships but for broader classes of topologies.
Assume that $f$ is an embedding between two DAGs $G_1=(V_1,E_1)$ and $G_2=(V_2,E_2)$. Let us say that $f$ is 
{\em distance-increasing} (d.i.) if for all $x,y\in V_1$, $d_{G_1}(x,y) \leq d_{G_2}(f(x),f(y))$. Here $d_{G}(x,y)$ is the length of the shortest path between $x$ and $y$ in $G$. We call $f$ {\em distance preserving} (d.p.) if $d_{G_1}(x,y) = d_{G_2}(f(x),f(y))$. 

Distance-increasing of $f$ immediately implies  that the inverse image under $f$ of edges of $G_2$ are edges of $G_1$.

\begin{lemma}
\label{lem:distincr}Let $G=(V,E)$ and $H=(W,F)$.
If $G \hookrightarrow_f H$, $f$ is d.i. and $(w_1,w_2 )\in F$, then $(f^{-1}(w_1),f^{-1}(w_2) )\in E$.
\end{lemma}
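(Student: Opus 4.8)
The plan is to reduce the claim to a one-line distance computation, exploiting that $f$ is simultaneously order-preserving (as an embedding) and distance-increasing. Write $u_1 = f^{-1}(w_1)$ and $u_2 = f^{-1}(w_2)$; since $f$ is a bijection these are well defined, and $u_1 \neq u_2$ because $w_1 \neq w_2$ (an edge joins distinct endpoints in a DAG). The goal is then to show that $u_1$ and $u_2$ are joined by a single edge oriented from $u_1$ to $u_2$, i.e. that $(u_1,u_2)\in E$.

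First I would record what the edge $(w_1,w_2)\in F$ buys us in $H$: it gives $w_1 \prec_H w_2$ and, crucially, $d_H(w_1,w_2)=1$ (a length-one directed path exists, and the endpoints are distinct). Next I would transport the order relation back to $G$ through the embedding. Since $f$ is an embedding, $f(u_1)=w_1 \preceq_H w_2 = f(u_2)$ forces $u_1 \preceq_G u_2$; as $u_1 \neq u_2$ this yields $u_1 \prec_G u_2$, so there is at least one directed path from $u_1$ to $u_2$ in $G$ and $d_G(u_1,u_2)$ is finite with $d_G(u_1,u_2)\geq 1$.

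Then I would apply the distance-increasing hypothesis to the pair $u_1,u_2 \in V$:
\[
d_G(u_1,u_2) \;\leq\; d_H\bigl(f(u_1),f(u_2)\bigr) \;=\; d_H(w_1,w_2) \;=\; 1 .
\]
Combining $1 \leq d_G(u_1,u_2) \leq 1$ gives $d_G(u_1,u_2)=1$, i.e. the shortest directed path from $u_1$ to $u_2$ has length one, which is exactly an edge $(u_1,u_2)\in E$. This completes the argument.

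I do not expect a serious obstacle here; the only two points requiring care are (i) justifying that $d_G(u_1,u_2)$ is finite, which is precisely why I derive $u_1 \prec_G u_2$ from the embedding \emph{before} invoking the distance inequality, and (ii) confirming the orientation of the resulting edge. For (ii), even if one reads $d_G$ as an undirected distance, a length-one path between $u_1$ and $u_2$ together with $u_1 \prec_G u_2$ must be the edge $(u_1,u_2)$: an edge $(u_2,u_1)$ would give $u_2 \preceq_G u_1$ and, with $u_1 \prec_G u_2$, a cycle, contradicting that $G$ is a DAG.
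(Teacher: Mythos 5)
Your proof is correct: the paper in fact states this lemma without any proof, introducing it only with the remark that distance-increasingness ``immediately implies'' it, and your argument --- pulling the order relation back through the embedding to obtain $u_1 \prec_G u_2$ (hence a finite $d_G(u_1,u_2)\geq 1$), then squeezing $1 \leq d_G(u_1,u_2) \leq d_H(w_1,w_2) = 1$ via the d.i.\ hypothesis --- is exactly the intended one-step verification made explicit. Your two points of care, establishing finiteness of the distance \emph{before} invoking the inequality and ruling out the reversed edge $(u_2,u_1)$ by acyclicity, supply precisely the details the paper elides, so there is nothing to compare against and nothing missing.
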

 
\begin{theorem}
\label{thm:dremb}
Let $G$ and $G'$ be two DAGs 
such that $G \hookrightarrow_f G'$, where $f$ is a  (d.i.)-embedding. Then $\mu(G)\geq \mu(G')$. 
\end{theorem}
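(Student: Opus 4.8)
The plan is to use the lower-bound recipe of Section~\ref{subsubsec:lb}: writing $k=\mu(G')$, I will show that $G$ is $k$-identifiable, which by definition of $\mu$ yields $\mu(G)\geq k=\mu(G')$. So I would fix two distinct node sets $U,W\subseteq V$ with $|U|,|W|\leq k$ and aim to produce a measurement path in $\calP(G\mid\chi)$ touching exactly one of them. The idea is to transport the separation from $G'$ back to $G$: push $U,W$ forward through $f$, separate their images in $G'$ using $k$-identifiability there, and then pull the separating path back along $f^{-1}$.

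Concretely, set $U'=f(U)$ and $W'=f(W)$ in $V'$. Since $f$ is injective, $|U'|=|U|\leq k$, $|W'|=|W|\leq k$, and (because $U\neq W$) $U'\triangle W'\neq\emptyset$. As $G'$ is $k$-identifiable by the choice $k=\mu(G')$, there is a path $p'\in\calP(G'\mid\chi^f)$ touching exactly one of $U',W'$, say $U'$ but not $W'$. The crux is to pull $p'$ back to a path in $G$. Writing $p'=w'_1w'_2\cdots w'_r$ with $(w'_i,w'_{i+1})\in E'$, $w'_1\in f(\mathfrak m)$ and $w'_r\in f(\mathfrak M)$, I would use bijectivity of $f$ to define $v_i=f^{-1}(w'_i)$ (well defined on all of $V'$ precisely because $f$ is onto). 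By Lemma~\ref{lem:distincr} each edge $(w'_i,w'_{i+1})$ of $G'$ pulls back to an edge $(v_i,v_{i+1})\in E$, so $p=v_1v_2\cdots v_r$ is a genuine path of $G$. Its endpoints are correct, $v_1=f^{-1}(w'_1)\in\mathfrak m$ and $v_r=f^{-1}(w'_r)\in\mathfrak M$, by the definition $\chi^f=(f\circ\chi_{\tt i},f\circ\chi_{\tt o})$; and since $f^{-1}$ is injective, $p$ is simple (and not a $\DLP$) whenever $p'$ is, so the pullback respects both $\CSP$ and $\CAP^-$.

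Finally I would verify that the touching pattern is preserved. Because $f$ is injective and $U'=f(U)$, $W'=f(W)$, for every index we have $v_i\in U\iff w'_i\in U'$ and $v_i\in W\iff w'_i\in W'$. Hence $p$ touches $U$ (as $p'$ touches $U'$) and avoids $W$ (as $p'$ avoids $W'$), so $p\in\pp_G(U)\setminus\pp_G(W)$ and $\pp_G(U)\triangle\pp_G(W)\neq\emptyset$. This establishes $k$-identifiability of $G$ and hence the theorem. I expect the one delicate point to be the edge-by-edge pullback: absent the distance-increasing hypothesis an edge of $G'$ could correspond to a longer path, or to non-adjacent vertices, in $G$, and the sequence $v_1\cdots v_r$ would fail to be a path. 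It is exactly Lemma~\ref{lem:distincr}, together with bijectivity of $f$ making $f^{-1}$ total, that rules this out and makes the argument go through.
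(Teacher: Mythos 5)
Your proof is correct and takes essentially the same route as the paper's: both arguments push the node sets $U,W$ forward through $f$, obtain a separating path in $G'$, and pull it back edge-by-edge to a path of $G$ via Lemma~\ref{lem:distincr} (with bijectivity of $f$ making $f^{-1}$ total and injectivity preserving the touching pattern and the endpoints in $\mathfrak m$ and $\mathfrak M$). The only difference is organizational: you argue directly that $G$ is $\mu(G')$-identifiable, while the paper packages the identical transport as a contrapositive with an internal contradiction, starting from a non-separable pair in $G$.
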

\begin{proof}
Assume $\mu(G) \leq k$, we prove that $\mu(G') \leq k$. Let $S$ and $T$ be respectively the set of source and target nodes in $G$, so that
$\pp_G(u)$ is the set of all paths from $S$ to $T$ in $G$ passing through $u$.
Since $\mu(G) \leq k$, there are two sets $U,  W \subseteq V$ such that $U\triangle W \not =\emptyset$, at least one of them, say $U$, has cardinality  $k+1$, and $\pp_G(U) \triangle \pp_G(W) = \emptyset$.
Fix $U'=f(U)$ and $W'=f(W)$ and let $S'=f(S)$ and $T'=f(T)$. 
By injectivity of $f$, $U'$ has cardinality $k+1$ and clearly $U'\triangle W' \not =\emptyset$ 
(since otherwise $U\triangle W  =\emptyset$). 
Assume by contradiction that $\pp_{G'}(U') \triangle \pp_{G'}(W') \not =\emptyset$.  That means that there exists a 
path $p'$ from $S'$ to $T'$ touching nodes of only one between $U'$ and $W'$, say $U'$. Let $p'=(u'_1,u'_2)\ldots(u'_{r-1},u'_{r})$ and $u_i = f^{-1}(u'_i)$. By Lemma \ref{lem:distincr} for all $i\in [r-1]$, $(u_i,u_{i+1})$ is an edge in $G$ and since
 $f$ is an embedding, then $u_i \in U$, $u_1\in S$ and $u_r \in T$.
But then $p=(u_1,u_2)\ldots(u_ru_{r+1})$ is a path from $S$ to $T$ touching only nodes in $U$.
This is a contradiction with the fact $\pp_G(U) \triangle \pp_G(W) = \emptyset$.
\end{proof}

It is straightforward to see that if $f$ is distance-preserving, then equality holds.
\begin{corollary}
\label{thm:emddr}
Let $G$ and $G'$ be two DAGs 
such that $G \hookrightarrow_f G'$, where $f$ is a  (d.p.)-embedding. Then $\mu(G)=\mu(G')$. 
\end{corollary}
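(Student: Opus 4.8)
The plan is to deduce the equality from Theorem \ref{thm:dremb} by applying it twice: once to $f$ and once to the inverse embedding $f^{-1}$. First I would observe that a distance-preserving embedding is a fortiori distance-increasing, since $d_{G}(x,y) = d_{G'}(f(x),f(y))$ trivially implies $d_{G}(x,y) \leq d_{G'}(f(x),f(y))$. Hence Theorem \ref{thm:dremb} applies directly to $G \hookrightarrow_f G'$ and yields one of the two inequalities, $\mu(G) \geq \mu(G')$. The whole content of the corollary therefore reduces to establishing the reverse inequality $\mu(G') \geq \mu(G)$.

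For the reverse direction I would exploit the standing assumption of this section that $f$ is bijective, so that $f^{-1}\colon V' \to V$ is a well-defined bijection. I would then verify that $f^{-1}$ is itself an embedding $G' \hookrightarrow_{f^{-1}} G$: because $f$ respects the partial order in both directions, substituting $u = f^{-1}(x')$ and $v = f^{-1}(y')$ into the equivalence $u \preceq_G v \iff f(u) \preceq_{G'} f(v)$ gives $f^{-1}(x') \preceq_G f^{-1}(y') \iff x' \preceq_{G'} y'$, which is exactly the embedding condition for $f^{-1}$. Next I would check that $f^{-1}$ inherits the distance-preserving property: writing $x' = f(x)$ and $y' = f(y)$ and using the identity for $f$, one gets $d_{G'}(x',y') = d_{G'}(f(x),f(y)) = d_G(x,y) = d_G(f^{-1}(x'),f^{-1}(y'))$, so $f^{-1}$ is distance-preserving, hence distance-increasing. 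Applying Theorem \ref{thm:dremb} to the embedding $G' \hookrightarrow_{f^{-1}} G$ then gives $\mu(G') \geq \mu(G)$, and combining with the first inequality closes the proof, $\mu(G) = \mu(G')$.

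There is no substantial obstacle here, which is why the authors flag the result as straightforward; the only point requiring care is that bijectivity is genuinely necessary. If $f$ were merely injective, $f^{-1}$ would only be a partial map defined on the image $f(V) \subsetneq V'$, and it could fail to be an embedding of all of $G'$, so Theorem \ref{thm:dremb} could not be invoked in reverse. This is precisely why the corollary is stated within the regime where $f$ is assumed to be a bijective (order-isomorphic) embedding, and it is the symmetry of the distance-preserving condition under $f \mapsto f^{-1}$ that makes the two-sided argument go through.
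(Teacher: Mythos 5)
Your proof is correct and takes exactly the route the paper intends: the paper gives no explicit argument for this corollary (it is flagged as ``straightforward to see'' after Theorem \ref{thm:dremb}), and the natural reading is precisely your two-sided application of Theorem \ref{thm:dremb}, once to $f$ and once to the inverse map $f^{-1}$, which exists by the section's standing assumption that the embeddings under consideration are bijective (order-isomorphisms). Your explicit checks that $f^{-1}$ is itself an order embedding and distance-preserving, and your remark that bijectivity is genuinely needed for the reverse inequality, are exactly the details the paper leaves implicit.
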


The \emph{dimension} of $G$, $\dim(G)$ is the smallest integer $d$ such that $G \hookrightarrow  \calH_{n,d}$.
 Dushnik and Miller \cite{DM41} proved that for any $n > 1$, the hypergrid $\calH_{n,d}$ has dimension exactly $d$.

We explore how to bound $\mu(G)$ in terms of $\dim(G)$.  Let $G^*$ be the {\em transitive closure}  of a DAG $G$.
\begin{lemma}
\label{lem:tranclo}
Let $G$ and $H$  be  DAGs. If $G$ is closed under transitivity and
$G \hookrightarrow_f H$, then $\mu(G) \geq \mu(H)$.  In particular $\mu(G^*)\geq \mu(G)$.
\end{lemma}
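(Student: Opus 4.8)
The plan is to reduce the statement to Theorem~\ref{thm:dremb} by showing that the hypothesis of transitive closure on $G$ forces the (bijective) embedding $f$ to be distance-increasing. Once $f$ is shown to be a (d.i.)-embedding, Theorem~\ref{thm:dremb} applied to $G \hookrightarrow_f H$ yields $\mu(G) \geq \mu(H)$ directly. So the whole content lies in verifying the distance-increasing inequality $d_G(x,y) \leq d_H(f(x),f(y))$ for every $x,y \in V$, where $f$ is an order-isomorphism, i.e. $x \preceq_G y$ iff $f(x) \preceq_H f(y)$.

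First I would record the effect of transitivity on distances in $G$. Since $G$ is closed under transitivity, whenever $x \prec_G y$ (that is, $y$ is reachable from $x$) there is already a direct edge $(x,y)\in E$, so $d_G(x,y)=1$; and when $x=y$ we have $d_G(x,y)=0$. For any pair with $x \not\preceq_G y$ there is no directed path from $x$ to $y$, so $d_G(x,y)=\infty$ by convention. Thus in a transitively closed DAG the only finite distances are $0$ and $1$, which is exactly what makes the comparison against $H$ go through.

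Next I would check the inequality case by case. If $x=y$ both sides are $0$. If $x \prec_G y$, then $d_G(x,y)=1$ while $f(x) \prec_H f(y)$ exhibits a directed path in $H$, hence $d_H(f(x),f(y)) \geq 1$. If $x$ and $y$ are incomparable in $G$ (or $y \prec_G x$), then $f(x),f(y)$ are correspondingly incomparable in $H$ (or $f(y)\prec_H f(x)$), so both distances are $\infty$ and the inequality holds trivially. Hence $f$ is distance-increasing and Theorem~\ref{thm:dremb} gives $\mu(G)\geq\mu(H)$. Equivalently, one may bypass distances and argue as in Lemma~\ref{lem:distincr}: transitivity guarantees that every edge of $H$ pulls back under $f^{-1}$ to an edge of $G$, which is precisely the property the proof of Theorem~\ref{thm:dremb} consumes.

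Finally, for the ``in particular'' claim I would instantiate the first part with $G^*$ in the role of the transitively closed graph and $G$ in the role of $H$. The identity on vertices is a bijection $G^* \to G$ that is an order-isomorphism, because forming the transitive closure does not change reachability, so $\preceq_{G^*} = \preceq_G$; thus $G^* \hookrightarrow G$. Since $G^*$ is by construction closed under transitivity, the first part applies and yields $\mu(G^*)\geq\mu(G)$. I expect the only delicate point to be the bookkeeping of directed distances when pairs are incomparable (the $\infty$ convention), since the distance-increasing definition of Section~\ref{sec:emb} is phrased for shortest paths; everything else is a direct application of the order-isomorphism property together with transitivity.
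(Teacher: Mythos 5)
Your proof is correct and follows the paper's own argument exactly: you show that transitivity of $G$ forces every finite directed distance to be $0$ or $1$, so the bijective embedding is distance-increasing and Theorem~\ref{thm:dremb} applies, and you handle the ``in particular'' claim via the identity map as a bijective embedding from $G^*$ to $G$, just as the paper does. The only difference is that you spell out the case analysis (including the $\infty$ convention for incomparable or reverse-ordered pairs) that the paper compresses into the single assertion that $f$ is necessarily distance-increasing.
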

\begin{proof}
Since $G$ is closed under transitivity then the embedding $f$ is necessarily a distance-increasing one. 
Hence the first claim follows by Theorem  \ref{thm:dremb}. The second claim follows since the identity 
is a bijective embedding from $G^*$ to $G$.
\end{proof}

\begin{theorem}
\label{thm:dim}
Let $G$ be a DAG  closed under transitivity. Then $\mu(G) \geq \dim(G)$.
\end{theorem}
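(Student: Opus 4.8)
The plan is to realize $G$ inside a hypergrid of the correct dimension and then transport the hypergrid's identifiability back to $G$. Writing $d=\dim(G)$, the definition of dimension immediately supplies an embedding $G \hookrightarrow_f \calH_{n,d}$ for some support $n$. The whole argument then reduces to chaining two facts already established in the excerpt: Lemma~\ref{lem:tranclo}, which propagates $\mu$ along embeddings whose source is transitively closed, and Theorem~\ref{thm:hg}, which pins the identifiability of the directed hypergrid at exactly $d$.

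First I would record that, because $G$ is closed under transitivity, every comparable pair $x\prec_G y$ is already joined by an edge, so $d_G(x,y)=1\le d_{\calH_{n,d}}(f(x),f(y))$, while incomparable pairs remain incomparable under the order-isomorphism $f$; hence $f$ is distance-increasing in the sense of Theorem~\ref{thm:dremb}. This is exactly the hypothesis of Lemma~\ref{lem:tranclo}, which therefore yields $\mu(G)\ge \mu(\calH_{n,d})$. Second, I would invoke Theorem~\ref{thm:hg} (together with the base case $d=2$ of Theorem~\ref{thm:dirgrid}) to get $\mu(\calH_{n,d})=d$, and combine the two to obtain $\mu(G)\ge d=\dim(G)$.

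The delicate point, which I expect to be the real obstacle, is the monitor placement, since the two cited results do not speak about it in the same way. Lemma~\ref{lem:tranclo} compares $\mu(G\mid\chi)$ with $\mu(\calH_{n,d}\mid\chi^f)$ for the \emph{induced} placement $\chi^f=(f\circ\chi_{\tt i},\,f\circ\chi_{\tt o})$, whereas Theorem~\ref{thm:hg} certifies the value $d$ only for the distinguished border placement $\chi_{\tt g}$ on $\calH_{n,d}$. To make the chain go through I would choose the placement on $G$ to be the $f$-pullback of $\chi_{\tt g}$, so that $\chi^f=\chi_{\tt g}$; this forces me to argue that an embedding witnessing $\dim(G)=d$ can be selected so that all $\chi_{\tt g}$-monitored (border) nodes of $\calH_{n,d}$ lie in the image $f(V)$, and that $n\ge 3$ may be assumed so that Theorem~\ref{thm:hg} is applicable. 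I would therefore phrase the conclusion as the existence of a monitor placement on $G$ attaining identifiability at least $\dim(G)$, and it is precisely this reconciliation of placements — not the embedding itself nor the distance-increasing check — that demands the most care.
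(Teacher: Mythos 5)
Your proposal follows exactly the paper's route: the paper proves Theorem~\ref{thm:dim} by taking the embedding $f$ witnessing $G \hookrightarrow \calH_{n,\dim(G)}$, observing (via Lemma~\ref{lem:tranclo}) that closure under transitivity forces $f$ to be distance-increasing so that Theorem~\ref{thm:dremb} yields $\mu(G) \geq \mu(\calH_{n,\dim(G)})$, and then invoking $\mu(\calH_{n,\dim(G)})=\dim(G)$ from Theorem~\ref{thm:hg}. The monitor-placement reconciliation you single out as the delicate point is precisely the detail the paper leaves implicit (it announces at the start of Section~\ref{sec:emb} that the $\chi$'s are omitted, comparing $\mu(G\mid\chi)$ with $\mu(H\mid\chi^f)$ throughout), so your explicit pullback of $\chi_{\tt g}$ along $f$ rigorizes a step the paper glosses over rather than constituting a different argument.
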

\begin{proof}
Let $f$ be the function witnessing the embedding $G \hookrightarrow  \calH_{n,\dim(G)}$.
Since $G$ is closed under transitivity and by Theorem \ref{thm:hg} $\mu(\calH_{n,\dim(G)})=\dim(G)$, the claim follows by previous Lemma.
\end{proof}
\begin{corollary}
\label{cor:emb}
For all graph $G$, for all $k \in \mathbb N$, $\mu(G^k) \geq \mu(G)$.
\end{corollary}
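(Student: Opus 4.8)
The plan is to realize the identity map as a distance-increasing embedding of $G^k$ into $G$ and then invoke Theorem \ref{thm:dremb}. Recall that in the $k$-th power $G^k$ there is an edge from $u$ to $v$ exactly when $v$ is reachable from $u$ by a directed path of length at most $k$ in $G$. Since $G$ is a DAG and every edge of $G^k$ respects the topological order of $G$, the graph $G^k$ is again a DAG, so Theorem \ref{thm:dremb} applies.

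First I would check that the identity $\mathrm{id}\colon V\to V$ is an embedding $G^k \hookrightarrow_{\mathrm{id}} G$, i.e. that the two posets carry the same order: $u \preceq_{G^k} v$ iff $u \preceq_G v$. One direction is immediate, since each edge of $G^k$ witnesses a directed path in $G$; conversely every single edge of $G$ is in particular a path of length $\le k$, hence an edge of $G^k$. Concatenating edges then shows that reachability in $G^k$ and in $G$ coincide, so $\mathrm{id}$ is a bijective order-isomorphism.

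Next I would verify the (d.i.) condition required by Theorem \ref{thm:dremb}, namely that for the embedding with source $G^k$ and target $G$ one has $d_{G^k}(x,y) \le d_{G}(x,y)$ for all $x,y$. This is the heart of the argument, though it is short: a shortest path from $x$ to $y$ in $G$ of length $\ell = d_G(x,y)$ can be traversed in $G^k$ using jumps of at most $k$ consecutive edges, giving $d_{G^k}(x,y) \le \lceil \ell/k\rceil \le \ell$. Hence distances can only grow when passing from $G^k$ to $G$, which is precisely what it means for $\mathrm{id}$ to be distance-increasing from $G^k$ into $G$.

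With these two facts, applying Theorem \ref{thm:dremb} to the (d.i.)-embedding $G^k \hookrightarrow_{\mathrm{id}} G$ yields $\mu(G^k) \ge \mu(G)$, as claimed. The only point requiring care is the bookkeeping of which graph plays the role of the source in Theorem \ref{thm:dremb}: because that theorem concludes $\mu(\text{source}) \ge \mu(\text{target})$, it is essential that the distance-increasing embedding run \emph{from} the denser graph $G^k$ (where distances are shorter) \emph{into} $G$, and not in the opposite direction. Once the roles are fixed correctly, the corollary is immediate, and taking $k$ at least the diameter of $G$ recovers the special case $\mu(G^*)\ge\mu(G)$ of Lemma \ref{lem:tranclo}.
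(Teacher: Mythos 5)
Your argument is correct and follows the paper's intended route: since every edge of $G$ is an edge of $G^k$, the identity is a bijective order-isomorphism $G^k \hookrightarrow G$ that is distance-increasing in the sense of Section \ref{sec:emb}, and Theorem \ref{thm:dremb} applied with $G^k$ as the source immediately yields $\mu(G^k)\geq\mu(G)$ --- the same device the paper uses for the identity embedding of $G^*$ into $G$ in the proof of Lemma \ref{lem:tranclo}. Your bookkeeping of the direction (source $G^k$, target $G$, with $d_{G^k}(x,y)\leq d_G(x,y)$, which in fact follows already from the edge containment $E(G)\subseteq E(G^k)$ without the $\lceil \ell/k\rceil$ jump argument) is exactly what the definitions require, so the corollary follows as you state.
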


\section{Applications}
\label{sec:algo}

%\subsection{Designing an optimal network wrt failure maximal identifiability}
%\label{subsecalgo:1}
Assume we have to design a network  over  $N\geq 4$ nodes and we aim to have maximal identifiability of failure nodes. Theorem \ref{thm:undgrid2} suggests how to  set edges between the nodes in the network and how to place monitors in such a way to reach an identifiability of at most $\log_3 N$.
Let $n\geq 3$ and set a dimension $d$ in such a way $N=n^d$. Since $n\geq 3$, then $N\geq 3^d$. Hence as long as 
$d\leq \log_3 N$, Theorem \ref{thm:undgrid2} applies. Assume that all values are integers.
%(otherwise  approximate them to smallest integer values such that  $N \leq  3^d$). 
Assign an address to each node as a $d$-dimensional vector in $[n]$ and place edges between nodes following  $\calH_{n,d}$.
% Finally place the $2d$ input and output monitors %linked to those nodes, as showed in Theorem \ref{thm:undgrid2}. 
\subsection{Adding edges to boost node failure identifiability}
\label{subsecalgo:2}
Assume to have a network with very low maximal identifiability of failure nodes (for instance due to a  small minimal in-degree).  We explore the idea to add edges to get better maximal identifiability. 
We propose the following algorithm whose main idea is that of  trying to modify a graph $G$ in order to approach a hypergrid of 
dimension $d$ (a parameter to be tuned), adding edges to the topology  in order to increase the minimal degree to $d$ and choosing $d$ input and $d$ output monitors.  
%Experiments  on  concrete networks will appear on the full version (see the draft  \cite{GR17}). 

\begin{tiny}
\begin{algorithm}[ht!]
\caption{\tt AGrid}\label{euclid}
\begin{algorithmic}[1]
%\Procedure{MyProcedure}{}
\renewcommand{\algorithmicrequire}{\textbf{Input:}}
\renewcommand{\algorithmicensure}{\textbf{Output:}}
\Require $G=(V,E)$, $d$  
\Ensure  $G^{\tt A}=(V,E^{\tt A})$, $I_\mathfrak m \subseteq V$, $I_\mathfrak M \subseteq V$

{ \small /* Boost minimal degree as close to $d$ as possible */ }
\ForAll{$v \in V : \deg(v)<d$}
\label{lin:rani} \State $W=$ choose at random $d-|N(v)|$ node in $V \setminus N(v)$
\ForAll{$w \in W$}
\label{lin:ranf} \State $E=E\cup (v,w)$ \EndFor
 \EndFor

{\small /* Select  input and output nodes*/ }
\label{lin:mon} \For{$i=1\ldots d$} 
\State  Select $x,y\in V$ according to heuristic MDMP
\State $\mathfrak m= \mathfrak m \cup \{x\};  \mathfrak M= \mathfrak M \cup \{y\};$
\State $V = V-\{x,y\}$ 
\EndFor
 %\EndProcedure
\end{algorithmic}
\end{algorithm}
\end{tiny}
%Loosely speaking the idea of the algorithm is to boost maximal identifiability of a network trying to modify it  approximating  a hypergrid.   It is based on the fact that by Theorem \ref {thm:undgrid2}  on $d$-dimensional hypergrids we can reach an identifiability of $d-1$, and the observation that $d=O(\log N)$ suffices. 
Given a network $G$, {\tt Agrid}'s aim is to add  a number of random edges so that the minimal degree of the network increases to some suitable  $d=d(N)$ slow-growing function of the number of nodes $N$ in the network. 
%In accordance to above observations, we consider $d(N)$  the slow-growing function 
%$\log N$ (though in the experiments we also use the slower function $d(N)= \sqrt{\log N}$ to validate more cases).
{\tt Agrid} assumes to work with a network where monitors are not placed. 
To place monitors we  follow the heuristic of placing monitors  
in the nodes of minimal degree.  We call this heuristic MDMP 
\footnote{The reason of this choice is based on the fact that Theorem \ref{thm:undgrid2} holds for any $\chi$, in particular when monitors are on the corner nodes.}. 
The algorithm receives in input the graph $G$ underlying the network (undirected) and the value $d=d(N)$ and release in output a graph $G^{\tt A}$ whose minimal degree is $d$. The addition of edges is performed between Lines 1 and 4. For each node $v$ with degree smaller than $d$, 
we choose at random a number of neighbours $w$, namely $d-|N(v)|$, and we add an edge in the network between $v$ and $w$, 
keeping updated the set of edges (Line 4).
 To decide what nodes in the network will be monitors, we follow the MDMP strategy. We order the nodes according to  their degree and we choose the first $2d$ to define the sets $\mathfrak m$ and  $\mathfrak M$.  In the {\tt For} in Line 5 we choose $2d$ nodes to be linked to input and output monitors. Notice that a same monitor cannot be chosen to be both in $\mathfrak m$ and in $\mathfrak M$. Using the same heuristic we choose $2d$ monitors in $G$ as well.

\subsubsection{Applicative scenarios for {\tt Agrid}.}
\label{subsec:applied}
We are not aware of other approaches explicitly adding edges to boost identifiability.
Adding a link in some cases may require local or physical access to nodes, access that can be used to check node reliability.
Yet, looking for failing nodes in networks is a process likely to run several times during the 
working-life of  a network, while setting new links to boost identifiability is  an intervention that can be done 
only once, especially if the network is assumed to have a fixed topology.   Hence it makes sense to study cost-benefit tradeoffs for such intervention. We propose below an example of such tradeoffs.

Furthermore, the approaches based on deciding how many monitors to use and where to place them, see for example \cite{DBLP:journals/ton/HeGMLST17}, also might require physical interventions both hardware and software to nodes and links  in the network.  
A difference with our case is that for nodes geographically far apart, the cost of adding a link might be expensive. But this is not always the case.
There are examples of networks where adding links may not require a local intervention (or requires a limited one) and hence our approach  is  reasonable. Its feasibility must be decided again according to some cost-benefit analysis.  
We discuss feasibility of {\tt Agrid} on three possible scenarios: static networks, dynamic networks, subnetworks. 

\smallskip
\noindent{\em Static networks}.
Static networks are grounded on a fixed topology which does not change in the time. In such cases it makes sense to 
analyze the economical feasibility of running  {\tt Agrid}. A way to reduce both the costs and the physical access to nodes and links of the network is an approach suggested for nodes in \cite{DBLP:journals/ton/HeGMLST17}. 

\smallskip
\noindent{\em On-demand link placement.}
Similarly to what is done for nodes in   \cite{DBLP:journals/ton/HeGMLST17}, we can think to employ as additional links in $G^{\tt A}$ {\em temporary links}, which only participate in taking measurements (hence built upon very simple hardware and protocols to transmit simple data packet) and not in other more complex functions. This would reduce the cost of adding links and it might simplify the type of  physical access to the network.  
As a case example to have an idea of the real number of edges to be added, we consider  Table \ref{Tab:enunet}. By adding only 8 edges (over an initial number of 17) one can guarantee  on the real network EuNet an increasing of the maximal identifiability from 0 to 2 (the number of monitors is 3 in both measurements).

\smallskip
\noindent{\em Example of cost-benefit tradeoffs for static networks}.
Assume for the maintenance of a static network $G$ we run end-to-end measurements to detect failure nodes 
for each time $t\in T$.  We dispose of  a function $C_G(e(v,w))$ measuring the cost of adding an edge between nodes $u$ and 
$w$ in $G$. Furthermore we have a function $B_G(t)$ measuring the cost of running a tomography test 
on a network $G$ at time $t \in T$. Let $G^{\tt A}=(V,E^{\tt A})$ be the graph resulting applying {\tt Agrid} to $G$. 
The function $B_G(t)$ may be increasing in the time $t$ but it is meant to be decreasing in the maximal identifiability of $G$.
%\footnote{For example if $B$ grows proportional to $f(\mu(N))^{-1}$, for some increasing real function $f$.}, in such a way that $B_{G^{\tt A}}(t) < B_G(t)$ for all $t \in T$. 
 We can define a function $\kappa(G,T)$ measuring the tradeoff between the costs and the benefits of applying {\tt Agrid} to $G$ for 
times in $T$ as $$\kappa(G,T) = \frac{\sum_{t \in T}B_G(t)}{\sum_{e\in E^{\tt A}}C_G(e) + \sum_{t \in T}B_{G^{\tt A}}(t)}.
$$ As long as $\kappa(G,T)<1$ an application of {\tt Agrid} produces more benefits than costs on the maintenance of the network $G$.

\smallskip
\noindent{\em Dynamic networks}.
In dynamic networks the  topology is changing in the time according to some rules (but they can be even unpredictable).
They are specified by a sequence of graphs $\{G_t\}_{t \in T}$. For example some cases of wireless networks  are dynamical networks where the underlying topology changes at each given time. 
Nodes are supposed to have a built-in mechanism $\cal M$ to set new links among the  nodes in-sight. 
In such cases we can think to modify  {\tt Agrid} in such a way that links to one node $u$ are added randomly choosing the other nodes among  the nodes reachable from $u$ according to mechanism $\cal M$.
The approach of temporary links would be particularly  suited in the case of dynamic networks, where we can think of adding new edges at each time for each network $G_t$.

\smallskip
\noindent{\em Example of cost-benefit tradeoffs for dynamic networks}.
In this case at each step $t$, we evaluate the 
benefit of adding new links as  $$\beta(t) = B(G^{\tt A}_t) - \sum_{e \in E^{\tt A}} C_{G_t}(e).$$
If $\beta(t)>0$, then adding the edges proposed by {\tt Agrid} would have only benefits. 
%In the case where $G_t's$ are  totally unpredictable, the intervention and the evaluation of the benefits can be only local to each $G_t$. 
When the family $G_t$ is changing according to a specific edge rule, then more refined and global analysis can be 
done on the sequence $\{\beta(t)\}_{t \in T}$.

\smallskip
\noindent{\em Subnetworks}.
We consider  the case when a network is defined as a sub-network of an already given super-network. 
For example a local area subnetwork using the infrastructure of a wider area network.
More formally  a subnetwork  $G'=(V',E')$ is a subgraph of another network $G=(V,E)$ such that $V'\subseteq V$ and 
$E'\subseteq E$.  In such cases we might have efficient, not expensive and not requiring physical access ways of establishing new links 
in $G'$ among those nodes $u,v \in V'$ which are connected in the super-network $G$. When this situation occurs we can think to run {\tt Agrid}
restricting the choice of a new link for the node $u$, randomly among all the other nodes $v$ such that $(u,v) \in E$. In these cases
then the minimal degree in $G$, $\delta(G)$ is an upper bound on the number of maximal links we can add to nodes in $V'$. However in the cases of sub-networks, given that the new link is already present in the super network, we can assume that adding it to the subnetwork will not require any physical access.

\subsection{Applying embeddability results }
%Our results can find applications on real networks. 
Lemma \ref{lem:tranclo} and Corollary  \ref{cor:emb} can be used to understand an upper bound on the maximal identifiability to a given topology $H$ studying the maximal identifiability of  $k$-products of $H$  up to its transitive closure. This might suggest a way to design networks with an improved feature to identify failing nodes.  
Theorem \ref{thm:emdcr} can instead be used in an opposite scenario: if we have a network with strong topology 
restrictions (as the routing consistency) for instance to be renewed,  we might consider of modifying it in a new 
network where to embed the original one with the aim of increasing maximal identifiability. In general keeping the embedding (particularly if onto and 1-1) should not imply any change on the nodes' placements.

\section{Experimental Data from {\tt Agrid}}
\label{sec:perfo}
%On input $G$, {\tt Agrid} builds a new graph $G^{\tt A}$ and a monitor placement $\chi^{\tt A}$. 
%We then consider (and compare) the exact value of the maximal identifiability on the two 
%set of paths $\pp(G^{\tt A} | \chi^{\tt A})$ (under the CSP routing scheme) 
%and $\pp(G | \chi)$, where the number fo monitors is the same in $\chi$ and $chi^{\tt A}$ and both are chose according to heeristic MDMP.
We study examples of real internet networks whose topologies are publicly available on the data set of the Internet Topology Zoo (\cite{TZ}). 
%We consider six examples (Claranet, GridNet, GetNet, EuNetworks, Epoch, DataXchange).
For any topology $G$ on $N$ nodes and for the parameter $d$ that  we set as either $\log N$ or $\sqrt{\log N}$,
{\tt Agrid} generates  $G^{\tt A}$, the super graph of $G$ that simulates a $d$-hypergrid.
Chosen exactly $2d$ monitors in both $G$ and $G^{\tt A}$ according to heuristic MDMP gives us monitor placements 
 $\chi$ and $\chi^{\tt A}$. We proceed to compute  $\mu(G^{\tt A} | \chi^{\tt A})$ and $\mu(G | \chi)$.
To compute $\mu$ we generate all possible paths, after the monitors are placed. So the number of paths tends to highly grow with the number of nodes. That is the reason why our examples in the Topology Zoo are among those with less than 20 nodes. On networks with more than 25 nodes in the case $d=\log n$  the number of paths in $G^{\tt A}$ quickly reaches the 
number of $5\times 10^6$, making unfeasible our exhaustive search for $\mu$.
The next four subsections collect data according to the following criteria:

\smallskip

\noindent {\bf \em  Real Networks}. In Section \ref{realnet} we compute maximal identifiability, number of paths, minimal degree and number of edges for 
three networks for both the cases $d=\log N$ and  $d=\sqrt{\log N}$ as defined by {\tt Agrid}, so with monitors placed according to MDMP heuristic. %These are real data, we developed a  software to compute $\mu$.   

\smallskip

\noindent {\bf \em Random Graphs}. In Section \ref{randomnet} we run measurements  on random graphs on few nodes (5, 8 and 10). After generating the graphs $G$ and then computing $G^{\tt A}$ with {\tt Agrid}  for 50, 100 or 500 times, we count the fraction of cases where  $\mu$ in the case of $G^{\tt A}$ is increasing or remaining the same (it is never strictly less) and what is the maximal increment of $\mu$ reached in a pair  $(G,G^{\tt A})$. In  these measurements, monitors on $G$ and $G^{\tt A}$ are again computed according to MDMP heuristic. 

\smallskip

\noindent {\bf \em Truncated $\mu$}. In Section \ref{realnet} for each example of network $G$ we compare data only for one pair $(G,G^{\tt A})$. We would like to compare more pairs 
$(G,G^{\tt A})$ for several different random choices of edges forming $G^{\tt A}$ to be more precise about the growth of $\mu(G^{\tt A})$ with respect to $\mu(G)$. However since the running time and the memory are expensive  we cannot launch many runs. So we approximate 
maximal identifiability to a more efficiently computable measure. Loosely speaking we define an $\alpha$-approximation of $\mu(G)$ as that measure  such that  the two sets of paths $U$ and $W$ (compare with  Definition \ref{def:kid}) \emph{must be both} of size at most $\alpha$.  For example if two node sets $U$ of cardinality $1$ and $W$ of cardinality $\alpha+1$ have the same paths passing through, they are instead considered as separable by some paths. This allows us to restrict the search and hence to run our experiments for more pairs $(G,G^{\tt A})$. In  Section \ref{approx} we set $\alpha$ as the \emph{average} degree of the graphs and run 30 different tests for some of the six networks under consideration.
%The average degree is chosen for $\alpha$ in order to set it to a parameter depending on the network and greater than or equal the minimal degree.

\smallskip

\noindent {\bf \em Random monitors}. In the last section \ref{random} we explore differences between $\mu(G)$ and $\mu(G^{\tt A})$, where monitors are no longer placed according to MDMP heuristic. So we explore the question whether the simulated hypergrid $G^{\tt A}$, for the more significative case of $d=\log N$, has better maximal identifiability also in the case when monitors are not placed according to MDMP.  We explore the effect on $\mu(G)$ and $\mu(G^{\tt A})$ for random placement of monitors on both $G$ and $G^{\tt A}$.  

\subsubsection{Data on real networks}
\label{realnet}
In Tables \ref{Tab:Claranet}, \ref{Tab:enunet} and \ref{Tab:Xchange} we collect data to compare the maximal identifiability of $G$ and $G^{\tt A}$ in 3 of the six networks considered.
$d$ is the dimension of simulated hypergrid and hence the monitors on which we measure maximal identifiability are $2\times 3$, i.e. $|\mathfrak m|=|\mathfrak M|=3$.  In some examples (Table \ref{Tab:Xchange}), when the number of nodes 
is so small that $d\leq \delta(G)$ (so that $G^{\tt A}$ would not change with respect to $G$) we decide to add one dimension to $d$.
The examples  show an increment of the maximal identifiability more evident in the case when $d=\log N$. 
%For example in Table  our data must be  interpreted  as saying that after {\tt Agrid} is run over $G$, the resulting graph has
%improved maximal identifiability under the same routing scheme ($\CSP$ or $\CAP^-$). 
For example in Tables \ref{Tab:enunet} (or Table \ref{Tab:Claranet}) we observe that on a network of 17 edges, adding 6 monitors and 8 links, we pass from not having identifiability at all (using the same number of monitors) to be able to detect uniquely  in $G^{\tt A}$ any two node-failures. This is the consequence of having added in $G$, $8$ new links to raise the minimal degree to $3$. 
%\greenblue{Keeping a $\DLP$-strategy which would get all the nodes identifiable (see Appendix) 
%would require 34 monitors and 34 new links.}

\begin{figure*}[ht!]
  \centering
    \scalebox{0.9}{
  \begin{tabularx}{\linewidth}[t]{*{2}X}
    %\hline
    \begin{tabular}[c]{p{\linewidth}}
      \centering
      \begin{tabular}{c|c|c|c|c|}
		\cline{2-5}
 		& \multicolumn{2}{c|}{$d=\sqrt{\log|V|}$} & \multicolumn{2}{c|}{$d=\log|V|$} \\ \cline{2-5} 
 		& $G$ & $G^{\tt A}$ & $G$ & $G^{\tt A}$ \\ \hline
		\multicolumn{1}{|c|}{$\mu$} & \textbf{0} & \textbf{1} & \textbf{1} & \textbf{2} \\ \hline
		\multicolumn{1}{|c|}{$|\calP|$} &  18 & 247  & 39 & 16528 \\ \hline
		\multicolumn{1}{|c|}{$|E|$} &  17 &  22 &  17 & 29 \\ \hline
		\multicolumn{1}{|c|}{$\delta$} & 1  & 2 & 1 & 3 \\ \hline
 		& \multicolumn{2}{c|}{$d,|\mathfrak m|,|\mathfrak M|=2$} & \multicolumn{2}{c|}{$d,|\mathfrak m|,|\mathfrak M|=3$} \\ \cline{2-5} 
	\end{tabular}
    \end{tabular} 
    \quad \quad &
    \quad \quad 
    %\centering
    \begin{tabular}[c]{p{\linewidth}}
      	\centering
      	\begin{tabular}{c|c|c|c|c|}
\cline{2-5}
 & \multicolumn{2}{c|}{$d=\sqrt{\log|V|}$} & \multicolumn{2}{c|}{$d=\log|V|$} \\ \cline{2-5} 
 & $G$ & $G^{\tt A}$ & $G$ & $G^{\tt A}$ \\ \hline
\multicolumn{1}{|c|}{$\mu$} & \textbf{0} & \textbf{1} & \textbf{0} & \textbf{2} \\ \hline
\multicolumn{1}{|c|}{$|\calP|$} & 20 & 40 & 46 & 4917 \\ \hline
\multicolumn{1}{|c|}{$|E|$} & 16 & 17 & 16 & 25 \\ \hline
\multicolumn{1}{|c|}{$\delta$} & 1 & 2 & 1 & 3 \\ \hline
 & \multicolumn{2}{c|}{$d,|\mathfrak m|,|\mathfrak M|=2$} & \multicolumn{2}{c|}{$d,|\mathfrak m|,|\mathfrak M|=3$} \\ \cline{2-5} 
\end{tabular}    
    \end{tabular}     
     \tabularnewline
    \captionof{table}{\em Claranet, $|V|=15$.} 
    \label{Tab:Claranet}&
    \captionof{table}{\em EuNetworks, $|V|=14$.} 
    \label{Tab:enunet} 
    %\hline
\end{tabularx}}
\end{figure*}

%%%%
\begin{figure*}[ht!]
  \centering
    \scalebox{0.9}{
  \begin{tabularx}{\linewidth}[t]{*{1}X}

 \begin{tabular}[c]{p{\linewidth}}
      \centering
      \begin{tabular}{c|c|c|c|c|}
\cline{2-5}
 & \multicolumn{2}{c|}{$d=\sqrt{\log|V|}$} & \multicolumn{2}{c|}{$d = \log |V|$} \\ \cline{2-5} 
 & $G$ & $G^{\tt A}$ & $G$ & $G^{\tt A}$ \\ \hline
\multicolumn{1}{|c|}{$\mu$} & \textbf{1} & \textbf{1} & \textbf{1} & \textbf{2} \\ \hline
\multicolumn{1}{|c|}{$|\calP|$} & 64 & 108 & 129 & 291 \\ \hline
\multicolumn{1}{|c|}{$|E|$} & 11 & 12 & 11 & 13 \\ \hline
\multicolumn{1}{|c|}{$\delta$} & 1 & 2 & 1 & 3 \\ \hline
 & \multicolumn{2}{c|}{$d,|\mathfrak m|,|\mathfrak M|=2$} & \multicolumn{2}{c|}{$d,|\mathfrak m|,|\mathfrak M|=3$} \\ \cline{2-5} 
\end{tabular}    \end{tabular}
 \tabularnewline
        \captionof{table}{\em DataXchange, $|V|=6$.}   
    \label{Tab:Xchange} 
    \end{tabularx}}
\end{figure*}

%%%

\subsubsection{Random graphs}
\label{randomnet}

The Tables \ref{tab:ransqrt} and \ref{tab:ranlog} contain data about maximal identifiability tests on pair $(G,G^{\tt A})$ where $G$ is a random  graph drawn according to Erd\"os-R\`enyi distribution and 
$G^{\tt A}$  is given by the algorithm {\tt Agrid} on input $G$. Monitors are placed according to MDMP heuristic.
On the {\em rows} of the Tables there are the number of generated graphs, while on the {\em columns}, for each of the three cases of 5, 8 and 10 nodes, we separate  cases where $\mu(G^{\tt A})>\mu(G)$, from where $\mu(G^{\tt A})=\mu(G)$. In the square bracket there is the info of the maximal value of $\mu(G^{\tt A})-\mu(G)$
obtained in the tested pairs $(G,G^{\tt A})$.  

Given the small number of nodes, in the case $d=\sqrt{\log N}$ (see Table \ref{tab:ransqrt}) differences between $\mu(G)$ and $\mu(G^{\tt A})$ are more appreciable for smaller values of $n$. This might be due to the fact that the monitors are only $2$  and if they are in different connected components there are no paths between them. Over such few nodes it is much likely that the 2 monitors lie in the same connected component. Instead the situation is completely different when we use $d=\log N$ and the improvement is clearly evident both in the percentage of strictly positive cases that in the maximal increment reached. 
\begin{center}
\begin{figure*}[h!]
\centering
      \scalebox{0.9}{
  \begin{tabularx}{\linewidth}[t]{*{2}X}
    \centering
    \begin{tabular}[c]{ p{\linewidth}}
      \centering
      \begin{tabular}{c|cc|cc|cc}
 & \multicolumn{2}{c|}{5} & \multicolumn{2}{c|}{8} & \multicolumn{2}{c|}{10} \\ \cline{2-7} 
 & \multicolumn{1}{c|}{\textgreater{}} & = & \multicolumn{1}{c|}{\textgreater{}} & = & \multicolumn{1}{c|}{\textgreater{}} & \multicolumn{1}{c|}{=} \\ \hline
50 & {[}2{]}16\% & 84\% & {[}1{]}2\% & 98\% & {[}1{]}2\% & \multicolumn{1}{c|}{98\%} \\ \hline
100 & {[}2{]}16\% & 84\% & {[}1{]}6\% & 94\% & {[}1{]}1\% & \multicolumn{1}{c|}{99\%} \\ \hline
500 & {[}2{]}20\% & 80\% & {[}1{]}4\% & 96\% &  &  \\ \cline{1-5}
\end{tabular}
    \end{tabular}
     \tabularnewline
    \captionof{table}{\em Case $d=\sqrt{\log n}$.} 
    \label{tab:ransqrt}
     \tabularnewline
     %\hline
\end{tabularx}}
\end{figure*}
\end{center}

\begin{figure*}[h!]
\centering
      \scalebox{0.9}{
  \begin{tabularx}{\linewidth}[t]{*{2}X}
    %\hline
    \begin{tabular}[l]{p{\linewidth}}
      \centering
      \begin{tabular}{c|cc|cc|cc}
 & \multicolumn{2}{c|}{5} & \multicolumn{2}{c|}{8} & \multicolumn{2}{c|}{10} \\ \cline{2-7} 
 & \multicolumn{1}{c|}{\textgreater{}} & = & \multicolumn{1}{c|}{\textgreater{}} & = & \multicolumn{1}{c|}{\textgreater{}} & \multicolumn{1}{c|}{=} \\ \hline
50 & {[}2{]}8\% & 92\% & {[}2{]}40\% & 60\% & {[}1{]}16\% & \multicolumn{1}{c|}{84\%} \\ \hline
100 & {[}2{]}18\% & 82\% & {[}2{]}39\% & 61\% & {[}2{]}18\% & \multicolumn{1}{c|}{82\%} \\ \hline
500 & {[}2{]}14\% & 86\% & {[}2{]}34\% & 66\% &  &  \\ \cline{1-5}
\end{tabular}
    \end{tabular} 
    \tabularnewline
    \captionof{table}{\em Case $d=\log n$.} 
    \label{tab:ranlog}
     \tabularnewline
     %\hline
\end{tabularx}}
\end{figure*}

\subsubsection{Truncated maximal identifiability}
\label{approx}
By Definition \ref{def:kid} we know that $\mu(G) \leq k-1$ if there exist two sets $U$ and $W$ with \emph{at least one of them} of size at most $k$  such that $\calP(U)=\calP(W)$. Here we truncate $\mu$ to the measure $\mu_\alpha$ defined in such a way that
$\mu_\alpha(G) \leq \alpha-1$ if there exist two sets $U$ and $W$ \emph{both} of size at most $\alpha$  such that $\calP(U)=\calP(W)$. Searching for sets $U$ and $W$ with $\calP(U)=\calP(W)$ in the case of $\mu_\alpha$ is hence computationally easier than for the case of $\mu$.
We then decide to compare $\mu_\alpha(G)$ with $\mu_\alpha(G^{\tt A})$ fixing $G$ as one of the networks considered  and
generating 30 different $G^{\tt A}$ and eventually test $\mu_\alpha$ on each of these pairs. We fix $\alpha$ to be the \emph{average degree} $\lambda(G)=\lambda$ of  the graph $G$ we compute $\mu_\alpha$ for.

We explain more precisely (see Figure \ref{fig:Matrix}) in what error might produce $\mu_\lambda$ with respect to the real $\mu$ for a graph $G$ over $n$ nodes and why the average  degree is a good choice for $\alpha$.

\begin{figure}[ht!]
\begin{center}
\begin{scriptsize}
\tikz[scale=1]{
\node (a) at (0,0) {$[1,1]$};
\node(A) at (0.7,-0.4){$A$};
\node(B) at (1.5,-0.4){$B$};
\node(C) at (2.5,-0.4){$C$};
\node (f) at (3,0) {$[1,n]$};
\node (b) at (3,-3) {$[n,n]$};
\node (de) at (1,-1) {$[\delta, \delta]$};
\node (l) at (2,-2) {$[\lambda, \lambda]$};
\node (c) at (1,0) {$[1,\delta]$};
\node (d) at (2,0) {$[1,\lambda]$};
\node (g)  at (3,-1) {};
\node (e) at (2,-1) {}; 

\draw[fill=blue] (a) -- (c) -- (de) -- (a) --  cycle;

\draw[draw=black, -,dotted] (de) -- (l);
\draw[draw=black, -,dotted] (g) -- (b);
\draw[draw=black, -,dotted] (l) -- (b);

%\draw[draw=black, -] (a) -- (b);
\draw[draw=black, -] (a) -- (c);
\draw[draw=black, -] (c) -- (d);
\draw[draw=black, -] (d) -- (f);
\draw[draw=black, -] (d) -- (e);
\draw[draw=black, -,dotted] (e) -- (l);
\draw[draw=black, -] (de) -- (g);
\draw[draw=black, -] (c) -- (de);
\draw[draw=black, -] (f) -- (g);
}
\end{scriptsize}
\end{center}
\caption{\em The matrix $M$. %where $(i,j)$ is the set of all possible pairs $(U,W)$ with 
%$U\not =W \subseteq V$ such that  $|U|=i$ and $|W|=j$. 
}
\label{fig:Matrix}
\end{figure}

Consider an $n\times n$ matrix $M$  containing the following data: the entry $(i,j)$ is  the set of all possible pairs $(U,W)$ with 
$U,W \subseteq V$, $U \not =W$ such that  $|U|=i$ and $|W|=j$. $M$ is symmetric and we are interested only in one of its half, say the upper part that we call $M$ (since the whole matrix will be never in use). For two integers $\delta$ and $\lambda$, $1\leq \delta \leq \lambda \leq n$ let: $M[\delta]$ be the sub-matrix of $M$ between rows $1$ and $\delta$ ( Zones $A$,$B$ and $C$  in Figure \ref{fig:Matrix}); $M[\delta,\lambda]$ be the sub-matrix of $M$ between row $\delta$ and column $\lambda$ (Zones $A$ and $B$  in Figure \ref{fig:Matrix});  
$M[\delta,\lambda^+]$  be the sub-matrix of $M$ where  rows are up to  $\delta$ and columns greater than $\lambda$ (Zone $C$ in Figure \ref{fig:Matrix}).
Since $\mu(G) \leq \delta(G)=\delta +1$, then there exists a pair $(i,j)$ in $M[\delta]$ (i.e. in Zones $A$,$B$ and $C$ of $M$) such that $\pp(U) = \pp(W)$. 
Searching only for $\mu_\lambda$ corresponds to the restricted search over only Zones $A$ and $B$ of $M$, i.e. $M[\delta,\lambda]$. 
Hence the potential error of this search is only in the pairs in $M[\delta,\lambda^+]$ (i.e.  Zone $C$). 
This quantity decreases while $\lambda-\delta$ is growing. More precisely, in each entry $M[i,j]$ of $M$ we are storing $\zeta(i,j)={n \choose i}\left[{n \choose j}-1\right]$ pairs. Hence in $M[\delta,\lambda^+]$ we have $\sum_{i=1}^{\delta}\sum_{j=\lambda+1}^{n} \zeta(i,j)$ pairs, while the search space for the real $\mu$ (Zones $A$,$B$ and $C$) is of cardinality 
$\sum_{i=1}^{\delta}\sum_{j=i}^{\delta} \zeta(i,j) + \sum_{i=1}^{\delta}\sum_{j=\delta}^{n} \zeta(i,j)$. Hence the maximal fraction of pairs on which $\mu_\lambda$ can fail with respect to $\mu$
is:
$$
\frac{\sum_{i=1}^{\delta}\sum_{j=\lambda+1}^{n} \zeta(i,j)}{\sum_{i=1}^{\delta}\sum_{j=i}^{\delta} \zeta(i,j) + \sum_{i=1}^{\delta}\sum_{j=\delta}^{n} \zeta(i,j)}. 
%= \frac{1}{\sum_{i=1}^{\delta}\sum_{j=1}^{\lambda} \zeta(i,j).}
$$
%Notice that any other metric  $\kappa$ on $G$ such that $\kappa \geq \delta$ (instead of the average degree) would be interesting to apply.
In the Tables \ref{tab:claranet-501}, \ref{tab:grid-501} and \ref{tab:eunet-501} we collect data on 30 tests showing the percentage of cases of where $\mu_{\lambda(G)}$ is equal to the values indicated in the columns.   For these Tables we consider only the case where $d=\log N$. In the square bracket (e.g. $[3 ]G$) we indicate the average degree 
of the graph. 

%\red{We have to compute the error in each example}

\begin{figure*}[ht!]
\centering
     \scalebox{0.9}{
  \begin{tabularx}{\linewidth}[t]{*{3}X}
    %\hline
    \begin{tabular}[c]{p{\linewidth}}
      \centering
     \begin{tabular}{c|ccc|}
\cline{2-4}
$G$ \textbackslash$\mu_\lambda$ & \multicolumn{1}{c|}{\textbf{0}} & \multicolumn{1}{c|}{\textbf{1}} & \textbf{2} \\ \hline
\multicolumn{1}{|c|}{[2]$G$} & 100\% & 0\% & 0\% \\ \hline
\multicolumn{1}{|c|}{[4]$G^{\tt A}$} & 0\% & 65\% & 55\% \\ \hline
\end{tabular}
    \end{tabular} 
    &  
    \centering
    \begin{tabular}[c]{p{\linewidth}}
      \centering
     \begin{tabular}{c|ccc|}
\cline{2-4}
$G$ \textbackslash$\mu_\lambda$ & \multicolumn{1}{c|}{\textbf{0}} & \multicolumn{1}{c|}{\textbf{1}} & \textbf{2} \\ \hline
\multicolumn{1}{|c|}{[4]$G$} & 0\% & 13\% & 86\% \\ \hline
\multicolumn{1}{|c|}{[4]$G^{\tt A}$} & 0\% & 6\% & 93\% \\ \hline
\end{tabular}   
 \end{tabular} &
     \centering
    \begin{tabular}[c]{p{\linewidth}}
      \centering
      \begin{tabular}{c|ccc|}
\cline{2-4}
$G$ \textbackslash$\mu_\lambda$ & \multicolumn{1}{c|}{\textbf{0}} & \multicolumn{1}{c|}{\textbf{1}} & \textbf{2} \\ \hline
\multicolumn{1}{|c|}{{[}2{]}$G$} & 100\% & 0\% & 0\% \\ \hline
\multicolumn{1}{|c|}{[3]$G^{\tt A}$} & 0\% & 100\% & 0\% \\ \hline
\end{tabular}
    \end{tabular} 
    \tabularnewline
    \captionof{table}{\em Claranet, $|V| =15$.}     
    \label{tab:claranet-501} &
    \captionof{table}{\em GridNetwork, $|V| = 7$.} 
    \label{tab:grid-501}&
    \captionof{table}{\em EuNetwork, $|V| = 7$.} 
    \label{tab:eunet-501}
     \tabularnewline
%\hline
\end{tabularx}}
\end{figure*}

\subsubsection{Random monitors}
\label{random}
MDMP is a simple heuristic for monitor placement. However  the lower bound of Theorem \ref{thm:undgrid2} holds for any monitor placement.
%namely the minimal degree of the nodes where we place monitors  to prove the lower bound. Minimal degree is certainly important for maximal identifiability since it gives an upper bound (see Lemma \ref{lem:ubd}), but one might wonder whether other aspects of the lower bound for the hypergrid matter as well. For instance one might think that one important feature of the monitors is to be placed on nodes at distance  $O(N^{1/d})$ in the graph. Or may be others heuristics.
We try to give some evidence that {\tt Agrid} is a good strategy for boosting maximal identifiability independently of where monitors are 
placed. In the following Tables we collect results for percentage of values of $\mu(G)$ on 20 random placements of monitors both in $G$ and $G^{\tt A}$.  Tables \ref{tab:claranet-5}, \ref{tab:epoch-5-3} and  \ref{tab:getnet-5}  show that moving to $G^{\tt A}$ gives an improvement in the maximal identifiability, independently of the monitor 
placement. Also in this case the data  are computed only for the most significative case of $\log N$.
 \begin{figure*}[ht!]
 \centering
 \scalebox{0.9}{
  \begin{tabularx}{\linewidth}[t]{*{3}X}
     %\hline
    \begin{tabular}[c]{p{\linewidth}}
      	\centering
     	\begin{tabular}{c|ccc|}
   		\cline{2-4}
    		$G$ \textbackslash$\mu$ & \multicolumn{1}{c|}{\textbf{0}} & \multicolumn{1}{c|}{\textbf{1}} & \textbf{2} \\ \hline
    		\multicolumn{1}{|c|}{$G$} & 20\% & 80\% & 0\% \\ \hline
    		\multicolumn{1}{|c|}{$G^{\tt A}$} & 0\% & 0\% & 100\% \\ \hline
    	\end{tabular}
      \end{tabular} &    
     \centering
    \begin{tabular}[c]{p{\linewidth}}
      	\centering
      	\begin{tabular}{c|ccc|}
		\cline{2-4}
		$G$ \textbackslash$\mu$ & \multicolumn{1}{c|}{\textbf{0}} & \multicolumn{1}{c|}{\textbf{1}} & \textbf{2} \\ \hline
		\multicolumn{1}{|c|}{$G$} & 100\% & 0\% & 0\% \\ \hline
		\multicolumn{1}{|c|}{$G^{\tt A}$} & 0\% & 80\% & 20\% \\ \hline
		\end{tabular}    
	\end{tabular} &
	\centering	
\begin{tabular}[c]{p{\linewidth}}
      	\centering
      	\begin{tabular}{c|ccc|}
		\cline{2-4}
		$G$ \textbackslash$\mu$ & \multicolumn{1}{c|}{\textbf{0}} & \multicolumn{1}{c|}{\textbf{1}} & \textbf{2} \\ \hline
		\multicolumn{1}{|c|}{$G$} & 0\% & 100\% & 0\% \\ \hline
		\multicolumn{1}{|c|}{$G^{\tt A}$} & 0\% & 10\% & 90\% \\ \hline
	\end{tabular}
    \end{tabular}	
    \tabularnewline
    \captionof{table}{\small \em Claranet, $|V| = 15$, $\mathfrak m,\mathfrak M,d=3$.} 
    \label{tab:claranet-5} &
    \captionof{table}{\small \em EuNetworks, $|V| = 14$, $\mathfrak m,\mathfrak M,d=3$.} 
    \label{tab:epoch-5-3} &
     \captionof{table}{\small \em GetNet, $|V| = 9$, $\mathfrak m,\mathfrak M,d=3$.} 
    \label{tab:getnet-5}
  \end{tabularx}}
\end{figure*}

\section{Discussion}
 \label{sec:final}
\noindent{\bf \em  DLP paths}.
First notice that if we need to consider the single node $v$ as a potential path, then this should be viewed as
a \emph{degenerate loop-path} made by the loop edge $(vv)$. This is because an end-to-end measurement path starting and ending in the same monitor defines a loop.    Inclusion of $\DLP$s among the allowed paths may have important effects on the maximal identifiability:  If $v$ is a $\DLP$ node, then the set  $\{v\}$ would have a maximal \emph{local} identifiability, as high as the total number of nodes in the set of paths. 
This is  because the "path" made by $v$ alone distinguishes the set $\{v\}$ from any other set of nodes different from $\{v\}$.  
Hence $\DLP$s allow to trivially reach high local-identifiability on nodes which are $\DLP$s. $\DLP$ were implicitly used in several previous works 
to raise the maximal identifiability by placing monitors appropriately.  However we think that from the point of view of capturing the combinatoric of maximal identifiability of graph topologies $\DLP$ are essentially not influent. We list some reasons that justify our decision of not considering them in our analysis:  (1) if all nodes in $G$ were $\DLP$, we would trivially  solve the problem of identifying node failures. We call this a {\em $\DLP$-strategy} for monitor placement ; (2) Notice that: (1)$\DLP$ nodes can be immediately verified for a failure (and corresponding equations immediately solved) independently of the rest of the topology. %The outcome then can be immediately used to simplify the remaining paths and equations; 
(3) $\DLP$ nodes allow to distinguish sets of paths $\pp(U)$ and $\pp(W) $ for $U$ and $W$ \emph{differing on $v$}. 
But to infer node failure {\em globally} in a topology, $\DLP$ nodes are as relevant as any other node; 
(3) To use a $\DLP$-strategy, any node of $G$ must be linked to two monitors. This is highly not efficient and it makes the identifiability problem trivial and unrelated from the set of paths; %since  any node would be immediately verifiable;
 (4) The asymmetry to \emph{force} a single node as a path creates side-effects in studying identifiability with respect to a  set of measurement paths corresponding to \emph{real paths} in a graph $G$. 

\smallskip

\noindent {\bf \em Routing mechanisms}.
As discussed in \cite{DBLP:conf/infocom/RenD16} controllable routing is a main issue in end-to-end path 
measurement. In \cite{DBLP:journals/ton/Hu0W0L0ZG16} it was recently introduced XPath a practical way to implement explicit path control. Details can be found on both \cite{DBLP:conf/infocom/RenD16} and \cite{DBLP:journals/ton/Hu0W0L0ZG16}.  In our case we briefly mention that  XPath  can easily  implement the routing 
$\CAP^-$ (as well as $\CSP$). Since it explicitly identifies an end-to-end path with a path ID and preinstalls all the desired path IDs between any source-destination pairs in order to recognize if the signal is received through a valid path, hence it is sufficient to disallow DLP paths in the ID table stored by XPath in receiving nodes to implement $CAP^-$.  
Of course XPath is designed specifically to handle efficiently small sets of paths in network with a huge number of 
paths. Nevertheless it can be perfectly applied to our case; for example guaranteeing that the paths $\pp(G^{\tt A}| \chi^{\tt A})$ in the network output of  {\tt Agrid} are under the specified routing mechanism.  

\smallskip

\noindent{\bf \em Further research}. We shortly address some directions related to our approach which might be further explored in the analysis of identifiability of failure nodes. 
% \subsubsection{Dimension algorithms for bounding $\mu$ on directed networks} 
 In 1982  \cite{DBLP:journals/jal/YanezM99} showed that for $k\geq3$ to test if a partial order has dimension $\leq k$ is $NP$-complete. 
 Nevertheless there are some algorithms  to compute the dimension of poset \cite{doi:10.1137/0603036,DBLP:journals/jct/TrotterM77} which are practically used. It would be interesting to further explore  connections between boolean network tomography  and poset  dimension theory to get better estimates on the maximal identifiability for DAG network topologies.  
% \subsubsection{Planar graphs and bounded tree-width}
 It is a well-known result \cite{Schnyder:1990:EPG:320176.320191} that planar graphs over $n$ nodes  can be embedded  through a straight line embedding into a $(n-2) \times (n-2)$ 2-dimensional grid. It seems not difficult to see that our results on embeddability  can be generalized to obtain a lower bound of 2 for the maximal identifiability when a network is a planar 
 graph. Connections with dimension might also be explored in the case of planar networks \cite{DBLP:journals/jgt/FelsnerT05}.
 %\subsubsection{$k$-TC-Spanners}
A $k$-Transitive-Closure-Spanner of a graph $G$ is a graph $H$ with a small diameter - $k$-  that preserves the connectivity of the original graph. 
The edges of the transitive closure of $G$, added to $G$ to obtain a TC-spanner, are called shortcuts and the parameter $k$ is called the {\em stretch}.
These graphs and their relations with dimension of poset were recently studied in \cite{DBLP:journals/combinatorica/BermanBGRWY14}. From our results it is clear that adding edges to a graph $G$ can strength the potential of failure identifiability. Are $k$-TC-Spanners and in particular Steiner-$k$-TC-Spanners (see \cite{DBLP:journals/combinatorica/BermanBGRWY14}) useful to maximize failure identifiability of a network?

 {\tt Agrid} might  be explored using different heuristics: for example (1) adding  edges to a node $v$ only with nodes of degree at most $d-1$; (2)  adding edges between nodes $u$ and $v$ only if their shortest distance is greater than a given value; (3) adding edges only if a  planarity condition is respected. An interesting question, relevant to apply XPath with {\tt Agrid},  is how to efficiently determine the minimum number of measurement paths sufficient to identify all the failures after {\tt Agrid} is applied. Finally, new connections between maximal node identifiability and vertex connectivity were recently found in \cite{GRZ19} which can be further explored in connection with embeddability.

\section*{Acknowledgments}
%\else
  % regular IEEE prefers the singular form
%\section*{Acknowledgment}
%\fi
We  thank Liang Ma and Michele Zito  
for discussions about this work and Viviana Arrigoni, Annalisa Massini and Michele Zito for reading different versions of this work and
sending us their comments which contribute to improve the paper.

%\begin{small}
\bibliographystyle{abbrv}
\bibliography{biblioBNT}
%\end{small}
%\vfill
%\vfill

\end{document}